\newtheorem{lemma}{Lemma}[section]
\newtheorem{theorem}[lemma]{Theorem}
\newtheorem{proposition}[lemma]{Proposition}
\newtheorem{corollary}[lemma]{Corollary}
\newtheorem{observation}[lemma]{Observation}
\theoremstyle{definition}\newtheorem{definition}[lemma]{Definition}
\newcommand{\reject}[0]{\textsc{reject}\xspace}
\newcommand{\abs}[1]{\left|{#1}\right|}
\newcommand{\cond}{\middle |}
\newcommand{\floor}[1]{{\left\lfloor{#1}\right\rfloor}}
\newcommand{\ceil}[1]{{\left\lceil{#1}\right\rceil}}
\newcommand{\E}{\mathop{{\rm E}\/}}
\newcommand{\Var}{\mathop{{\rm Var}\/}}
\newcommand{\eps}{\varepsilon}
\newcommand{\est}{\mathrm{est}}
\newcommand{\poly}{\mathrm{poly}}
\newcommand{\Ber}{\mathrm{Ber}}
\newcommand{\Bin}{\mathrm{Bin}}
\newcommand{\kl}{\ensuremath{\mathrm{KL}}}
\newcommand{\dkl}{\ensuremath{D_\kl}}
\DeclarePairedDelimiterX{\infdivx}[2]{(}{)}{%
  #1\;\delimsize\|\;#2%
}
\newcommand{\DKL}{\dkl\infdivx*}
\newcommand{\dtv}{\ensuremath{d_\mathrm{TV}}}
\title{Tight simulation of a distribution using conditional samples}
\author{Tomer Adar\thanks{Technion - Israel Institute of Technology, Israel. Email: \href{mailto:tomer-adar@campus.technion.ac.il}{tomer-adar@campus.technion.ac.il}.}}
\newcommand\procnameZpreprocessZnewZsimulation{\textsf{Learn-distribution}\xspace}
\newcommand\procnameZpreprocessZnewZsimulationHREF{\hyperref[fig:alg:preprocess-simulation-init]{\procnameZpreprocessZnewZsimulation}\xspace}
\newcommand\procnameZqueryZpreprocessedZsimulation{\textsf{Query-learned-distribution}\xspace}
\newcommand\procnameZqueryZpreprocessedZsimulationHREF{\hyperref[fig:alg:preprocessed-simulation-query]{\procnameZqueryZpreprocessedZsimulation}\xspace}
\newcommand\procnameZsampleZpreprocessedZsimulation{\textsf{Sample-learned-distribution}\xspace}
\newcommand\procnameZsampleZpreprocessedZsimulationHREF{\hyperref[fig:alg:preprocessed-simulation-sample]{\procnameZsampleZpreprocessedZsimulation}\xspace}
\newcommand\procnameZinitializeZnewZsimulation{\textsf{Initialize-simulation}\xspace}
\newcommand\procnameZinitializeZnewZsimulationHREF{\hyperref[fig:alg:simulation-init]{\procnameZinitializeZnewZsimulation}\xspace}
\newcommand\procnameZaccessZsimulationZedge{\textsf{Access-simulation-edge}\xspace}
\newcommand\procnameZaccessZsimulationZedgeHREF{\hyperref[fig:alg:simulation-access-edge]{\procnameZaccessZsimulationZedge}\xspace}
\newcommand\procnameZestZsimulationZedge{\textsf{Estimate-edge}\xspace}
\newcommand\procnameZestZsimulationZedgeHREF{\hyperref[fig:alg:simulation-estimate-edge]{\procnameZestZsimulationZedge}\xspace}
\newcommand\procnameZsimulationZquery{\textsf{Query-simulation}\xspace}
\newcommand\procnameZsimulationZqueryHREF{\hyperref[fig:alg:simulation-query]{\procnameZsimulationZquery}\xspace}
\newcommand\procnameZsimulationZsample{\textsf{Sample-simulation}\xspace}
\newcommand\procnameZsimulationZsampleHREF{\hyperref[fig:alg:simulation-sample]{\procnameZsimulationZsample}\xspace}
\newcommand\paperZcoefZr{{8}}
\newcommand\paperZcoefZrZsquared{{64}}
\newcommand\paperZanticoefZqZbyZmn{{\ensuremath{1000}}}
\newcommand\paperZanticoefZmZbyZrecpZdeltasqr{{\ensuremath{400}}}
\newcommand\paperZanticoefZqZbyZnZoverZdeltasqr{{\ensuremath{4 \cdot 10^5}}}
\newcommand\paperZanticoefZfinalqZbyZnsqrZoverZepssqr{{\ensuremath{2.8 \cdot 10^{25}}}}
\newcommand\paperZGmuZerr{{\ensuremath{e^{-20}}}}
\newcommand\paperZGmuZprZestZerr{{\ensuremath{e^{-20}}}}
\begin{document}

\begin{titlepage}
    \maketitle
    \thispagestyle{empty}
    \pagestyle{empty}
    
    \begin{abstract}
        We present an algorithm for simulating a distribution using prefix conditional samples (Adar, Fischer and Levi, 2024), as well as ``prefix-compatible'' conditional models such as the interval model (Cannone, Ron and Servedio, 2015) and the subcube model (CRS15, Bhattacharyya and Chakraborty, 2018). The sample complexity is $O(\log^2 N / \eps^2)$ prefix conditional samples per query, which improves on the previously known $\tilde{O}(\log^3 N / \eps^2)$ (Kumar, Meel and Pote, 2025). Moreover, our simulating distribution is $O(\eps^2)$-close to the input distribution with respect to the Kullback-Leibler divergence, which is stricter than the usual guarantee of being $O(\eps)$-close with respect to the total-variation distance.
        
        We show that our algorithm is tight with respect to the highly-related task of estimation: every algorithm that is able to estimate the mass of individual elements within $(1 \pm \eps)$-multiplicative error must make $\Omega(\log^2 N / \eps^2)$ prefix conditional samples per element.
    \end{abstract}
    \newpage
    \tableofcontents
\end{titlepage}

\section{Introduction}
\label{sec:intro}

Distribution testing, 
formally initiated by \cite{bfrsw2000,bffkrw2001}, is a branch of property testing where the goal is distinguishing with high probability between distributions that have some property (for example, uniformity) and distributions that are far from any distribution with this property (for example, extremely biased distributions). ``Farness'' (or closeness) of distributions is usually determined by comparing the total-variation distance to a threshold parameter, usually denoted by $\eps$.

The sampling model of distribution testing allows the algorithm to obtain a sequence of independent samples from its input distribution. One core result of distribution testing in the sampling model is uniformity testing over a domain of size $N$, at the cost of $\Theta(\sqrt{N}/\eps^2)$ \cite{pan08} samples. Other elementary properties, such as identity to an explicit distribution \cite{bffkrw2001} and equivalence to an unknown distribution \cite{cdvv14}, require polynomial number of samples as well (but still sublinear).

The sampling model of distribution testing is ineffective for high-dimensional domains, such as the set of binary strings of length $n$, since it requires a polynomial number of samples in the domain size even for simple core properties, which is exponential in the dimension $n$. For obtaining sub-polynomial testing algorithms, we strengthen the model.

The conditional sampling model, independently introduced by \cite{crs15} and \cite{cfgm16}, allows the algorithm to choose a set $A$ and draw a sample from the input distribution when conditioned on belonging to $A$. Various works focus on restricted forms of the conditional sampling model, where only sets of a restricted form are eligible for conditioning. We recall a few of them.


\paragraph{Subcube conditions} The subcube conditional model defined by \cite{bc18} (see further works: \cite{ccklw21, chen2024uniformity, afl24subcube, kmp25}) is defined for distributions over product sets: the domain is $\Sigma^n$ for some alphabet $\Sigma$, and the sets eligible for condition are the cartesian product of their projections on individual characters. More formally, a subcube condition is a set of the form $A = \prod_{i=1}^n A_i$ for $A_1,\ldots,A_n \subseteq \Sigma$. Conceptually, the subcube conditional model resembles the ``\textsf{SELECT} \ldots \textsf{WHERE}''-form of SQL queries.

\paragraph{Prefix conditions} In this paper, we focus on prefix conditions \cite{afl24subcube} (implicitly in \cite{kmp25, bc18, cfgm16}). Prefix conditions are a restricted form of subcube conditions in which there exists some $k$ for which $A_1,\ldots,A_{k-1}$ are all singletons and $A_k,\ldots,A_n = \Sigma$. Note that we can identify every condition set $\{a_1\} \times \cdots \times \{a_{i-1}\} \times \Omega^{n-\abs{w}}$ with the prefix $w = a_1\ldots a_{i-1}$ (of length $i-1$) itself. 

Additional works define various condition models such as \emph{pair conditions} \cite{crs15} (the condition set must have two elements), \emph{interval conditions} \cite{crs15} ($A = \{a,\ldots,b\} \subseteq \{1,\ldots,N\}$) and \emph{coordinate conditions} \cite{bcsv24} (we fix all symbols of a string but one). For binary string domains ($\Omega=\{0,1\}^n$), prefix conditions can be seen as prefix-interval condition sets (through the encoding $x \in \{0,1\}^n \to 1 + \sum_{i=1}^n 2^{n-i} x_i$ \cite{afl24subcube}) and coordinate conditions can be seen as subcube-pair condition sets.

\subsection{Estimation and simulation of a distribution}
In the string model of property testing (for example 
\cite{rs92, rs96}), the algorithm can choose an index to query and reveal the bit in that index. In other words, it obtains certain (undoubted) information about its input. In contrast, in all sampling models, the access to the input is indirect: the algorithm only obtains ``behavioral'' information about its input.

There are two concepts of ``bridging'' between sampling and querying. The first is \emph{estimation}: given an element $x \in \Omega$, the algorithm estimates $\mu(x)$ (with high probability) within a reasonable error range using samples. The second is \emph{simulation}: the algorithm constructs an explicit distribution $\tilde{\mu}$, which is hopefully close to the input distribution $\mu$, and then queries it instead of the input distribution. Since approximating an entire distribution is too expensive, we allow a lazy construction of $\tilde{\mu}$, starting with ``$\tilde{\mu}$ can be any distribution over $\Omega$'' and then gradually adding details required to answer queries.

Estimation and simulation are highly related tasks
. While estimation demands ``bounded worst-case error (for most input elements)'', simulation of a close distribution can be seen as a ``bounded expected estimation error (over the choice of the input element)''. However, they are not fully compatible with each other:
\begin{itemize}
    \item Assume that we can estimate all probabilities with a small error. Estimations can be biased, and therefore, the estimated probability masses do not necessarily sum to $1$. Moreover, even if they do, the specific implementation of the estimator might be unable to \emph{sample} the distribution implied by the individual estimations.
    \item Assume that we can access a distribution $\tilde{\mu}$ which is close to $\mu$ with respect to some divergence measure. For common divergences such as the KL-divergence, it might be the case that most elements are poorly estimated, but the estimation errors cancel each other, which results in a relatively low divergence.
\end{itemize}

One algorithmic paradigm which is compatible with both estimation and simulation is the distribution tree\footnote{Note that a specific \emph{implementation} inspired by this paradigm is not necessarily compatible with both.}, or ``slice and dice''. First, we define a tree of sets eligible for conditioning by the model, where the root holds the whole domain $\Omega$ and every non-leaf node has two children whose disjoint union is the set it holds. Whenever an estimation $\mu(x)$ is required, we estimate the conditional mass $\mu(\textsf{CHILD}|\textsf{PARENT})$ in all nodes in the path from the root to the leaf containing the singleton $\{x\}$. To obtain the estimation, we use the chain rule: if $\{x\} = A_k \subseteq A_{k-1} \subseteq \cdots A_1 \subseteq A_0 = \Omega$, then $\Pr_\mu[x] = \prod_{i=1}^k \Pr_\mu[A_i | A_{i-1}]$. The estimation algorithm uses $\tilde{\mu}(x) = \prod_{i=1}^k \mathrm{est}(\Pr_\mu[A_i | A_{i-1}])$.

In \cite{kmp25} there is an explicit ``slice-and-dice'' algorithm for the subcube conditional sampling model. The algorithm can estimate $\mu(x)$ within $(1 \pm \eps)$-factor for every $x$, unless one or more of its marginals is too close to $0$ or to $1$. The overall mass of these $x$s is bounded by $O(\eps)$, which implies that the simulating distribution is $O(\eps)$-close to $\mu$ with respect to the total-variation distance. The cost for every query is $\tilde{O}(n^3 / \eps^2)$ subcube conditional samples.

In this work we reduce the cost of the ``slice and dice'' method to $O(n^2 / \eps^2)$ conditional samples per query and improve the accuracy of the simulating distribution to be $O(\eps^2)$-close with respect to the Kullback-Leibler divergence. Note that by Pinsker's inequality (see Lemma \ref{lemma:cheatsheet:pinsker}) it is never worse than $O(\eps)$-total-variation-close.

The KL-divergence has two advantages over the total-variation distance for intermediate analysis: its propagation over multiple dimensions is tight (compare the equality in Lemma \ref{lemma:cheatsheet:chain-rule-dkl} to the corresponding upper bound for total-variation distance, for example in \cite{bc18}), and it provides a refined divergence bound of indicator estimations (see Lemma \ref{lemma:good-expected-dkl-bernoulli} vs.\ the well-known $O(1/\sqrt{m})$-bound for total-variation distance), allowing for propagating over a higher dimension subject to the constraint $\kl = O(\poly(\eps))$. We provide our algorithm in the prefix conditional model, but it can be implemented in every conditional model that resembles the ``slice and dice'' behavior (see Section \ref{sec:prefix-reduction} for more details).

Since explicitly learning a whole distribution is inevitably expensive (as we would have to store information about every element in the domain), we prefer to use local learning. Instead of storing a whole distribution, we provide an oracle access to the learned object, and store only the information needed to describe the queried parts. When processing a query, we compute only the (missing) parts needed to respond.

\paragraph{Notations} We refer to a \emph{learning algorithm} whose input is an oracle access to a distribution $\mu$, also called the \emph{simulated distribution}, and its output is an oracle access to an output distribution $\tilde{\mu}$, which is called the \emph{simulating distribution} (of $\mu$).

\subsection{Usage example}
The distance estimation algorithms of \cite{kmp25,afl26} consist of two modules: first, an estimation procedure $x \to \est(\mu(x))$, which we call a \emph{mass query}, and second, an unconditional sampling algorithm that obtains the approximated mass of its samples and uses them to estimate the distance between the input distributions. While the first module requires a different logic for implementation in various models, the second module is fixed and stated in the following proposition.

\begin{proposition}[\cite{afl26}, implicit in \cite{kmp25}] \label{prop:est-dtv-eps^2}
    Consider the following oracle for accessing a distribution $\mu$: the oracle draws $x \sim \mu$, and returns a pair of the form $(x, (1 \pm O(\eps))\mu(x))$. There exists an algorithm whose input is such an oracle access to two distributions $\mu$ and $\tau$ that makes $O(1/\eps^2)$ oracle calls to estimate $\dtv(\mu,\tau)$ within a $\pm \eps$-additive error with probability at least $2/3$.
\end{proposition}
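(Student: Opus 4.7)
The plan is to rewrite $\dtv(\mu,\tau)$ as a bounded-variable expectation and estimate it by Monte-Carlo averaging. The starting identity is
\[
\dtv(\mu,\tau) \;=\; \E_{x \sim \mu}\!\left[\left(1 - \frac{\tau(x)}{\mu(x)}\right)^{+}\right],
\]
which follows immediately from
\[
\sum_{x}\mu(x)\bigl(1 - \tau(x)/\mu(x)\bigr)^{+} \;=\; \sum_{x}\bigl(\mu(x) - \tau(x)\bigr)^{+} \;=\; \dtv(\mu,\tau).
\]
Since the integrand takes values in $[0,1]$, Hoeffding's inequality will yield $\pm O(\eps)$ concentration from $m = O(1/\eps^{2})$ i.i.d.\ samples, so the only arithmetic the algorithm needs to do is average a sequence of values in that range.

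Concretely, I would take $m = O(1/\eps^{2})$ independent samples $(x_i, \hat\mu(x_i))$ from the $\mu$-oracle, obtain matching approximations $\hat\tau(x_i) = (1\pm O(\eps))\tau(x_i)$ from the $\tau$-oracle (the paper's simulation data structure supports mass queries at arbitrary inputs, so pairing a $\mu$-sample with a $\tau$-mass estimate is essentially free), and output
\[
\hat{d} \;=\; \frac{1}{m}\sum_{i=1}^{m}\left(1 - \frac{\hat\tau(x_i)}{\hat\mu(x_i)}\right)^{+}.
\]

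The analysis will split into bias and variance. For the bias, a short case analysis on whether $\tau(x) \le \mu(x)$ and whether the perturbed ratio $\hat\tau(x)/\hat\mu(x) = (1 \pm O(\eps))\,\tau(x)/\mu(x)$ preserves that inequality will show that the pointwise deviation $|(1 - \hat\tau(x)/\hat\mu(x))^+ - (1 - \tau(x)/\mu(x))^+|$ is $O(\eps)$ uniformly in $x$, and hence $|\E \hat d - \dtv(\mu,\tau)| \le O(\eps)$. For the variance, Hoeffding's inequality on bounded $[0,1]$-valued random variables gives $|\hat d - \E \hat d| \le O(\eps)$ with probability at least $2/3$ using $m = O(1/\eps^{2})$ samples, so the total error is $O(\eps)$ as required.

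The main obstacle will be the behaviour of the estimator in the ``flip'' region $\mu(x) \approx \tau(x)$, where the positive-part operator interacts badly with the multiplicative noise on the masses and could in principle turn a true integrand of $0$ into a perturbed integrand of order $\eps$ (or vice versa). The case analysis above handles this precisely because on that boundary the true integrand is already near zero, so an $O(\eps)$ pointwise shift only produces an $O(\eps)$ expected bias rather than a constant one; away from the boundary, the multiplicative error on the ratio translates directly into an $O(\eps)$ shift of the integrand.
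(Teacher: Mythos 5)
The paper does not prove Proposition~\ref{prop:est-dtv-eps^2}; it cites it from \cite{afl25,kmp25}, so there is no in-paper proof to compare against. Your blind reconstruction is correct and is, as far as one can tell from the paper's description, essentially the intended argument: write $\dtv(\mu,\tau)=\E_{x\sim\mu}\bigl[(1-\tau(x)/\mu(x))^{+}\bigr]$, plug in the oracle-supplied multiplicative estimates, bound the pointwise perturbation of the bounded integrand by $O(\eps)$, and average $O(1/\eps^{2})$ i.i.d.\ draws. Your case analysis around the positive part is right: when $\rho:=\tau(x)/\mu(x)\ge 1$ the true value is $0$ and $\hat\rho\ge(1-O(\eps))\rho\ge 1-O(\eps)$ forces $(1-\hat\rho)^{+}\le O(\eps)$; when $\rho<1$ the multiplicative error $|\hat\rho-\rho|\le O(\eps)\rho\le O(\eps)$ passes through the $1$-Lipschitz map $(1-\cdot)^{+}$. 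Since the integrand lies in $[0,1]$, Hoeffding (or even Chebyshev, given the constant success probability) finishes the variance side.

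One thing worth saying out loud rather than parenthetically: the oracle as literally phrased in the proposition (``draws $x\sim\mu$, returns $(x,(1\pm O(\eps))\mu(x))$'') does not by itself let you obtain $\hat\tau(x_i)$ for an $x_i$ drawn from $\mu$, and with only sample-with-self-mass access for each distribution separately there is no way to estimate $\dtv$ at this rate. So the statement implicitly assumes mass-query access (or equivalently, that each oracle call returns approximations to \emph{both} $\mu(x)$ and $\tau(x)$), which is exactly what the paper's simulation interface supplies in the corollary where the proposition is applied. You noticed and flagged this; for a standalone proof it would be better to state it as an explicit hypothesis rather than a remark. With that caveat, the argument is sound.
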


\subsection{Our results}

The main result of this paper is the following theorem.

\begin{theorem}[Informal statement of Theorem \ref{th:DKLzation}]
    For every sufficiently small $\eps > 0$ and sufficiently large $n \ge 1$, there exists an algorithm for simulating an input distribution over $\{0,1\}^n$ using $O(n^2/\eps^2)$ prefix conditional samples per mass query and $O(n^2/\eps^2)$ prefix conditional samples for sampling the simulating distribution. Moreover, the expected KL-divergence of the simulating distribution from the input distribution is $O(\eps^2)$.
\end{theorem}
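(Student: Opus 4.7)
The plan is to instantiate the ``slice and dice'' paradigm of the introduction on the natural binary prefix tree over $\{0,1\}^n$: the root holds the empty prefix and each non-leaf node for prefix $w$ has children for $w0$ and $w1$. For each prefix $w$ of length less than $n$, let $p_w := \Pr_\mu[x_{\abs{w}+1}=0 \mid x_1\cdots x_{\abs{w}} = w]$. The algorithm maintains a lazily populated cache of estimates $\tilde{p}_w$. On a cache miss for $w$, it draws $m := \Theta(n/\eps^2)$ prefix conditional samples on $w$, lets $k$ be the number whose $(\abs{w}+1)$-st bit is $0$, and stores the Laplace-smoothed estimate $\tilde{p}_w := (k+1)/(m+2)$; since $\tilde{p}_w \in (0,1)$ strictly, the simulating product distribution $\tilde{\mu}$ is well-defined over the whole tree whether or not all prefixes have been visited. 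Samples drawn for different prefixes use disjoint query batches, so the family $\{\tilde{p}_w\}$ is mutually independent. A mass query on $x$ returns $\tilde{\mu}(x) := \prod_{i=1}^n \tilde{p}_{x_{<i}}(x_i)$ (writing $\tilde{p}_w(0):=\tilde{p}_w$, $\tilde{p}_w(1):=1-\tilde{p}_w$) by walking the $n$ prefixes of $x$; a sample from $\tilde{\mu}$ is produced by walking from the root and flipping a $\Ber(\tilde{p}_w)$-coin at each node. In both cases at most $n$ new cache entries are populated, each at a cost of $m$ samples, so each operation uses $nm = O(n^2/\eps^2)$ prefix conditional samples.

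The divergence analysis uses the chain rule: because $\tilde{\mu}$ is a product along the prefix tree, conditionally on the cache
\[
\DKL{\mu}{\tilde{\mu}} = \sum_{i=1}^{n}\sum_{\abs{w}=i-1} \mu(w) \cdot \DKL{\Ber(p_w)}{\Ber(\tilde{p}_w)},
\]
where $\mu(w) := \Pr_\mu[x_1\cdots x_{i-1}=w]$. Taking expectations over the algorithm's randomness and using linearity, the target $\E[\DKL{\mu}{\tilde{\mu}}] = O(\eps^2)$ reduces to a single per-edge estimate
\[
\E_{k\sim\Bin(m,p)}\!\left[\DKL{\Ber(p)}{\Ber((k+1)/(m+2))}\right] = O(1/m),
\]
uniform in $p \in [0,1]$. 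Granted this, the inner sum equals $\sum_{\abs{w}=i-1}\mu(w)\cdot O(1/m) = O(1/m)$ since $\sum_{\abs{w}=i-1}\mu(w)=1$, and summing over the $n$ levels yields $\E[\DKL{\mu}{\tilde{\mu}}] = O(n/m) = O(\eps^2)$ by the choice of $m$.

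The main obstacle is the per-edge bound itself: the naive empirical estimator $k/m$ has infinite expected KL for every $p \in (0,1)$, because the event $k=0$ has positive probability and sends $\DKL{\Ber(p)}{\Ber(0)}$ to $+\infty$. Laplace smoothing pins $\tilde{p}$ inside $[1/(m+2),(m+1)/(m+2)]$, but extracting the $O(1/m)$ expectation requires a case split. In the ``central'' regime $p\in[c/m,1-c/m]$, a second-order Taylor expansion of $\DKL{\Ber(p)}{\Ber(\tilde{p})}$ around $\tilde{p}=p$, combined with $\Var(\tilde{p}) = O(p(1-p)/m)$ and a bias of size $O(1/m)$, yields the desired $O(1/m)$ contribution. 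In the ``tail'' regime $p<c/m$ (and symmetrically $p > 1-c/m$) one bounds $\DKL{\Ber(p)}{\Ber(\tilde{p})}$ directly from its definition using $\tilde{p}\ge 1/(m+2)$, and a direct computation of the logarithmic moments of $k+1$ under $\Bin(m,p)$ again gives $O(1/m)$. Assembling the per-edge contributions through the chain rule completes the proof.
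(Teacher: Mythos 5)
Your proposal bounds the reverse divergence from what the paper actually needs. Definition \ref{def:D-delta-simulation-of-distribution} (and hence Theorem \ref{th:DKLzation}) requires $\E[\DKL{\tilde{\mu}}{\mu}] \le \delta$ — the \emph{simulating} distribution is the first argument — whereas your chain-rule display and your per-edge target $\E_k[\DKL{\Ber(p)}{\Ber(\hat p)}]$ bound $\E[\DKL{\mu}{\tilde{\mu}}]$. These are different quantities; a bound on one does not imply a bound on the other, and the informal statement ``KL-divergence of the simulating distribution from the input distribution'' is fixed by the paper's own definition (``The KL-divergence of $\mu$ from $\tau$ is $\DKL{\mu}{\tau}$'') to mean $\DKL{\tilde{\mu}}{\mu}$.

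This matters because it is precisely what makes the paper's proof short. Putting the estimate in the \emph{first} argument means the empirical $\hat{p}=k/m$ already gives finite per-edge loss — the problematic terms become of the form $0\log 0$ rather than $p\log(p/0)$ — so no Laplace smoothing is needed. Lemma \ref{lemma:good-expected-dkl-bernoulli} then gets $\E_{t\sim\Bin(m,p)}[\dkl(t/m,p)]\le 1/m$ in three lines from the quadratic bound $\dkl(a,b)\le (a-b)^2/(b(1-b))$ (Lemma \ref{lemma:cheatsheet:dkl-quadratic-bound}), since the denominator is $p(1-p)$, which exactly cancels $\Var[t]$. Your route instead requires the smoothing plus a central/tail case split whose tail case you only sketch; and note that if you tried to keep the Laplace estimate but flip to the correct direction $\DKL{\tilde{\mu}}{\mu}$, the bias term $\Theta(1/m)$ of the smoothed estimator, when squared and divided by $p(1-p)$, blows up for $p=o(1/m)$, so the quadratic bound no longer yields $O(1/m)$ — the empirical estimator is genuinely the right choice here.

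One further structural point: because the paper's chain rule is taken under $\tilde{\mu}$, the outer weights $\tilde{\mu}(w)$ are \emph{random} (they depend on the estimates along the path from the root to $w$). The paper therefore needs, and explicitly uses, the observation that $\tilde{\mu}(w)$ and the edge-loss $X_w$ depend on disjoint sets of samples and are thus independent, so that $\E[\tilde{\mu}(w)X_w]=\E[\tilde{\mu}(w)]\E[X_w]$ and $\sum_w\E[\tilde{\mu}(w)]=1$. Your version, with deterministic weights $\mu(w)$, silently dodges that step — one small simplification your direction does buy — but it is not a valid substitute, since it proves a bound on the wrong divergence. To repair the proposal: drop the Laplace smoothing, use $\hat{p}=k/m$, set the loss at an edge to $\dkl(\hat p, p)$ (estimate first), apply Lemma \ref{lemma:cheatsheet:dkl-quadratic-bound} and a variance computation as in Lemma \ref{lemma:good-expected-dkl-bernoulli}, take the chain rule under $\tilde{\mu}$, and add the independence argument between $\tilde{\mu}(w)$ and $X_w$.
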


Additionally, we show that any estimation algorithm must make $\Omega(n^2/\eps^2)$ prefix conditional samples per query. The exact meaning of an ``estimation algorithm'' is detailed in the formal statement of Theorem \ref{th:lower-bound-prefix-new}.

\begin{theorem}[Informal statement of Theorem \ref{th:lower-bound-prefix-new}]
    Every estimation algorithm must make at least $\Omega(n^2/\eps^2)$ prefix conditional samples per query.
\end{theorem}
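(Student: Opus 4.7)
Fix the target element $x = 0^n$ and set $\delta = 2\eps / n$. Define two distributions $\mu_+, \mu_-$ on $\{0,1\}^n$ that differ only along the all-zeros path: for every $k \in \{0, \ldots, n-1\}$,
\[
\Pr_{\mu_\pm}[X_{k+1} = 0 \mid X_{1..k} = 0^k] = \tfrac{1}{2} \pm \delta,
\]
while for every $w$ that is not a prefix of $0^n$, both conditionals $\Pr_{\mu_\pm}[X_{|w|+1}=0 \mid X_{1..|w|} = w]$ equal $1/2$. Then $\mu_\pm(x) = (\tfrac{1}{2} \pm \delta)^n$, so the ratio $\mu_+(x)/\mu_-(x) = ((1+2\delta)/(1-2\delta))^n = e^{8\eps + O(\eps^3/n^2)}$ strictly exceeds $(1+\eps)/(1-\eps) = e^{2\eps + O(\eps^3)}$ for all sufficiently small $\eps$. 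Hence the $(1\pm\eps)$-multiplicative confidence intervals around $\mu_+(x)$ and $\mu_-(x)$ are disjoint, and any $(1\pm\eps)$-estimator succeeding with probability at least $2/3$ can be thresholded at $\sqrt{\mu_+(x)\mu_-(x)}$ to yield a hypothesis test distinguishing $\mu_+$ from $\mu_-$ with error at most $1/3$ on either side. Le Cam's two-point method then forces $\dtv(P_+^{(q)}, P_-^{(q)}) \ge 1/3$, where $P_\pm^{(q)}$ is the distribution of the algorithm's full transcript on input $\mu_\pm$.

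\paragraph{Per-sample KL bound.}
The key quantitative step is to upper-bound $\dkl(P_+^{(q)} \| P_-^{(q)})$ via the chain rule for adaptive querying: $\dkl(P_+^{(q)} \| P_-^{(q)}) \le q \cdot \max_{w} \dkl(\mu_+ \mid w \| \mu_- \mid w)$, where the max is over all prefix conditions $w$ the algorithm may pose. When $w$ is not a prefix of $0^n$, a short calculation shows that the conditionals $\mu_\pm \mid w$ agree exactly (both are uniform on the off-path subtree), so the KL vanishes. For $w = 0^j$, a further chain-rule expansion over the remaining positions $y_{j+1}, y_{j+2}, \ldots$ of the sample shows that position $j+i$ contributes the probability of ``surviving on the all-zeros path'' up to that point, namely $(1/2 + \delta)^{i-1}$, times $\dkl(\Ber(1/2 + \delta) \| \Ber(1/2 - \delta)) \le 8\delta^2$. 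The off-path contributions are zero, and the geometric series in $i$ sums to at most $2$, yielding the uniform bound $\dkl(\mu_+ \mid 0^j \| \mu_- \mid 0^j) \le 32 \delta^2$ and thus $\dkl(P_+^{(q)} \| P_-^{(q)}) \le 32 q \delta^2$.

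\paragraph{Conclusion and main obstacle.}
Combining Pinsker's inequality with the above, $1/9 \le \dtv(P_+^{(q)}, P_-^{(q)})^2 \le \dkl(P_+^{(q)} \| P_-^{(q)})/2 \le 16 q \delta^2 = 64 q \eps^2 / n^2$, giving $q = \Omega(n^2 / \eps^2)$ as desired. The step that requires the most care is the per-sample KL bound: it exploits both the prefix shape of the queries (so that the chain rule follows a single root-to-leaf path) and the fact that $\mu_+$ and $\mu_-$ were engineered to agree off-path, which is what enforces the geometric decay in the expansion and caps the KL per sample at a constant times $\delta^2$ rather than something of order $n \delta^2$. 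It is precisely this geometric decay that prevents the algorithm from learning all $n$ on-path biases at once from a handful of shallow samples, and upgrades the naive $\Omega(1/\eps^2)$ bound for Bernoulli estimation at the leaves into the claimed $\Omega(n^2/\eps^2)$.
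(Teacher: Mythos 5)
There is a genuine gap, and it is in the construction, not the KL analysis. You fix the target element $x = 0^n$ and build two distributions that differ only on the all-zeros path. But $\mu_\pm(0^n) \approx 2^{-n}e^{\pm 4\eps}$ is exponentially small, and the theorem being proved (Theorem \ref{th:lower-bound-prefix-new}) only requires the estimator to succeed on a set $G_\mu$ of mass $1 - e^{-20}$, explicitly \emph{allowing} it to refuse to estimate on a small-mass exceptional set. A perfectly valid estimation algorithm can put $0^n$ outside $G_\mu$ for both $\mu_+$ and $\mu_-$, and then your Le Cam reduction never produces a contradiction: the algorithm is never obligated to tell $\mu_+$ and $\mu_-$ apart. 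This is precisely the difficulty the paper flags just after stating Lemma \ref{lemma:Alg-Dyes-Dno-large-distance} (``having $\mu(G_\mu)=1$ is infeasible''), and it is why the paper cannot use a fixed-$x$ two-point argument. Instead it samples $x$ at random (uniformly in $D_\mathrm{yes}$, from a tilted $\mu'$ in $D_\mathrm{no}$), constructs $\mu$ so that \emph{every} marginal is $\frac{1\pm\delta}{2}$ with a fresh random sign, and then shows that the ``hard'' sets $H_\mu$ and $L_\mu$ carry $\mu$-mass at least $1 - e^{-8}$ and $e^{-14.57}$ respectively (Lemmas \ref{lemma:mu-H}, \ref{lemma:mu-L-partA}) — both far larger than $e^{-20}$, so they cannot be carved out of $G_\mu$. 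That is what forces any estimator satisfying the hypothesis to distinguish $D_\mathrm{yes}$ from $D_\mathrm{no}$.

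The secondary consequence of this is that your parameter choice and the shape of the information argument diverge from the paper's. You take $\delta = 2\eps/n$ and get the $n^2$ factor directly from $1/\delta^2$. The paper takes $\delta = \sqrt{2}\eps/\sqrt{n}$ (which is needed so that the mass of $x$ varies by a $(1\pm\eps)$ factor as the count $k(x)$ fluctuates by its typical $\Theta(\sqrt{n})$), and recovers the extra factor of $n$ through the second parameter $r = 8/\sqrt{n}$ controlling the tilt of $\mu'$. The resulting distinguishing problem is no longer a two-point problem but a composite one over exponentially many $\mu$'s, and the paper handles adaptivity with a two-phase simulator argument (Lemma \ref{lemma:lower-bound-ad-hoc-adaptive}) rather than a per-sample chain-rule bound. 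Your geometric-decay observation — that a prefix sample stays on the path of $x$ only for $O(1)$ steps in expectation, so the per-sample information about $x$ is $O(\delta^2)$ and not $O(n\delta^2)$ — is correct and is the same insight as the paper's Lemma \ref{lemma:expected-number-of-effective-marginal-samples}. So the core technical idea in your per-sample bound is sound; what is missing is the harder surrounding construction that makes the argument robust to the $G_\mu$ escape clause.
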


In Section \ref{sec:overview}, we provide a technical overview and the proof roadmap.


\section{Preliminaries}
\label{sec:prelims}

In this section we formally define the notation scheme of this paper, as well as a few definitions more specific for this work. See Appendix \ref{apx:common-and-facts} for formal definition of widely used distribution notations (such as Bernoulli) and for formal statements of well known facts (such as asymptotic bounds for $\ln x$ and Pinsker's inequality).


\subsection{Basic notations}

\begin{definition}[Conditional distribution]
    Let $\mu$ be a distribution over $\Omega$ and let $A \subseteq \Omega$ be a set. The \emph{conditional distribution} $\mu|^A$ is defined as the distribution where $\Pr[x] = \frac{\Pr_\mu[x]}{\Pr_\mu[A]}$ for every $x \in A$ and $\Pr[x] = 0$ for every $x \notin A$. If $\Pr_\mu[A] = 0$, then the conditional distribution $\mu|^A$ is undefined.
\end{definition}

\begin{definition}[Projected distribution, restricted distribution]
    Let $\mu$ be a distribution over $\Omega = \prod_{i=1}^k \Omega_i$. For a non-empty set $I \subseteq \{1,\ldots,k\}$, the \emph{projected distribution} $\mu_I$ is defined as the distribution over $\prod_{i \in I} \Omega_i$ for which $\Pr[x] = \Pr_{z \sim \mu}[\forall i \in I : z_i = x_i]$ for every $x \in \prod_{i\in I} \Omega_i$. In other words, it is the distribution obtained by drawing a sample from $\mu$ and \emph{restricting} it to the entries in $I$'s indexes.
\end{definition}

The overloaded notation $\mu|^A_I$ is interpreted as $(\mu|^A)_I$. In other words, we first sample according to the condition $A$, and then restrict the result to $I$'s entries.

\begin{definition}[Product of distributions]
    Let $\mu_1,\ldots,\mu_k$ be distributions over $\Omega_1,\ldots,\Omega_k$, respectively. The \emph{product distribution} $\mu = \prod_{i=1}^k \mu_i$ is the distribution over $\Omega = \prod_{i=1}^k \Omega_i$ where, for every $(a_1,\ldots,a_k) \in \Omega$, $\mu((a_1,\ldots,a_k)) = \prod_{i=1}^n \mu_i(a_i)$.
\end{definition}

\begin{definition}[Total-variation distance]
    Let $\mu$ and $\tau$ be two distributions over a domain $\Omega$. Their \emph{total-variation divergence} is $\dtv(\mu,\tau) = \frac{1}{2} \sum_{x \in \Omega} \abs{\mu(x) - \tau(x)}$.
\end{definition}

\begin{definition}[Kullback-Leibler divergence]
    Let $\mu$ and $\tau$ be two distributions over a domain $\Omega$. The \emph{KL-divergence} of $\mu$ from $\tau$ is $\DKL{\mu}{\tau} = \E_{x \sim \mu}[\log_2 (\mu(x) / \tau(x))]$.
\end{definition}

For convenience, we abuse the notation to define KL-divergence of probabilities:
\begin{definition}[KL-divergence of probabilities (overloaded notation)]
    For $p,q \in [0,1]$, we use the notation $\dkl(p,q) = \DKL{\Ber(p)}{\Ber(q)} = p \log \frac{p}{q} + (1-p) \log \frac{1-p}{1-q}$.
\end{definition}

\subsection{Distribution testing models}

A testing algorithm must access its input using oracles. Below we define two oracles for distribution testing.

\begin{definition}[The sampling oracle]
    Given an unknown input distribution $\mu$ over a set $\Omega$, a call to the oracle results in a sample $x \in \Omega$ that distributes according to $\mu$, independently of past calls to the oracle.
\end{definition}

\begin{definition}[The conditional sampling oracle]
    Given an unknown input distribution $\mu$ over a set $\Omega$, the input of the oracle is a non-empty explicit subset $A \subseteq \Omega$ and the output is a sample $x \in \Omega$ that distributes like $\mu|^A$. If the mass of $A$ in $\mu$ is zero, then the behavior of the oracle is undefined (it is only guaranteed that $x \in A$ with probability 1).
\end{definition}

\begin{definition}[Prefix condition sets]
    Let $\Omega = \Sigma^n$ be a domain. A set $A$ is a \emph{prefix condition} if there exists a string $w \in \bigcup_{i=0}^n \Sigma^i$ for which $A = \prod_{i=1}^{\abs{w}} \{w_i\} \times \Sigma^{n-\abs{w}}$.
\end{definition}

\subsection{Algorithmic notions}

In this subsection we formally define the notions of estimation and simulation of distributions.

Estimation algorithm must be able to correctly estimate the mass of \emph{most} elements within a given \emph{multiplicative factor}.
\begin{definition}[$(\eps,c)$-estimation algorithm]
    Consider an algorithm whose input is a pair $(\mu,x)$, where $\mu$ is a distribution and $x$ is an element in the domain of $\mu$. For given parameters $0 < \eps < 1$ and $0 < c < 1$, such an algorithm is an \emph{$(\eps,c)$-estimation algorithm} if for every input distribution $\mu$ there exists a set $G_\mu$ for which $\mu(G_\mu) \ge 1 - c$ such that, for every $x \in G_\mu$, with probability at least $2/3$, the output of the algorithm for the input $(\mu,x)$ is a number in the range $(1 \pm \eps) \mu(x)$.
\end{definition}

A simulation algorithm must be defined through an interaction interface rather than a one-time call. It consists of a one-time initialization and a sequence of allowed interactions, ``sample'' and ``query''. After each interaction, the internal state of the simulation can mutate (in particular, the simulation can track its interaction history).
\begin{definition}[Distribution simulation interface] \label{def:distribution-simulation-interface}
    For a domain $\Omega$, the interface of a \emph{distribution simulation} consists of three procedures:
    \begin{itemize}
        \item Initialization: the input is a distribution $\mu$ and any additional parameters.
        \item Query: the input is an element $x \in \Omega$, and the output is a number $p \in [0,1]$.
        \item Sample: the output is an element $x \in \Omega$ and a number $p \in [0,1]$.
    \end{itemize}    
\end{definition}

\begin{definition}[Interaction notation]
    We formalize the $q$-round interaction of the caller with the simulation algorithm (Definition \ref{def:distribution-simulation-interface}) as:
    \begin{itemize}
        \item A \emph{request} sequence $(\mathrm{init}(\mu); X_1,\ldots,X_q)$, where $X_i \in \Omega \cup \{\bot\}$ represents a query request (if $X_i \in \Omega$) or a sample request ($X_i = \bot$).
        \item A \emph{response} sequence $((\hat{X}_i,\hat{P}_i))_{1\le i \le q}$, where $\hat{X}_i$ represents the sampled element (for a sample request) or the queried element $X_i$ (for a query request), and $\hat{P}_i$ represents the $p$-output.
    \end{itemize}
\end{definition}

The simulation interface lacks of constraints ensuring that the output ``looks like a simulation'', which we have to formalize separately. Ideally, an implementation of the simulation interface secretly chooses a distribution $\tilde{\mu}$ (which is hopefully a good representation of $\mu$) at the initialization phase, and then responds to sample and query interactions by sampling and querying the chosen $\tilde{\mu}$. 



\begin{definition}[The secret-distribution constraint]
    An algorithmic implementation $\mathcal A$ for the distribution simulation interface follows the \emph{secret-distribution constraint} if for every input distribution $\mu$ there exists a distribution $\mathcal A_\mu$ over distributions such that the interaction can be seen as drawing a secret distribution $\tilde{\mu} \sim \mathcal A_\mu$ in the initialization phase and then responding to sample and query requests through $\tilde{\mu}$. More precisely, given the request sequence $(\mathrm{init}(\mu); X_1,\ldots,X_q)$, the probability to obtain the response $((\hat{X}_1,\hat{P}_1),\ldots,(\hat{X}_q,\hat{P}_q))$ is exactly:
    \[
        \sum_{\tilde{\mu} : \bigwedge_{i=1}^q (\tilde{\mu}(\hat{X}_i) = \hat{P}_i)} \left(\Pr_{\mathcal A_\mu}[\tilde{\mu}] \times \prod_{i=1}^q \begin{cases}
        1 & X_i = \hat{X}_i \text{ (query)} \\
            \tilde{\mu}(\hat{X}_i) & X_i = \bot \text{ (sample)} \\
            0 & X_i \ne \hat{X}_i, \bot \text{ (violation)} \\
        \end{cases} \right)
    \]
\end{definition}

\begin{definition}[$(D,\delta)$-simulation of a distribution] \label{def:D-delta-simulation-of-distribution}
    For a given divergence measure $D$ and a parameter $\delta$, an algorithmic implementation $\mathcal A$ of the distribution simulation interface is a \emph{$(D,\delta)$-simulation} if:
    \begin{itemize}
        \item Simulation behavior: $\mathcal A$ follows the secret-distribution constraint.
        \item Closeness: for every $\mu$, regarding the distribution $\mathcal A_\mu$ (over distributions) required by the secret-distribution constraint, $\E_{\tilde{\mu} \sim \mathcal A_\mu}[D(\tilde{\mu} | \mu)] \le \delta$.
    \end{itemize}
\end{definition}


\section{Simulating conditional samples using prefix conditions}
\label{sec:prefix-reduction}

In this section we provide a conceptual reduction from conditional models compatible with the distribution-tree paradigm, which we formally define below, to the prefix conditional model.

First, we define a \emph{breakdown tree}, which is the underlying structure of a distribution tree.

\begin{definition}[A breakdown tree] \label{def:breakdown-tree}
    Let $\Omega$ be a domain. A \emph{breakdown tree} is the following combinatorial tree structure:
    \begin{itemize}
        \item All leaves have the same depth.
        \item Every node holds a subset of the domain.
        \item A leaf can hold either the empty set or a singleton.
        \item The root holds the domain set $\Omega$.
        \item Every non-leaf node has two children, arbitrarily named ``left'' and ``right''.
        \item Every non-leaf node holds the disjoint union of the subsets held by its children.
    \end{itemize}
\end{definition}

\begin{observation}
    Let $\mathcal T$ be a breakdown tree over $\Omega$. For every $x \in \Omega$, there exists a single leaf that holds the singleton $\{x\}$.
\end{observation}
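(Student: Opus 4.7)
The plan is to prove existence and uniqueness separately, both by a descent argument from the root. For existence, I will define a root-to-leaf path greedily: start at the root, which holds $\Omega$ and therefore holds $x$, and at each non-leaf node that holds a set containing $x$, descend into the unique child whose held set contains $x$. The uniqueness of this child follows from the disjoint-union property in the definition of a breakdown tree: since the children's held sets are disjoint and their union contains $x$, exactly one of them contains $x$. Since all leaves share the same depth, this descent terminates at a leaf whose held set contains $x$; by the leaf condition (that leaves hold either $\emptyset$ or a singleton), this held set must equal $\{x\}$.

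For uniqueness, suppose towards a contradiction that two distinct leaves $\ell_1 \ne \ell_2$ both hold $\{x\}$. Since all leaves have the same depth, they have a lowest common ancestor $v$ in the tree, and the paths from $v$ to $\ell_1$ and $\ell_2$ pass through distinct children of $v$. An easy upward induction from each $\ell_i$ along its ancestor chain shows that every ancestor of $\ell_i$ holds a set containing $\{x\}$, since each parent holds the union (and in particular a superset) of its children's held sets. Hence both children of $v$ hold a set containing $x$, contradicting the disjoint-union requirement on non-leaf nodes.

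I expect no real obstacle here; the statement is essentially an immediate consequence of the axioms in Definition \ref{def:breakdown-tree}. The only care needed is to cleanly separate the two invariants used: ``containment descends to exactly one child'' for existence, and ``containment ascends to every ancestor'' for uniqueness, both driven by the disjoint-union property at internal nodes together with the leaf-shape restriction.
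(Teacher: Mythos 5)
Your proof is correct. The paper states this as an unproved observation; your argument---greedy descent via the disjoint-union axiom for existence, and the lowest-common-ancestor contradiction for uniqueness, both closed off by the leaf-shape restriction---is exactly the natural argument the paper implicitly relies on.
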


A conditional sampling model is \emph{compatible with the distribution-tree paradigm} if for every set $\Omega$ which is valid as a domain of distributions in this model, there exists a breakdown tree in which every subset associated with a node is a valid condition set in the model. It is immediate to see that the subcube conditional model, the interval conditional model and the prefix conditional model are compatible with the distribution-tree paradigm, whereas the pair conditional model\footnote{The pair conditional model is defined in \cite{crs15}. Beside the whole domain, the valid condition sets are all sets with exactly two elements.} is not.

We extend the structure of a breakdown tree to define distribution trees.
\begin{definition}[A distribution tree] \label{def:distribution-tree}
    Let $\Omega$ be a domain. A \emph{distribution tree} consists of a breakdown tree and additionally:
    \begin{itemize}
        \item Every edge holds a probability.
        \item For every non-leaf node, the sum of its (two) outgoing edges is $1$.
        \item The probability of an edge from a node holding a non-empty set to a child holding the empty set is $0$.
        \item The probability of an edge from a node holding the empty set to its left child is $1$.
    \end{itemize}
\end{definition}

\begin{definition}[Representation] \label{def:distribution-tree-representation}
    A distribution tree $\mathcal T$ of height $\ell$ over $\Omega$ \emph{represents} a distribution $\mu$ over $\Omega$ if, for every $x \in \Omega$, considering the tree path $\Omega = A_0 \supseteq A_1 \supseteq \cdots \supseteq A_{\ell-1} \supseteq A_\ell = \{x\}$ and the weights $p_1,\ldots,p_\ell$ associated with the edges $A_0 \to A_1, \ldots, A_{\ell-1} \to A_\ell$, it holds that $\mu(x) = \prod_{i=1}^\ell p_i$.
\end{definition}

\begin{observation}
    Let $\mathcal T$ be a breakdown tree over a domain $\Omega$. There exists a bijection between distribution trees with the same structure of $\mathcal T$ and distributions over $\Omega$ such that every tree represents the distribution it is mapped to. Moreover, the mapped tree can be computed directly by associating every edge $A \to B$ ($A \supseteq B$) with the probability $\Pr_{z\sim \mu}[z \in B | z \in A]$.
\end{observation}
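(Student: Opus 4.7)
I will establish the bijection by defining maps in both directions and showing they are mutual inverses. The natural forward map $\Phi$ takes a distribution tree and assigns $\mu(x) := \prod_{i=1}^\ell p_i$ as in Definition \ref{def:distribution-tree-representation}; the natural backward map $\Psi$ takes a distribution $\mu$ and assigns to each edge $A \to B$ the weight $\Pr_{z \sim \mu}[z \in B \mid z \in A]$ (with the convention $p := 1$ for the edge to the left child when $\Pr_\mu[A] = 0$, matching Definition \ref{def:distribution-tree}). The entire content of the observation is that $\Phi$ and $\Psi$ are well-defined and $\Phi \circ \Psi = \mathrm{id}$, $\Psi \circ \Phi = \mathrm{id}$.

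\textbf{Forward direction: $\Phi$ yields a valid distribution.} Let $\mathcal{T}'$ be a distribution tree on the structure of $\mathcal{T}$, and for a node $v$ holding set $S_v$ let $w(v)$ be the product of edge probabilities on the root-to-$v$ path. I claim, by induction on the height of $v$ (going bottom-up), that $\sum_{x \in S_v} \prod_{i} p_i(x) = w(v)$, where the inner product is over the edges on the $v$-to-leaf path ending at $\{x\}$ (or $0$ if $x$ lies below a leaf holding $\emptyset$). The leaf case is immediate from the definitions of leaves (either a singleton or the empty set) and the convention that edges into empty leaves carry probability $0$. For a non-leaf $v$ with children $v_L, v_R$ and outgoing weights $p_L + p_R = 1$, the induction hypothesis applied to $v_L$ and $v_R$ gives sums $p_L$ and $p_R$ respectively, which sum to $1$; multiplying by $w(v)/1$ (when $v$ is viewed as the new root) handles the scaling. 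Applying this at the root gives $\sum_{x \in \Omega} \mu(x) = 1$, so $\Phi(\mathcal{T}')$ is a distribution, and by construction $\mathcal{T}'$ represents $\Phi(\mathcal{T}')$.

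\textbf{Backward direction: $\Psi$ yields a valid distribution tree representing $\mu$.} For every non-leaf node holding $A$ with children holding $B, C$ where $B \sqcup C = A$, the two outgoing weights $\Pr_\mu[B \mid A]$ and $\Pr_\mu[C \mid A]$ sum to $1$ by finite additivity (and the convention handles $\Pr_\mu[A] = 0$). The constraint that edges from non-empty nodes into empty leaves have probability $0$ holds since $\Pr_\mu[\emptyset \mid A] = 0$. Thus $\Psi(\mu)$ is a distribution tree. To verify representation, fix $x \in \Omega$ and let $\Omega = A_0 \supseteq A_1 \supseteq \cdots \supseteq A_\ell = \{x\}$ be the root-to-leaf path. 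If $\Pr_\mu[A_i] > 0$ for all $i$, then the telescoping identity
\[
\prod_{i=1}^\ell \Pr_\mu[A_i \mid A_{i-1}] = \prod_{i=1}^\ell \frac{\Pr_\mu[A_i]}{\Pr_\mu[A_{i-1}]} = \Pr_\mu[\{x\}] = \mu(x)
\]
gives the claim. Otherwise, let $j$ be the smallest index with $\Pr_\mu[A_j] = 0$; then $\mu(x) = 0$, and either $A_j$ is the right child of $A_{j-1}$ (so the edge carries weight $\Pr_\mu[A_j \mid A_{j-1}] = 0$) or $A_j$ is the left child of an empty-mass parent (so the convention weight $1$ is harmless, but some ancestor edge already contributes a factor of $0$). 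In every case the product is $0 = \mu(x)$.

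\textbf{Inverses.} The equality $\Phi \circ \Psi = \mathrm{id}$ is exactly the representation property just verified. For $\Psi \circ \Phi = \mathrm{id}$, given a distribution tree $\mathcal{T}'$ with associated distribution $\mu = \Phi(\mathcal{T}')$, the forward-direction inductive identity shows that for every node $v$ holding $S_v$, $\Pr_\mu[S_v] = w(v)$; hence for any edge $A \to B$ with outgoing probability $p$ in $\mathcal{T}'$, one has $\Pr_\mu[B \mid A] = w(B)/w(A) = p$ whenever $w(A) > 0$, and in the remaining case the tree's convention on edges from empty-mass nodes coincides with $\Psi$'s convention. Thus $\Psi(\Phi(\mathcal{T}')) = \mathcal{T}'$, completing the bijection. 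The only subtle step is bookkeeping for the zero-mass case, which is handled uniformly by the left-child convention stipulated in Definition \ref{def:distribution-tree}; no genuine obstacle arises.
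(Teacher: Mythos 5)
The paper leaves this observation unproved, so there is no author proof to compare against; I will just assess your argument on its own terms. Your decomposition into maps $\Phi$ and $\Psi$ is the natural one, and the forward direction (telescoping the edge weights along a root-to-leaf path gives the mass of the leaf, and summing over leaves gives $1$) is fine.

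The gap is in the claim $\Psi \circ \Phi = \mathrm{id}$, specifically the clause ``in the remaining case the tree's convention on edges from empty-mass nodes coincides with $\Psi$'s convention.'' There is no such convention in Definition~\ref{def:distribution-tree}. That definition constrains edges only when a node holds the \emph{empty set} (edge to its left child is $1$) or when a \emph{child} holds the empty set (edge is $0$). A non-leaf node $A$ may hold a \emph{non-empty} set yet have $w(A) = 0$ under $\Phi(\mathcal{T}')$ (e.g.\ in a breakdown tree over $\{a,b,c\}$ with root children $\{a,b\}$ and $\{c\}$, take the root-to-$\{a,b\}$ edge weight $p = 0$). The outgoing edges of such an $A$ are completely unconstrained by Definition~\ref{def:distribution-tree}, so $\mathcal{T}'$ can assign any $q$ and $1-q$ there, and every choice of $q$ yields the same distribution $\Phi(\mathcal{T}')$. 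Hence $\Phi$ is not injective, so $\Psi(\Phi(\mathcal{T}'))$ cannot equal $\mathcal{T}'$ for all $\mathcal{T}'$. Relatedly, in defining $\Psi$ you wrote that the convention $p := 1$ for the left edge when $\Pr_\mu[A]=0$ ``matches Definition~\ref{def:distribution-tree},'' but it does not: your convention keys off zero \emph{mass}, the definition keys off the empty \emph{set}, and these are different. To be fair, this gap traces to an imprecision in the observation itself as stated: with the definitions as written, the correspondence fails to be a bijection on the tree side, and the ``moreover'' clause is also underspecified when $\Pr_\mu[A] = 0$. A correct version would either quotient distribution trees by arbitrary edge weights below zero-mass non-empty nodes, or strengthen the last bullet of Definition~\ref{def:distribution-tree} to apply to every node of the tree reached with probability zero rather than only those holding $\emptyset$; your proof would then go through with the edge case tied down explicitly.
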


Observe that, given a breakdown tree of edge-height $\ell$ for a conditional model over a domain $\Omega$, every condition associated with a node can be seen as a prefix condition in $\{0,1\}^\ell$, where the encoding of elements $x \in \Omega$ into elements in $\{0,1\}^\ell$ corresponds to the sequence of left and right edges in the path from the root to $x$.

Therefore, every prefix conditional model algorithm can be directly executed in every conditional model that has a breakdown tree, where the $n$ parameter (for prefix algorithms over $\{0,1\}^n$) is the edge-height of the breakdown tree.

\section{Overview}
\label{sec:overview}

\subsection{The prefix conditional model}

We re-define the notions of a breakdown tree and a distribution tree to specifically match the behavior of prefix conditions.

\begin{definition}[Prefix domain]
    For $n \ge 1$, let $\{0,1\}^{<n} = \bigcup_{i=0}^{n-1} \{0,1\}^i$ be the set of all true prefixes of strings in $\{0,1\}^n$.
\end{definition}

Observe that there is a natural bijection between strings in $\{0,1\}^{<n}$ and non-singleton prefix conditions in $\{0,1\}^n$: a string $w \in \{0,1\}^{<n}$ corresponds to the prefix condition set $\{w\} \times \{0,1\}^{n-\abs{w}}$. Moreover, the prefix conditional model has exactly one breakdown tree: the tree in which the non-leaf nodes are the sets corresponding to binary strings of length strictly less than $n$, the leaf nodes are the elements of $\{0,1\}^n$, and the children of a non-leaf node $w$ correspond to the prefixes $w0$ and $w1$.

\begin{definition}[Marginal tree] \label{def:marginal-tree}
    Let $\Omega = \{0,1\}^n$. A marginal tree consists of the (unique) breakdown tree and a function $\{0,1\}^{<n} \to [0,1]$ representing the probabilities associated with the one-edges (the edges of the form $w \to w1$).
\end{definition}

\begin{observation}[Representation] \label{obs:marginal-tree-representation}
    Given the marginal function $f : \{0,1\}^{<n} \to [0,1]$, the marginal tree defines a distribution in which, for every $x \in \{0,1\}^n$, \[\Pr[x] = \prod_{i=1}^n (x_i f(x_{1,\ldots,i-1}) + (1 - x_i) (1 - f(x_{1,\ldots,i-1})))
    = \prod_{i=1}^n \begin{cases}
        f(x_{1,\ldots,i-1}) & x_i = 1 \\
        1 - f(x_{1,\ldots,i-1}) & x_i = 0 \\
    \end{cases}
    \]
\end{observation}

We use the specific structure of the only breakdown tree in the prefix conditional model to introduce a notation for marginal probabilities.

\begin{definition}[Overloaded notation for marginals]
    The marginal function $f : \{0,1\}^{<n} \to [0,1]$ represents a single distribution $\mu$ over $\{0,1\}^n$. For a prefix $w$ of length $i-1$, we use the notation $\mu(x_i = 1 | x_{1,\ldots,i-1} = w)$, its short form $\mu(x_i = 1 | w)$ and its shorter form $\mu(1|w)$, to denote $f(w)$. Analogously, $\mu(x_i = 0 | x_{1,\ldots,i-1} = w)$, $\mu(x_i = 0 | w)$ and $\mu(0|w)$ denote $1-f(w)$.
\end{definition}

In both the upper bound and the lower bound analysis, we use the chain rule for distributions over strings.

To \emph{query} the mass of $x \in \{0,1\}^n$ with respect to $\mu$, we multiply its marginal probabilities:
\[  \mu(x)
    = \prod_{i=1}^n \mu(x_i | x_{1,\ldots,i-1})
    = \prod_{i=1}^n (x_i f(x_{1,\ldots,i-1}) + (1 - x_i) (1 - f(x_{1,\ldots,i-1}))) \]
To \emph{estimate} $\mu(x)$, we multiply the estimations of its marginals.
\[  \est(\mu(x))
    = \prod_{i=1}^n \est(\mu(x_i | x_{1,\ldots,i-1}))
    = \prod_{i=1}^n (x_i \est(f(x_{1,\ldots,i-1})) +
    (1 - x_i) (1 - \est(f(x_{1,\ldots,i-1})))) \]
Note that the estimations of the marginals can be fully independent.

In parts of our analysis we use a weaker form of the prefix conditional sampling oracle.

\begin{definition}[The marginal prefix sampling oracle]
    Let $\mu$ be a distribution over $\Omega = \{0,1\}^n$. The input of a \emph{marginal prefix sampling oracle} is a prefix condition $\{w\} \times \{0,1\}^{n-\abs{w}}$ (for some $w \in \{0,1\}^{<n}$, and its output is a single bit that distributes like $\mu|^{w\times \{0,1\}^{n-\abs{w}}}_{\abs{w}+1}$.
\end{definition}
In other words, a marginal prefix oracle call draws a prefix conditional sample and then ignores all but the first free bit of the sample.

We can convert any prefix conditional sampling (including an unconditional sampling, in which the condition corresponds to the empty prefix) to a (random) sequence of marginal prefix samples whose length is the same as the number of free bits. Let $X_1,\ldots,X_n$ be the random variables for which the result sample is the concatenation $x = X_1\cdots X_n$. Let $w \in \{0,1\}^{<n}$ of length $k-1$ ($1 \le k \le n$) be the prefix to condition on. For $1 \le i \le k-1$, $X_i$ is the fixed value $w_i$. For $k \le i \le n$, once $X_1,\ldots,X_{i-1}$ are determined, we can sample $X_i$ as the first bit of a sample conditioned on the prefix $X_1 \cdots X_{i-1}$.

\subsection{The upper bound (Section \ref{sec:dklzation})}
We show that we can estimate a single marginal probability using $m$ samples such that the expected KL-divergence of the estimation from the true marginal probability is bounded by $1/m$. By the chain rule, if we estimated all $2^n - 1$ marginal probabilities independently, then the expected KL-divergence of the result distribution from the input distribution would be bounded by $n/m$. We choose $m=n/\eps^2$ to have an $O(\eps^2)$-$\dkl$-close distribution in expectation at the cost of $O(n/\eps^2)$ prefix samples per marginal estimation. Since the marginal estimations are fully independent, given an element $x$ we can estimate only its $n$ marginals at the cost of $O(n^2/\eps^2)$ prefix samples.

\subsection{The lower bound (Section \ref{sec:lower-bound-estimation}, Appendix \ref{apx:ad-hoc-task})}

Recall that the input for the algorithm is pair of (1) a distribution over $\Omega = \{0,1\}^n$ and (2) an element $x \in \Omega$, and the output should be a number in the range $(1 \pm \eps)\mu(x)$ with probability $2/3$.

We first describe the construction. We define two internal parameters, $\delta=\sqrt{2}\eps/\sqrt{n}$ and $r=\paperZcoefZr/\sqrt{n}$. For every $w \in \{0,1\}^{<n}$ we draw $s_w \in \{+1, -1\}$ uniformly and independently. We define the two distributions $\mu$ and $\mu'$ through its marginals. For every $w \in \{0,1\}^n$, we define:
\begin{itemize}
    \item $\mu(1|w) = \frac{1}{2}(1 + s_w \delta)$, so that $\Pr_\mu[x] = 2^{-n} \prod_{i=1}^n (1 + (-1)^{1-x_i} s_{x_{1,\ldots,i-1}} \delta)$.
    \item $\mu'(1|w) = \frac{1}{2}(1 - s_w r)$, so that $\Pr_{\mu'}[x] = 2^{-n} \prod_{i=1}^n (1 + (-1)^{x_i} s_{x_{1,\ldots,i-1}} r)$.
\end{itemize}

We define two distributions over inputs, based on the choice of $s_w$ variables:
\begin{itemize}
    \item $D_\mathrm{yes}$: the input provided to the algorithm is $(\mu,x)$, where $x$ is drawn uniformly.
    \item $D_\mathrm{no}$: the input provided to the algorithm is $(\mu,x)$, where $x$ is drawn according to $\mu'$.
\end{itemize}

The construction guarantees different concentration ranges for $\mu(x)$, as stated in the following lemma.
\begin{lemma}[See Lemma \ref{lemma:div-pH-pL}, \ref{lemma:dyes-concentration-ge-pH}, \ref{lemma:dno-concentration-le-pL}]
    For every sufficiently large $n \ge N_0$ and sufficiently small $0 < \eps \le \eps_0$ there exist $p_L$ and $p_H$ for which:
    \begin{itemize}
        \item $(1+\eps)p_L < (1-\eps)p_H$.
        \item If $(\mu,x)$ is drawn from $D_\mathrm{yes}$, then $\mu(x) \ge p_H$ with probability at least $1 - e^{-7.95}$.
        \item If $(\mu,x)$ is drawn from $D_\mathrm{no}$, then $\mu(x) \le p_L$ with probability at least $e^{-7.38}$.
    \end{itemize}
\end{lemma}

To prove a lower bound, we recall the meaning of a good estimation algorithm:
\begin{itemize}
    \item The set of ``good'' elements has mass $1-O(\eps)$ according to the input distribution.
    \item Given a ``good'' element $x$, the output is in the range $(1 \pm \eps)\mu(x)$ with probability at least $\frac{2}{3}$.
\end{itemize}
Note that the success probability in second requirement can be strengthen to $1-c$ (instead of $2/3$) at the cost of $O(\log (1/c))$, which is asymptotically ignorable when $c$ is a constant.

Consider an estimation algorithm $\mathcal A$ that can estimate $\mu(x)$ within a $(1 \pm \eps)$-factor with probability at least $1 - e^{-20}$, unless $x$ belongs to a small set ``bad'' elements whose mass is bounded by $e^{-20}$. In this setting,
\begin{eqnarray*}
    \dtv(\mathcal A(D_\mathrm{yes}), \mathcal A(D_\mathrm{no}))
    &\ge& \Pr\left[\mathcal A(D_\mathrm{yes}) \ge (1-\eps)p_H\right] - \Pr\left[\mathcal A(D_\mathrm{yes}) \ge (1-\eps)p_H\right] \\
    &\ge& (e^{-7.38} - e^{-7.95}) - 4e^{-20}
    \ge e^{-8.25}
\end{eqnarray*}


Therefore, if an algorithm makes too few queries to distinguish between $D_\mathrm{yes}$ and $D_\mathrm{no}$ with total-variation distance at least $e^{-8.25}$, then it cannot be a good estimation algorithm. It remains to describe the lower bound for distinguishing between $D_\mathrm{yes}$ and $D_\mathrm{no}$ using prefix queries.

%

Let $I_x$ be the set of prefixes (elements in $\{0,1\}^{<n}$) that are not prefixes of $x$. When $x$ is explicitly given as an input, $\mu(x)$ is fully determined by the random sequence $(s_{x_{1,\ldots,i-1}})_{1 \le i \le n}$, which may depend on the choice of $x$. Also, note that if $x$ is drawn uniformly, then $\mu(x)$ is independent of the assignment of all $s_w$s for $w \in I_x$. We observe that this independence is preserved even if $x$ is drawn from $\mu'$ (Lemma \ref{lemma:bit-by-bit-rest-is-independent}).

Based on this analysis, we can define a random sequence $p_i = \mu(x_{1,\ldots,i} | x_{1,\ldots,i-1})$ (for $1 \le i \le n$).
\begin{itemize}
    \item For $D_\mathrm{yes}$, $p_i = (1 + \delta)/2$ with probability $1/2$ and otherwise $p_i = (1 - \delta)/2$.
    \item For $D_\mathrm{no}$, $p_i = (1 + \delta)/2$ with probability $(1 - r)/2$ and otherwise $p_i = (1 - \delta)/2$.
\end{itemize}
Clearly, distinguishing between $D_\mathrm{yes}$ and $D_\mathrm{no}$ cannot be easier than distinguishing between the sequences $(p_1,\ldots,p_n)$ drawn from $D_\mathrm{yes}$ and from $D_\mathrm{no}$.

To show a lower bound for distinguishing between the $p_i$-sequence inputs, we use the following model that has been introduced in \cite{lv19uncertainty}: the input is a sequence of $n$ coins, and in every step, the algorithm can choose a coin and sample it. Alternatively, the input can be seen as a sequence of inaccessible probabilities $p_1,\ldots,p_n \in [0,1]$, and in every step, the algorithm can choose an index $1 \le i \le n$ and sample a single bit that distributes like $\Ber(p_i)$.

We use the following reduction for simulating the coin model using the prefix sampling model. The input of the distinguishing task corresponds to the sequence of marginals of a given random element $x$ (that is, $p_1,\ldots,p_n$). Every prefix conditional sample translates into $O(1)$ steps in the coin model in expectation (see Lemma \ref{lemma:expected-number-of-effective-marginal-samples}). 

We can use the following result regarding a closely related task to deduce an $\Omega(1/r^2 \delta^2) = \Omega(n^2/\eps^2)$ lower bound. However, we explicitly prove hardness in Appendix \ref{apx:ad-hoc-task} for completeness.
\begin{lemma}[Rephrase of Theorem 3 in \cite{lv19uncertainty}]
    \label{lemma:lv21-coin-lower-bound}
    Assume that for every $1 \le i \le n$, $p_i \ge (1 + \delta)/2$ or $p_i \le (1 - \delta)/2$. In this setting, estimating the fraction of positively-biased ($p_i \ge (1+\delta)/2$) within a $\pm r$-additive error requires $\Omega(1/r^2 \delta^2)$ samples.
\end{lemma}

\subsection{Parameter table}
Different sections use different parameter constraints. The upper bound uses $n$ and $\eps$. The lower bound for the sequence-distinguishing task uses $n$, $\delta$ and $r$, whereas the lower bound for estimation uses $n$ and $\eps$ and explicitly defines $\delta$, $r$.

\begin{table}[H]
    \centering
    \begin{tabularx}{\textwidth}{l|XXXX}
        \textbf{Section} & $n$ & $\eps$ & $\delta$ & $r$ \\
        \hline
        \textbf{\ref{sec:dklzation} Upper bound} & $n \ge 1$ & $\eps > 0$ & N/A & N/A \\
        \textbf{\ref{sec:lower-bound-estimation} Lower bound} & $n \ge 4.5 \cdot 10^{16}$& $0 < \eps \le e^{-11}$ & $\sqrt{2}\eps/\sqrt{n}$ & $\paperZcoefZr/\sqrt{n}$ \\
        \textbf{\ref{apx:ad-hoc-task} Sequence distinguishing} & by context & N/A & $0 < \delta < 1/3$ & $0 \le r < 1/12$
    \end{tabularx}
\end{table}

\section{The simulation algorithm}
\label{sec:dklzation}

In this section we describe our simulation algorithm in the prefix conditional sampling model for an input distribution $\mu$ over $\Omega = \{0,1\}^n$.

Before we introduce our algorithm, we describe the common method (which is compatible with \cite{kmp25}). For every $1 \le i \le n$ and every $w \in \{0,1\}^{i-1}$, we use $O(n^2 \log (n/\eps) / \eps^2)$ samples to estimate the marginal probability $\mu(x_i = 1 | w)$ within a $\left(1 \pm \frac{\eps}{\sqrt{n}}\right)$-multiplicative factor with probability at least $1 - O(\eps/n)$ (unless it is too close to zero or one, and in this case we estimate it as zero or one with probability $1 - O(\eps/n)$).

Along the path from the root to a given leaf there are $n$ edges, but errors of different directions (overshooting and undershooting) mostly cancel each other. Hence, with high probability, the accumulated error (when multiplying all $n$ errors) in the range is $\left(1 \pm \frac{\eps}{\sqrt{n}}\right)^{\pm \Theta(\sqrt{n})} = 1 \pm O(\eps)$. By the union bound, the probability to obtain good estimations in all edges is at least $1 - O(\eps)$. Overall, all elements have their mass estimated within $(1 \pm O(\eps))$-multiplicative error, except for a set of elements whose expected mass is $O(\eps)$ as well. This implies that the expected total-variation distance of $\mu$ and the simulating distribution is $O(\eps)$. The cost of querying the simulating distribution is $O(n^3 \log (n/\eps) / \eps^2)$, since we have to query $n$ edges at the cost of $O(n^2 \log (n/\eps) / \eps^2)$ prefix samples per edge.

Our simulation algorithm uses the chain rule for $\dkl$ (Lemma \ref{lemma:cheatsheet:chain-rule-dkl}). Similarly to \cite{bc18, afl24subcube}, repeatedly applying the chain rule results in a bitwise identity.
\begin{lemma}[Repeated chain rule for $\dkl$] \label{lemma:dkl-repeated-chain-rule}
    Let $\mu$ and $\tau$ be two distributions over $\{0,1\}^n$. In this setting, $\DKL{\mu}{\tau} = \sum_{i=1}^n \E_{w\sim \mu|_{1,\ldots,i-1}}\left[\DKL{\mu|_i^{\{w\} \times \{0,1\}^{n-i}}}{\tau|_i^{\{w\} \times \{0,1\}^{n-i}}} \right]$.
\end{lemma}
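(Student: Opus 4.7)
The plan is to prove the repeated chain rule by induction on $n$, using the single-step chain rule (Lemma \ref{lemma:cheatsheet:chain-rule-dkl}) as the recursive step.

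For the base case $n = 1$, the right-hand side has a single term (for $i = 1$, with $w$ the empty string): the conditioning is trivial, the projection onto the first coordinate is the identity on $\{0,1\}$, so the term reduces to $\DKL{\mu}{\tau}$ itself.

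For the inductive step, view $\{0,1\}^n$ as $\{0,1\} \times \{0,1\}^{n-1}$ and invoke Lemma \ref{lemma:cheatsheet:chain-rule-dkl} with $A = \{0,1\}$ (the first coordinate) and $B = \{0,1\}^{n-1}$:
\[
\DKL{\mu}{\tau}
= \DKL{\mu|_1}{\tau|_1}
+ \E_{b \sim \mu|_1}\!\left[\DKL{\mu|^{\{b\} \times \{0,1\}^{n-1}}}{\tau|^{\{b\} \times \{0,1\}^{n-1}}}\right].
\]
The first summand is exactly the $i = 1$ term of the target sum. Each conditional $\mu|^{\{b\} \times \{0,1\}^{n-1}}$ is supported on a slice with the first coordinate fixed to $b$, so it canonically identifies (via projection onto the last $n-1$ coordinates) with a distribution $\nu_b^\mu$ on $\{0,1\}^{n-1}$; this identification is a measure-preserving bijection between the supports and preserves the KL-divergence. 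Apply the inductive hypothesis to $\nu_b^\mu$ and $\nu_b^\tau$, yielding a sum over $i' = 1, \ldots, n-1$ of expectations over $w' \sim \nu_b^\mu|_{1,\ldots,i'-1}$ of the projected-conditional KL terms.

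Under the reindexing $i = i' + 1$ and $w = b w' \in \{0,1\}^{i-1}$, each inner term coincides with the corresponding $i \in \{2,\ldots,n\}$ summand of the target expression, because conditioning $\nu_b^\mu$ on its first $i'-1$ coordinates being $w'$ is the same as conditioning $\mu$ on its first $i-1$ coordinates being $bw'$, and the projection onto the $i'$-th coordinate of $\nu_b^\mu$ is the projection onto the $i$-th coordinate of $\mu$. The nested expectation over $b \sim \mu|_1$ followed by $w' \sim \nu_b^\mu|_{1,\ldots,i-2}$ collapses, by the definition of conditional and projected distributions, to a single expectation over $w \sim \mu|_{1,\ldots,i-1}$, completing the induction. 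The main (and mild) obstacle is purely notational: verifying that the two sampling schemes for $w$ agree and that the iterated conditioning-then-projection on $\nu_b^\mu$ matches the direct conditioning-then-projection on $\mu$; both are immediate from unfolding the definitions.
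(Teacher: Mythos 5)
Your proof is correct, and it takes the route the paper itself gestures at but does not spell out: the paper states this lemma without proof, remarking only that it follows from ``repeatedly applying'' Lemma~\ref{lemma:cheatsheet:chain-rule-dkl} in the style of prior works. Your induction (base case $n=1$, then peel off the first coordinate with $A = \{0,1\}$, $B = \{0,1\}^{n-1}$, identify the conditional slice with a distribution over $\{0,1\}^{n-1}$, invoke the inductive hypothesis, and reindex) is precisely the formal content of that remark, and the key bookkeeping step — that sampling $b \sim \mu|_1$ and then $w' \sim \nu_b^\mu|_{1,\ldots,i-2}$ is the same as sampling $w = bw' \sim \mu|_{1,\ldots,i-1}$, by the definition of conditional and projected distributions — is exactly the collapse that needs to be checked, and you check it. One minor note, not a flaw in your argument: the exponent in the paper's lemma statement $\{w\}\times\{0,1\}^{n-i}$ (with $|w| = i-1$) is off by one, since the condition set should live in $\{0,1\}^n$; you implicitly read it with the intended meaning (the first $i-1$ coordinates fixed to $w$), which is the correct interpretation.
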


While the total-variation distance of Bernoulli distributions is linear, Lemma \ref{lemma:cheatsheet:dkl-quadratic-bound} states that the KL-divergence has a $\chi^2$-like behavior, which is sublinear in many cases.

The following lemma is the core of our algorithm. It states that the expected KL-divergence of an estimated probability mass is linear in the number of samples, rather than its square-root as in total-variation distance.

\begin{lemma} \label{lemma:good-expected-dkl-bernoulli}
    Let $p \in [0,1]$ and $m \ge 1$. In this setting, $\E_{t \sim \Bin(m,p)}\left[\dkl(\frac{t}{m}, p)\right] \le \frac{1}{m}$.
\end{lemma}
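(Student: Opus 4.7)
The plan is direct: combine the quadratic upper bound on KL divergence of Bernoullis (Lemma \ref{lemma:cheatsheet:dkl-quadratic-bound}) with the standard variance identity for the binomial distribution. This turns the claim into a one-line computation, since the quadratic bound is specifically engineered so that the denominator $p(1-p)$ cancels exactly against $\Var(t/m) = p(1-p)/m$.

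First I would dispose of the degenerate cases $p \in \{0,1\}$. In these cases $t$ equals $0$ or $m$ deterministically, so $t/m = p$ and $\dkl(t/m, p) = 0$, which satisfies the claimed bound trivially.

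For $p \in (0,1)$, apply Lemma \ref{lemma:cheatsheet:dkl-quadratic-bound} pointwise with first argument $t/m$ and second argument $p$, which is valid for every realization $t \in \{0, 1, \ldots, m\}$ since $p(1-p) > 0$:
$$\dkl\!\left(\tfrac{t}{m}, p\right) \;\le\; \frac{(t/m - p)^2}{p(1-p)}.$$
Taking expectation over $t \sim \Bin(m,p)$ and using $\E[t] = mp$, $\Var[t] = mp(1-p)$, the numerator becomes
$$\E\!\left[(t/m - p)^2\right] \;=\; \Var(t/m) \;=\; \Var(t)/m^2 \;=\; p(1-p)/m.$$
Substituting yields $\E[\dkl(t/m, p)] \le \frac{p(1-p)/m}{p(1-p)} = \frac{1}{m}$, as claimed.

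There is essentially no obstacle here; the only subtlety is the case split to avoid dividing by $p(1-p)$ when $p \in \{0,1\}$. It is worth noting why this beats the naive $\Theta(1/\sqrt{m})$ bound one would get from total-variation plus Pinsker: the $\chi^2$-style behavior of $\dkl$ near the diagonal (Lemma \ref{lemma:cheatsheet:dkl-quadratic-bound}) is quadratic in the estimation error, and therefore matches the $\Theta(1/m)$ \emph{variance} rather than the $\Theta(1/\sqrt{m})$ standard deviation of the empirical mean. This quadratic-vs-linear gain is precisely what allows the simulation analysis in Section \ref{sec:dklzation} to save a factor relative to the total-variation based approach.
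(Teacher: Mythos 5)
Your proof is correct and follows essentially the same route as the paper: handle $p\in\{0,1\}$ separately, then apply the quadratic bound (Lemma \ref{lemma:cheatsheet:dkl-quadratic-bound}) pointwise and let the binomial variance $mp(1-p)$ cancel the denominator $p(1-p)$. The paper merely introduces an intermediate multiplicative-error variable $\delta = \hat p/p - 1$ before taking the expectation, which your version streamlines away without loss.
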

\begin{proof}
    For $p=0$ and $p=1$, $\Pr[t=mp]=1$, and hence $\E[\dkl(\frac{t}{m}, p)] = 0 \le \frac{1}{m}$. For the rest of the proof we assume that $0 < p < 1$. In particular, we can safely divide by $p$ and by $1-p$.

    Let $\hat{p} = \frac{t}{m}$ and $\delta = \frac{\hat{p}}{p} - 1$ (so that $\hat{p} = (1 + \delta)p$). By Lemma \ref{lemma:cheatsheet:dkl-quadratic-bound}, $\dkl(\hat{p},p) \le \frac{(\hat{p}-p)^2}{p(1-p)} = \frac{\delta^2 p}{1-p}$. Therefore,
    \begin{eqnarray*}
        \E_t[\dkl(t/m,p)]
        \le \frac{p}{1-p} \cdot \E_t\left[\left(\frac{t/m}{p} - 1\right)^2\right]
        &=& \frac{p}{(mp)^2 (1-p)} \cdot \E_t\left[\left(t - \E[t]\right)^2\right] \\
        &=& \frac{1}{m^2 p (1-p)} \Var[t]
        = \frac{1}{m^2 p (1-p)} p(1-p)m
        = \frac{1}{m}
    \end{eqnarray*}
\end{proof}

We show two implementations of $(\dkl,\delta)$-simulation. The first, shown in Algorithms \ref{fig:alg:preprocess-simulation-init}, \ref{fig:alg:preprocessed-simulation-query} and \ref{fig:alg:preprocessed-simulation-sample} as a conceptual introduction, is extremely inefficient, but it is quite easy to show its correctness as a learning algorithm. The second, in Algorithms \ref{fig:alg:simulation-init}, \ref{fig:alg:simulation-query} and \ref{fig:alg:simulation-sample}, is efficient, and we prove its correctness by coupling its behavior with the behavior of the first implementation.

Both implementations use a common procedure $\procnameZestZsimulationZedgeHREF$ (Algorithm \ref{fig:alg:simulation-estimate-edge}) that estimates the probability mass of a single edge by drawing $m = \ceil{n/\delta}$ prefix conditional samples and averaging the first free bit.

\begin{algo}
    \procname{$\procnameZestZsimulationZedge (n, \mu, \delta; w, b)$}
    \label{fig:alg:simulation-estimate-edge}
    \alginput{$\mu$ is the distribution over $\Omega=\{0,1\}^n$}
    \alginput{$\delta$ is the accuracy parameter}
    \alginput{$w \to wb$ is the edge to estimate}
    \algoutput{An estimation of $\mu(b | w)$}
    \begin{code}
        \algitem Let $m \gets \ceil{n / \delta}$.
        \algitem Draw $x_1,\ldots,x_m$ independently from $\mu$, conditioned on $x_{1,\ldots,\abs{w}} = w$.
        \algitem Let $k$ be the number of samples $x_1,\ldots,x_m$ in which the $\abs{w}+1$st bit is $b$.
        \algitem Return $k/m$.
    \end{code}
\end{algo}

The first implementation has the learn-then-read concept. We exhaustively build the distribution tree of a distribution $\tilde{\mu}$ simulating the input distribution $\mu$, and then query $\tilde{\mu}$ at individual elements whenever needed.

The initialization procedure, \procnameZpreprocessZnewZsimulationHREF, estimates all right edges (of the form $w \to w1$) independently.

\begin{algo}
    \procname{$\procnameZpreprocessZnewZsimulation (n,\mu,\delta)$}
    \label{fig:alg:preprocess-simulation-init}
    \alginput{$\mu$ is the distribution over $\Omega = \{0,1\}^n$ to simulate}
    \alginput{$\delta$ is the accuracy parameter}
    \algoutput{The simulating distribution $\tilde{\mu}$}
    \algcontract{Accuracy}{$\E\left[\DKL{\tilde{\mu}}{\mu}\right] \le \delta$}
    \begin{code}
        \algitem Initialize the marginal function $f : \{0,1\}^{<n} \to [0,1]$, with all entries initially undefined.
        \begin{For}{$w \in \{0,1\}^{<n}$ in an arbitrary order}
            \algitem Run and write $f(w) \gets \procnameZestZsimulationZedgeHREF(n,\mu,\delta; w,1)$.
        \end{For}
        \algitem Let $\tilde{\mu}$ be the distribution represented by the marginal function $f$. \algcomment (Def.\ \ref{def:marginal-tree}, Obs.\ \ref{obs:marginal-tree-representation})
        \algitem Return $\tilde{\mu}$.
    \end{code}
\end{algo}

The query procedure \procnameZqueryZpreprocessedZsimulationHREF and the sample procedure \procnameZsampleZpreprocessedZsimulationHREF just access the explicitly-written learned distribution $\tilde{\mu}$.

\begin{algo}
    \procname{$\procnameZqueryZpreprocessedZsimulation (\tilde{\mu},x)$}
    \label{fig:alg:preprocessed-simulation-query}
    \alginput{$\tilde{\mu}$ is the learned distribution returned by \procnameZpreprocessZnewZsimulation}
    \algoutput{The probability mass of $x$ according to $\tilde{\mu}$}
    \begin{code}
        \algitem Return $\tilde{\mu}(x)$.
    \end{code}
\end{algo}

\begin{algo}
    \procname{$\procnameZsampleZpreprocessedZsimulation (\tilde{\mu})$}
    \label{fig:alg:preprocessed-simulation-sample}
    \alginput{$\tilde{\mu}$ is the learned distribution returned by \procnameZpreprocessZnewZsimulation}
    \algoutput{A sample $x$ drawn from the learned distribution, and its probability mass $\tilde{\mu}(x)$}
    \begin{code}
        \algitem Draw $x$ from $\tilde{\mu}$.
        \algitem Return $(x, \tilde{\mu}(x))$.
    \end{code}
\end{algo}

\begin{lemma} \label{lemma:dklization-core}
    The triplet (\procnameZpreprocessZnewZsimulationHREF, \procnameZqueryZpreprocessedZsimulationHREF, \procnameZsampleZpreprocessedZsimulationHREF) is a $(\dkl,\delta)$-simulation of its input distribution.
\end{lemma}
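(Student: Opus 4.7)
The plan is to establish the two components of a $(\dkl,\delta)$-simulation. The interface contracts are immediate: $\procnameZpreprocessZnewZsimulationHREF$ explicitly constructs a concrete marginal tree, hence a single distribution $\tilde{\mu}$, and both $\procnameZqueryZpreprocessedZsimulationHREF$ and $\procnameZsampleZpreprocessedZsimulationHREF$ merely read from this fixed $\tilde{\mu}$. Consistency, realization, and resampling therefore hold trivially, and the hypothetical ``secret $\tilde{\mu}$'' is literally the one written by the initialization. So the only real content is to show $\E[\DKL{\tilde{\mu}}{\mu}] \le \delta$, where the expectation is over the random samples used during initialization.

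The main step is to apply the repeated chain rule (Lemma \ref{lemma:dkl-repeated-chain-rule}) with the roles of $\mu$ and $\tau$ set so that the outer sampling prefix comes from $\tilde{\mu}$ and the inner bit-marginals are compared to those of $\mu$. Since both conditional marginals on the $i$-th bit are Bernoulli with parameters $\tilde\mu(1|w)$ and $\mu(1|w)$, each inner KL-divergence equals $\dkl(\tilde{\mu}(1|w),\mu(1|w))$. Expanding the outer expectation as a sum over prefixes yields
\[
    \DKL{\tilde{\mu}}{\mu} \;=\; \sum_{i=1}^{n}\sum_{w \in \{0,1\}^{i-1}} \tilde{\mu}(w) \cdot \dkl\!\bigl(\tilde{\mu}(1|w),\,\mu(1|w)\bigr).
\]

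Now I would take expectations and exploit the independence structure of the estimation. The value $f(w)=\tilde\mu(1|w)$ is computed from its own fresh batch of $m=\lceil n/\delta \rceil$ prefix-conditional samples, so the family $\{f(w)\}_{w \in \{0,1\}^{<n}}$ is mutually independent. In particular, $\tilde{\mu}(w)$, which by Observation \ref{obs:marginal-tree-representation} depends only on the estimates at \emph{strict} prefixes of $w$, is independent of the single term $f(w)$ appearing inside $\dkl(\tilde\mu(1|w),\mu(1|w))$. Hence the expectation of the product factorizes. Unbiasedness of the empirical mean gives $\E[f(w)]=\mu(1|w)$ and, by independence along the path from the root to $w$, $\E[\tilde\mu(w)]=\mu(w)$. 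The per-edge bound of Lemma \ref{lemma:good-expected-dkl-bernoulli} gives $\E[\dkl(f(w),\mu(1|w))] \le 1/m$. Combining these,
\[
    \E[\DKL{\tilde{\mu}}{\mu}] \;\le\; \sum_{i=1}^{n}\sum_{w \in \{0,1\}^{i-1}} \mu(w) \cdot \frac{1}{m} \;=\; \frac{n}{m} \;\le\; \delta,
\]
as required. The only subtle point is the factorization step: the reader must notice that $\tilde\mu(w)$ is a function of estimates strictly above $w$, while the inner $\dkl$ involves only the estimate \emph{at} $w$; once this is pointed out, the rest is a mechanical composition of the chain rule, unbiasedness, and Lemma \ref{lemma:good-expected-dkl-bernoulli}.
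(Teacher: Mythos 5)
Your proof is correct and takes essentially the same approach as the paper's: the repeated chain rule (Lemma~\ref{lemma:dkl-repeated-chain-rule}), independence of $\tilde\mu(w)$ from the edge estimate at $w$ owing to disjoint sample batches, and the per-edge bound of Lemma~\ref{lemma:good-expected-dkl-bernoulli}. The one small difference is in the final summation step: you invoke unbiasedness to get $\E[\tilde\mu(w)]=\mu(w)$ and then use $\sum_{w\in\{0,1\}^{i-1}}\mu(w)=1$, whereas the paper sidesteps unbiasedness entirely by using the deterministic identity $\sum_{w\in\{0,1\}^{i-1}}\tilde\mu(w)=1$, which is slightly cleaner and would remain valid even if the per-edge estimator were biased.
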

\begin{proof}
    Since the querying and sampling algorithms are straightforward, we only have to show that the initialization procedure is correct. That is, $\E\left[\DKL{\tilde{\mu}}{\mu}\right] \le \delta$.
    
    Noting that $\tilde{\mu}$ is defined using a tree, we abuse the notation of $\tilde{\mu}(w)$ to denote ``the probability to reach a leaf in the subtree represented by $w$''. More formally, for $w \in \{0,1\}^{<n}$, let $\tilde{\mu}(w) = \Pr_{x \sim \tilde{\mu}}\left[x \in w \times \{0,1\}^{n-\abs{w}}\right]$.
    
    For every $1 \le i \le n$ and $w \in \{0,1\}^{i-1}$, let $X_w = \DKL{\tilde{\mu}|^{w\times\{0,1\}^{n-i}}_i}{ \mu|^{w\times\{0,1\}^{n-i}}_i}$. By the repeated chain rule (Lemma \ref{lemma:dkl-repeated-chain-rule}), 
    \begin{eqnarray*}
        \DKL{\tilde{\mu}}{\mu}
        &=& \sum_{i=1}^n \E_{w \sim \tilde\mu|_{1,\ldots,i-1}}[X_w] \\
        &=& \sum_{i=1}^n \sum_{w \in \{0,1\}^{i-1}} \tilde\mu(w) X_w
    \end{eqnarray*}
    
    Observe that, for every $w \in \{0,1\}^{<n}$, $\tilde{\mu}(w)$ and $X_w$ are independent, since they involve disjoint sets of edges, and therefore:
    \begin{eqnarray*}
        \E_{\text{choice of $\tilde{\mu}$}} \left[\DKL{\tilde{\mu}}{\mu}\right]
        &=& \sum_{i=1}^n \sum_{w\in \{0,1\}^{i-1}} \E_{\text{choice of $\tilde{\mu}$}}[\tilde{\mu}(w) X_w] \\
        \text{[Since $\tilde{\mu}(w)$, $X_w$ are independent]} &=& \sum_{i=1}^n \sum_{w\in \{0,1\}^{i-1}} \E_{\text{choice of $\tilde{\mu}$}}[\tilde{\mu}(w)] \E_{\text{choice of $\tilde{\mu}$}}[X_w] \\
        \text{[Lemma \ref{lemma:good-expected-dkl-bernoulli}]} &\le& \sum_{i=1}^n \sum_{w\in \{0,1\}^{i-1}} \E_{\text{choice of $\tilde{\mu}$}}[\tilde{\mu}(w)] \cdot \frac{1}{m} \\
        \text{[Linearity of expectation]} &=& \frac{1}{m} \sum_{i=1}^n \E_{\text{choice of $\tilde{\mu}$}} \left[ \sum_{w\in \{0,1\}^{i-1}} \tilde{\mu}(w) \right]
        = \frac{1}{m} \cdot \sum_{i=1}^n 1
        = \frac{n}{m}
    \end{eqnarray*}

    Since Algorithm \ref{fig:alg:preprocess-simulation-init} uses $m = \ceil{n/\delta}$, $\E\left[\DKL{\tilde{\mu}}{\mu}\right] \le \delta$.
\end{proof}

The second implementation has the lazy-construction concept. Since the estimations of Algorithm \ref{fig:alg:preprocess-simulation-init} are independent, it suffices to evaluate an edge when it (or its sibling edge) lies on a path of a query for the first time.

The initialization procedure \procnameZinitializeZnewZsimulationHREF initializes an empty memorization object.

\begin{algo}
    \procname{$\procnameZinitializeZnewZsimulation (n,\mu,\delta)$}
    \label{fig:alg:simulation-init}
    \alginput{$\mu$ is the distribution over $\Omega = \{0,1\}^n$ to simulate}
    \alginput{$\delta$ is the accuracy parameter}
    \begin{code}
        \algitem Return $(n,\mu,\delta,\mathit{hist})$, where $\mathit{hist}$ is an empty list.
    \end{code}
\end{algo}

The sample and query procedures internally call the \procnameZaccessZsimulationZedgeHREF procedure for an edge-by-edge execution.

The \procnameZaccessZsimulationZedgeHREF procedure allows a memorized estimation of an edge. If the requested edge has never been estimated, then the procedure estimates it using \procnameZestZsimulationZedgeHREF and then memorizes the result for both that edge and its sibling edge.

\begin{algo}
    \procname{$\procnameZaccessZsimulationZedge (\mathit{sim}, w, b)$}
    \label{fig:alg:simulation-access-edge}
    \alginput{$\mathit{sim}$: the simulation object obtained by \procnameZinitializeZnewZsimulationHREF}
    \alginput{$w \to wb$ is edge to access}
    \algcontract{Side effects}{The $\mathit{hist}$ component of $\mathit{sim}$ may change}
    \begin{code}
        \algitem Let $n,\mu,\delta,\mathit{hist}$ be the components of $\mathit{sim}$ as a $4$-tuple.
        \begin{If}{$\mathit{hist}$ contains $((w,b),\mathit{ans})$}
            \algitem Let $f \gets \mathit{ans}$.
        \end{If}
        \begin{Else}
            \algitem Let $f \gets \procnameZestZsimulationZedgeHREF(n,\mu,\delta;w,b)$.
            \algitem Add $((w,b), f)$, $((w,1-b), 1 - f)$ to $\mathit{hist}$.
        \end{Else}
        \algitem Return $f$.
    \end{code}
\end{algo}

The query and sample procedures execute an edge-by-edge logic. The edges are estimated through \procnameZaccessZsimulationZedgeHREF.

\begin{algo}
    \procname{$\procnameZsimulationZquery (\mathit{sim},x)$}
    \label{fig:alg:simulation-query}
    \alginput{$\mathit{sim}$: the simulation object obtained by \procnameZinitializeZnewZsimulationHREF}
    \algoutput{The probability mass of $x$ according to the secret distribution}
    \algcontract{Side effects}{The $\mathit{hist}$ component of $\mathit{sim}$ may change}
    \begin{code}
        \algitem Let $n,\mu,\delta,\mathit{hist}$ be the components of $\mathit{sim}$ as a $4$-tuple.
        \algitem Set $p_x \gets 1$.
        \begin{For}{$i$ from $1$ to $n$}
            \algitem Let $f_{x,i} \gets \procnameZaccessZsimulationZedgeHREF(\mathit{sim},x_{1,\ldots,i-1}, x_i)$.
            \algitem Set $p_x \gets p_x \cdot f_{x,i}$.
        \end{For}
        \algitem Return $p_x$.
    \end{code}
\end{algo}

\begin{algo}
    \procname{$\procnameZsimulationZsample(\mathit{sim})$}
    \label{fig:alg:simulation-sample}
    \alginput{$\mathit{sim}$: the simulation object obtained by \procnameZinitializeZnewZsimulationHREF}
    \algoutput{A sample $x$ drawn from the secret distribution, and its probability mass according to that distribution}
    \algcontract{Side effects}{The $\mathit{hist}$ component of $\mathit{sim}$ may change}
    \begin{code}
        \algitem Let $n,\mu,\delta,\mathit{hist}$ be the components of $\mathit{sim}$ as a $4$-tuple.
        \algitem Set $p \gets 1$.
        \algitem Set $x \gets \text{the empty word}$.
        \begin{For}{$i$ from $1$ to $n$}
            \algitem Let $f_i \gets \procnameZaccessZsimulationZedgeHREF(\mathit{sim},x,1)$ .
            \algitem Draw $b_i \in \{0,1\}$ according to $\Ber(f_i)$.
            \algitem Set $x \gets xb_i$. \algcomment Append $b$ to the end of $x$.
            \begin{If}{$b_i = 1$}
                \algitem Set $p \gets p \cdot f_i$.
            \end{If}
            \begin{Else}
                \algitem Set $p \gets p \cdot (1 - f_i)$.
            \end{Else}
        \end{For}
        \algitem Return $(x, p)$.
    \end{code}
\end{algo}

\begin{theorem} \label{th:DKLzation}
    The triplet (\procnameZinitializeZnewZsimulationHREF, \procnameZsimulationZqueryHREF, \procnameZsimulationZsampleHREF) has the exact behavior (with respect to the user) as the triplet (\procnameZpreprocessZnewZsimulationHREF, \procnameZqueryZpreprocessedZsimulationHREF, \procnameZsampleZpreprocessedZsimulationHREF). In particular, it is a $(\dkl,\delta)$-simulation of its input distribution.
\end{theorem}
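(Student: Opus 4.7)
The plan is to construct an explicit coupling of the random coins used by the two triplets, under which, with probability $1$, they return identical outputs in response to any sequence of user interactions. The only delicate point is the sibling-symmetrization that \procnameZaccessZsimulationZedgeHREF\ performs --- memoizing both $(w,b)$ and $(w,1-b)$ from a single call to \procnameZestZsimulationZedgeHREF\ --- and showing that this induces the same marginal function as running \procnameZestZsimulationZedgeHREF\ directly on $b=1$ as in \procnameZpreprocessZnewZsimulationHREF.

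For each $w \in \{0,1\}^{<n}$, I would allocate, once and for all, a single i.i.d.\ batch of $m = \ceil{n/\delta}$ samples from the conditional distribution $\mu|^{\{w\}\times \{0,1\}^{n-\abs{w}}}$, and let $K_w$ be the number of those samples whose $(\abs{w}+1)$st bit equals $1$. Both implementations are driven by the same batches: the eager implementation uses the batch for $w$ when the for-loop of \procnameZpreprocessZnewZsimulationHREF\ reaches $w$, while the lazy implementation uses the batch for $w$ at the first call to \procnameZaccessZsimulationZedgeHREF\ involving $w$ (if any). In the eager implementation the marginal $f(w)$ is $K_w/m$ directly. In the lazy implementation, whether the first access requests $b=1$ (in which case \procnameZestZsimulationZedgeHREF\ returns $K_w/m$) or $b=0$ (in which case it returns $(m-K_w)/m$, which is then symmetrized to $K_w/m$ for $(w,1)$), the memoized value for the edge $(w,1)$ is $K_w/m$. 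Hence, under this coupling, the two marginal functions agree on every prefix $w$ that is ever accessed; prefixes that are never accessed cannot affect any user-visible output.

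Given identical marginal functions on all accessed edges, the equivalence of user-visible outputs follows by tracing the pseudocode: \procnameZsimulationZqueryHREF\ multiplies the memoized marginals along the root-to-$x$ path, which by Observation \ref{obs:marginal-tree-representation} equals the mass $\tilde\mu(x)$ that \procnameZqueryZpreprocessedZsimulationHREF\ returns; \procnameZsimulationZsampleHREF\ walks down the tree drawing each bit from $\Ber(f(\cdot))$, which by the definition of a marginal tree yields a sample distributed as $\tilde\mu$ together with its correct mass, matching \procnameZsampleZpreprocessedZsimulationHREF. The ``in particular'' assertion then transfers directly from Lemma \ref{lemma:dklization-core}. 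The principal obstacle is the sibling-symmetrization step described in the second paragraph, and it ultimately reduces to the elementary identity that, under shared randomness, \procnameZestZsimulationZedgeHREF\ invoked with $b=0$ returns exactly $1$ minus its value when invoked with $b=1$.
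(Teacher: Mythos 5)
Your proof is correct and takes essentially the same approach as the paper: both argue that the lazy implementation is distributionally equivalent to the eager one because each edge's estimation is independent and order-irrelevant. Your explicit coupling of the per-prefix sample batches makes rigorous the one point the paper's brief proof glosses over, namely that calling \procnameZaccessZsimulationZedgeHREF\ first with $b=0$ yields the same memoized value at $(w,1)$ as a call with $b=1$ would have, since \procnameZestZsimulationZedgeHREF\ with $b=0$ returns the complementary fraction of the same batch.
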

\begin{proof}
    The interaction of \procnameZsimulationZqueryHREF and \procnameZsimulationZsampleHREF with the simulation object is only done through \procnameZaccessZsimulationZedgeHREF calls. Since in Algorithm \ref{fig:alg:preprocess-simulation-init} the edges are independent, the specific order of assigning their weight is irrelevant. Each call to \procnameZaccessZsimulationZedgeHREF assigns a weight to two sibling edges, independently of other edges. Therefore, the sequential access to the constructed distribution tree through \procnameZaccessZsimulationZedgeHREF is identically distributed as evaluating all edges in advance as in \procnameZpreprocessZnewZsimulationHREF followed by reading the preprocessed data.
\end{proof}

We improve on \cite{kmp25} by applying Theorem \ref{th:DKLzation} to Proposition \ref{prop:est-dtv-eps^2}.

\begin{corollary}
    Let $\mu$ and $\tau$ be two distributions over $\Omega = \{0,1\}^n$. Estimating the total-variation distance of $\mu$ and $\tau$ up to a $\pm \eps$-additive error with probability at least $2/3$ can be done using $O(n^2 / \eps^4)$ prefix conditional samples.
\end{corollary}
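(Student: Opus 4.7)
The plan is to combine Theorem \ref{th:DKLzation}, instantiated with accuracy parameter $\delta = c\eps^2$ for a suitably small absolute constant $c > 0$, with the generic distance-estimation algorithm of Proposition \ref{prop:est-dtv-eps^2}. First, I would initialize two independent $(\dkl, c\eps^2)$-simulations, one for $\mu$ and one for $\tau$, yielding random secret distributions $\tilde\mu$ and $\tilde\tau$ with $\E[\DKL{\tilde\mu}{\mu}] \le c\eps^2$ and $\E[\DKL{\tilde\tau}{\tau}] \le c\eps^2$. By Theorem \ref{th:DKLzation}, each subsequent query or sample to either simulation costs $O(n^2/\eps^2)$ prefix conditional samples and returns the exact probability mass under the respective simulating distribution.

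Next, I would invoke the algorithm of Proposition \ref{prop:est-dtv-eps^2} using the two simulations as the required oracles. Since the simulations provide exact (rather than $(1 \pm O(\eps))$-approximate) masses, the oracle interface of the proposition is trivially satisfied, and $O(1/\eps^2)$ oracle calls yield an estimate $\hat d$ of $\dtv(\tilde\mu, \tilde\tau)$ within $\pm \eps/3$ additive error. Each oracle call triggers at most $O(1)$ calls to the simulation interfaces, each of which costs $O(n^2/\eps^2)$ prefix samples, giving a total of $O(n^2/\eps^4)$ prefix conditional samples.

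To bridge from $\dtv(\tilde\mu, \tilde\tau)$ back to $\dtv(\mu, \tau)$, I would invoke Pinsker's inequality (Lemma \ref{lemma:cheatsheet:pinsker}) together with Jensen's inequality to conclude $\E[\dtv(\tilde\mu, \mu)] \le \sqrt{\E[\DKL{\tilde\mu}{\mu}]/2} = O(\eps \sqrt{c})$, and symmetrically for $\tilde\tau$. Choosing $c$ small enough and applying Markov's inequality separately to each distance bounds $\dtv(\tilde\mu, \mu), \dtv(\tilde\tau, \tau) \le \eps/3$ except with arbitrarily small constant probability. The triangle inequality for total-variation distance then yields $|\dtv(\mu, \tau) - \dtv(\tilde\mu, \tilde\tau)| \le 2\eps/3$, so $\hat d$ is within $\pm \eps$ of $\dtv(\mu, \tau)$.

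The only subtle step is managing three independent randomness sources -- the draws of $\tilde\mu$ and $\tilde\tau$ during initialization, and the internal randomness of the Proposition's algorithm. Each failure event can be made to occur with probability at most $1/9$ (by tightening the absolute constant $c$ on the simulation side, and if needed by a median-of-$O(1)$ boost applied to the proposition's estimator), so a union bound delivers overall success probability at least $2/3$ without affecting the asymptotic $O(n^2/\eps^4)$ sample complexity.
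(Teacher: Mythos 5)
Your proof is correct and follows essentially the same approach as the paper: instantiate Theorem~\ref{th:DKLzation} to get exact-mass oracles for the simulating distributions $\tilde\mu,\tilde\tau$, run the $O(1/\eps^2)$-query estimator of Proposition~\ref{prop:est-dtv-eps^2} on them (each call costing $O(n^2/\eps^2)$ prefix samples), and bridge back to $\dtv(\mu,\tau)$ via Pinsker, Markov, and the triangle inequality. The only cosmetic difference is that you apply Jensen to $\E[\dtv]$ before Markov, whereas the paper applies Markov directly to $\DKL{\tilde\mu}{\mu}$ and then Pinsker pointwise; both are valid and give the same constants up to the choice of $c$.
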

\begin{proof}
    We use Theorem \ref{th:DKLzation} to simulate $\mu$ and $\tau$ using random distributions $\tilde{\mu}$ and $\tilde{\tau}$ such that $\E[\DKL{\tilde{\mu}}{\mu}] < \frac{1}{36}\eps^2$ and $\E[\DKL{\tilde{\tau}}{\tau}] < \frac{1}{36}\eps^2$. After we initialize the simulation at no sample cost, we run a $O(1/\eps^2)$-test of Proposition \ref{prop:est-dtv-eps^2} to estimate the total-variation distance between $\tilde{\mu}$ and $\tilde{\tau}$ within $\pm \frac{1}{3}\eps$-additive error. Such a test appears in \cite{kmp25} (inlined in a bigger subroutine) and in \cite{afl26} (explicitly).

    The complexity of the test is $O(1/\eps^2)$ samples of the simulating distributions, each costing $O(n^2/\eps^2)$ prefix conditional samples of the input distributions.
    
    By Markov's inequality, with probability at least $(7/8)^2$, both $\DKL{\tilde\mu}{\mu}$ and $\DKL{\tilde\tau}{\tau}$ are at most $\frac{2}{9}\eps^2$, and in this case,
    \[  \abs{\dtv(\mu,\tau) - \dtv(\tilde{\mu},\tilde{\tau})}
        \le \dtv(\tilde{\mu},\mu) + \dtv(\tilde{\tau},\tau)
        \le 2 \cdot \frac{1}{3}\eps
    \]
    
    Hence, estimating $\dtv(\tilde{\mu},\tilde{\tau})$ within $\pm \frac{1}{3}\eps$-additive error suffices to deduce a $\pm \eps$-additive error estimation of $\dtv(\mu,\tau)$. If we use $O(1)$ trials to amplify the success probability of estimating $\dtv(\tilde\mu, \tilde\tau)$ to $8/9$, then the success probability is at least $(7/8)^2 \cdot (8/9) > 2/3$.
\end{proof}
In particular, estimating the total-variation distance between two distributions over binary strings using subcube conditional sampling requires $O(n^2 / \eps^4)$ samples, a polynomial gain over the previously best known distance estimation algorithm for subcube conditions (by \cite{kmp25}), whose complexity is $\tilde{O}(n^3 / \eps^5)$ subcube conditional samples.
\section{Lower bound}
\label{sec:lower-bound-estimation}

We show that estimation of the probability mass of individual elements using prefix conditional samples cannot be more efficient than simulating the input distribution as a whole (through Algorithms \ref{fig:alg:simulation-init}, \ref{fig:alg:simulation-query}, \ref{fig:alg:simulation-sample} and Theorem \ref{th:DKLzation}).

Recall that we can fully define a distribution by assigning weights to the edges of the breakdown tree implied by the prefix condition sets.

\begin{definition}[The counter-estimation distributions over inputs]
    Let $\Omega = \{0,1\}^n$ be the domain and $\delta=\sqrt{2}\eps/\sqrt{n}$, $r=\paperZcoefZr/\sqrt{n}$ be internal parameters. For every $w \in \{0,1\}^{<n}$, we draw $s_w \in \{+1, -1\}$ uniformly and independently. We define the input distribution $\mu$, and an additional distribution $\mu'$, through their marginals. For every $w \in \{0,1\}^{<n}$, we define $\mu(x_{\abs{w}+1} = 1 | w) = \frac{1 + s_w \delta}{2}$ and $\mu'(x_{\abs{w}+1} = 1 | w) = \frac{1 - s_w r}{2}$. In $D_\mathrm{yes}$, we draw $x$ uniformly and then return the pair $(\mu,x)$. In $D_\mathrm{no}$, we draw $x \sim \mu'$ and return the pair $(\mu,x)$.
\end{definition}
We insist that $\mu$ is the input distribution in both cases. We only use $\mu'$ for drawing $x$ in $D_\mathrm{no}$.

\begin{lemma} \label{lemma:bit-by-bit-rest-is-independent}
    For every $x \in \{0,1\}^n$, let $I_x$ be the set of prefixes $w \in \{0,1\}^{<n}$ that are not prefixes of $x$. In both $D_\mathrm{yes}$ and in $D_\mathrm{no}$, when conditioned on an explicitly given $x$, the random variables $\{s_{w}\}_{w \in I_x}$ are drawn uniformly and independently from $\{+1, -1\}$.
\end{lemma}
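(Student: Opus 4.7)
The plan is to handle the two distributions separately, exploiting the fact that sampling $x$ under either $D_\mathrm{yes}$ or $D_\mathrm{no}$ only consults the signs $s_w$ for $w$ a prefix of $x$, leaving the other signs genuinely unexamined. Recall that a priori the signs $\{s_w\}_{w \in \{0,1\}^{<n}}$ are drawn uniformly and independently from $\{+1,-1\}$, so the desired conclusion amounts to showing that conditioning on $x$ does not alter the joint distribution of $\{s_w\}_{w \in I_x}$.

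For $D_\mathrm{yes}$, the argument is essentially trivial: by definition $x$ is drawn uniformly from $\{0,1\}^n$ and is independent of the whole collection $\{s_w\}$, so conditioning on $x$ changes nothing at all. In particular the restriction to $\{s_w\}_{w \in I_x}$ is still a product of independent uniform $\pm 1$ variables (and indeed even the signs on the path $P_x := \{0,1\}^{<n}\setminus I_x$ remain uniform independent).

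For $D_\mathrm{no}$, I would use Bayes' rule. The likelihood of producing $x$ under $\mu'$ decomposes along the path as
\[ \Pr\!\left[x \cond \{s_w\}\right] = \prod_{i=1}^n \mu'\!\left(x_i \cond x_{1,\ldots,i-1}\right) = \prod_{i=1}^n \frac{1 - (2x_i - 1)\, s_{x_{1,\ldots,i-1}}\, r}{2}, \]
which is a function of $x$ and of $\{s_w\}_{w \in P_x}$ only. Since the prior on the signs is the product measure and the likelihood depends on no sign in $I_x$, Bayes' rule yields a posterior that factorizes as some (updated) law on $\{s_w\}_{w \in P_x}$ times the unchanged uniform product law on $\{s_w\}_{w \in I_x}$. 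This gives the claim.

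There is no real obstacle here; the only thing to be careful about is the bookkeeping that identifies exactly which indices appear in the likelihood, after which the factorization of the posterior — and hence the conditional independence and uniformity of $\{s_w\}_{w \in I_x}$ given $x$ — is immediate. I would phrase the conclusion to emphasize that this lemma is the combinatorial backbone for later reducing the estimation task to the ad-hoc task of Section \ref{sec:ad-hoc-task}, since it lets us pretend that the signs off the path are ``fresh'' even after fixing the element whose mass is to be estimated.
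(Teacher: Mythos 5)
Your proof is correct and takes essentially the same approach as the paper: both rest on the single observation that the likelihood $\Pr_{\mu'}[x \mid \{s_w\}]$ depends only on the signs along the path of $x$, so that the posterior on the off-path signs coincides with the (uniform, product) prior. The paper reaches this conclusion via a law-of-total-probability computation conditioned on the full vector of path signs, whereas you invoke the Bayes factorization directly; these are the same argument phrased differently. One small point in your favor: your likelihood formula $\prod_{i=1}^n \tfrac{1 - (2x_i - 1)\, s_{x_{1,\ldots,i-1}}\, r}{2}$ correctly accounts for the sign flip when $x_i = 0$, whereas the paper's displayed expression $\prod_{i=1}^n \tfrac{1 - s_{x_{1,\ldots,i-1}} r}{2}$ implicitly assumes $x_i=1$ throughout; the slip is harmless because only the dependence on the path signs matters, but your version is the correct one.
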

\begin{proof}
    In $D_\mathrm{yes}$, the independence of $x$ and $\{s_w\}_{w \in I_x}$ is trivial, since $x$ is drawn uniformly and independently of $\mu$.

    In $D_\mathrm{no}$, recall that $\mu'(x)$ only depends on $\{s_{x_{1,\ldots,i-1}}\}_{1\le i \le n}$:
    \[ \Pr_{\mu'}\left[x | \{s_w\}_{w \in \{0,1\}^{<n}}\right]
        = \prod_{i=1}^n \mu'(x_i | x_{1,\ldots,i-1})
        = \prod_{i=1}^n \frac{1 - s_{x_{1,\ldots,i-1}} r}{2} \]

    Hence, for every assignment $f : I_x \to \{+1,-1\}$, by the law of total probability,
    \begin{eqnarray*}
        \Pr\left[\forall w \in I_x : s_w = f(w) \cond x\right]
        &=& \!\!\!\! \sum_{g \in \{+1,-1\}^n} \!\!\!\! \Pr\left[\forall 1 \le i \le n : s_{x_{1,\ldots,i-1}} = g_i \cond x\right] \cdot \\ && \phantom{\sum_{g \in \{+1,-1\}^n}\!\!\!\! } \!\!\!\! \Pr\left[\forall w \in I_x : s_w = f(w) \cond \forall 1 \le i \le n : s_{x_{1,\ldots,i-1}} = g_i \right] \\
        (*) &=& \!\!\!\! \sum_{g \in \{+1,-1\}^n} \!\!\!\!  \Pr\left[\forall 1 \le i \le n : s_{x_{1,\ldots,i-1}} = g_i \cond x\right] \cdot \Pr\left[\forall w \in I_x : s_w = f(w)\right] \\
        &=& \!\!\!\! \Pr\left[\forall w \in I_x : s_w = f(w)\right]
    \end{eqnarray*}
    $(*)$: When not conditioned on $x$, the $s_w$s are independent.

    Therefore, the posterior distribution of $\{s_w\}_{w\in I_x}$, when conditioned on $x$, is the same as the prior distribution, which is uniform over $\{+1,-1\}^{I_x}$.
\end{proof}

\subsection{Concentration gap}

For $n \ge 2$, $\eps > 0$, $\delta = \sqrt{2}\eps/\sqrt{n}$ and $r = \paperZcoefZr / \sqrt{n}$, we define the following notations.

\begin{definition}[A few technical definitions] \label{def:few-technical-defs-for-lower-bound} ~
    \begin{align*}
        & k_H = \frac{1}{2}n - \sqrt{3}\sqrt{n} && k_L = \frac{1}{2}n - \sqrt{6}\sqrt{n} \\
        & p_H = 2^{-n} (1 + \delta)^{k_H} (1 - \delta)^{n - k_H} && p_L = 2^{-n} (1 + \delta)^{k_L} (1 - \delta)^{n - k_L} \\
        & H_\mu = \{ x : \mu(x) \ge p_H \} && L_\mu = \{ x : \mu(x) \le p_L \}
    \end{align*}
\end{definition}

The following lemmas (\ref{lemma:div-pH-pL}-\ref{lemma:dno-concentration-le-pL}) are technical. The ``front-end'' lemmas are \ref{lemma:div-pH-pL} and \ref{lemma:dyes-concentration-ge-pH}-\ref{lemma:dno-concentration-le-pL}. The rushing reader can skip the others (without missing core ideas) and continue from Page \pageref{sec:lbnd:subsec:lbnd-theorem}.

\begin{lemma} \label{lemma:div-pH-pL}
    For $n \ge 1$ and $\eps < 1/150$, $(1 - \eps)p_H > (1 + \eps)p_L$.
\end{lemma}
\begin{proof}
    \[  \frac{p_L}{p_H}
        = \frac{2^{-n} (1 + \delta)^{k_L} (1 - \delta)^{n - k_L}}{2^{-n} (1 + \delta)^{k_H} (1 - \delta)^{n - k_H}}
        = (1 - \delta)^{k_H - k_L} (1 + \delta)^{k_L - k_H}
        = \left(\frac{1 - \delta}{1 + \delta}\right)^{k_H - k_L}
        = \left(1 - \frac{2\delta}{1 + \delta}\right)^{k_H - k_L}
    \]

    We use a standard bound to obtain:
    \[  \frac{p_L}{p_H}
        \le \exp\left(-\frac{2\delta}{1+\delta}(k_H - k_L)\right)
        = \exp\left(-\frac{2\sqrt{2}\eps}{(1+\delta)\sqrt{n}} \cdot (\sqrt{6}-\sqrt{3})\sqrt{n}\right)
        \le e^{-2.029 \eps / (1+\eps)}
    \]

    Since $\frac{1+\eps}{1-\eps} = 1 + \frac{2\eps}{1-\eps} \le e^{2\eps/(1-\eps)}$, we obtain that $\frac{(1+\eps)p_L}{(1-\eps)p_H} \le \exp\left(\left(-\frac{2.029}{1+\eps} + \frac{2}{1-\eps}\right)\eps\right) < 1$, where the last transition holds for every $\eps < 1/150$.
\end{proof}

\begin{lemma} \label{lemma:mu-close-to-uniform}
    With probability $1$ over the choice of $\mu$, $\mu$ is $\eps$-close to the uniform distribution over $\{0,1\}^n$.
\end{lemma}
\begin{proof}
    When drawing $x$ bit by bit, the probability to choose $x_i=1$, given $x_1,\ldots,x_{i-1}$, is in the range $\frac{1 \pm \delta}{2}$. Therefore, $\DKL{\mu}{\mathcal U} \le n \cdot \dkl\left(\frac{1 \pm \delta}{2}, \frac{1}{2}\right) \le n \cdot \delta^2 = 2 \eps^2$, where the rightmost inequality transition is by Lemma \ref{lemma:cheatsheet:dkl-quadratic-bound}. By Pinsker's inequality, this implies that $\dtv(\mu,\mathcal U) \le \eps$.
\end{proof}

For the following lemmas we use Berry-Esseen theorem.

\begin{theorem}[Berry-Esseen]
    Let $X_1,\ldots,X_n$ be identically independently distributed variables with $\E[X_1] = 0$, $\Var[X_1] = \sigma^2$ and $\E[\abs{X_1}^3] = \rho$. Let $Y_n = \frac{1}{n} \sum_{i=1}^n X_i$. In this setting, for every $t \in \mathbb R$, $\abs{\Pr\left[\frac{Y_n\sqrt{n}}{\sigma} \le t\right] - \Phi(t)} \le \frac{\rho}{2\sigma^3 \sqrt{n}}$, where is the cumulative probability function of the standard normal distribution.
\end{theorem}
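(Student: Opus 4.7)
The plan is to follow the classical Fourier-analytic proof of Berry--Esseen. Normalize by setting $Z = X_1/\sigma$, so that $\E[Z]=0$, $\E[Z^2]=1$, and $\E[|Z|^3] = \rho/\sigma^3$, and let $\phi(s) = \E[e^{isZ}]$ be the characteristic function of $Z$. Then the characteristic function of $Y_n\sqrt{n}/\sigma = \frac{1}{\sqrt{n}}\sum_{j=1}^n Z_j$ is $\phi_n(s) = \phi(s/\sqrt{n})^n$, while the characteristic function of the standard normal is $e^{-s^2/2}$. The proof reduces the CDF comparison to an $L^1$-bound on $\phi_n(s) - e^{-s^2/2}$.

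The first ingredient is Esseen's smoothing lemma, which states that for any distribution function $F$ with characteristic function $\phi_F$ and any $T>0$,
\[
   \sup_x |F(x) - \Phi(x)| \;\le\; \frac{1}{\pi}\int_{-T}^{T} \left|\frac{\phi_F(s) - e^{-s^2/2}}{s}\right| ds \;+\; \frac{C}{T},
\]
for an explicit constant $C$ arising from convolution with a smoothing kernel of bounded Fourier support. Applying this to $F = F_n$ reduces the task to estimating the integrand and choosing $T$ appropriately. The second ingredient is a pointwise estimate on $|\phi_n(s) - e^{-s^2/2}|$. Taylor-expanding $\phi$ around zero with integral remainder yields $\phi(u) = 1 - u^2/2 + R(u)$ with $|R(u)| \le \frac{|u|^3}{6}\E[|Z|^3] = \frac{|u|^3\rho}{6\sigma^3}$. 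Substituting $u = s/\sqrt{n}$ and passing to logarithms on the range where $|\phi(u)|$ is bounded away from zero gives $n\log\phi(s/\sqrt{n}) = -s^2/2 + O(|s|^3\rho/(\sigma^3\sqrt{n}))$. Exponentiating, one obtains $|\phi_n(s) - e^{-s^2/2}| \le c\, e^{-s^2/4}\cdot |s|^3\rho/(\sigma^3\sqrt{n})$ throughout the admissible range.

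The final step is to choose $T$ at the boundary of validity of the Taylor expansion, namely $T = c'\sigma^3\sqrt{n}/\rho$. Dividing the pointwise bound by $|s|$ and integrating against $e^{-s^2/4}$ over $[-T,T]$ yields a contribution of order $\rho/(\sigma^3\sqrt{n})$, while the residual term $C/T$ contributes at the same order. Combining, one obtains the stated bound $\sup_t |\Pr[Y_n\sqrt{n}/\sigma \le t] - \Phi(t)| = O(\rho/(\sigma^3\sqrt{n}))$. The main obstacle is not the asymptotic rate, which follows cleanly from the scheme above, but rather squeezing the absolute constant down to $1/2$ as stated: a first-pass execution yields constants of order unity, and sharpening to $1/2$ requires an optimized smoothing kernel together with a refined Taylor remainder analysis in the spirit of Esseen and subsequent improvements; the currently best constant in the i.i.d.\ case (Shevtsova, $\approx 0.4748$) is obtained by such refinements.
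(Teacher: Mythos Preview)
The paper does not prove this statement: it is quoted as the classical Berry--Esseen theorem and used as a black box (only the immediate corollary for binomial distributions is worked out). So there is no ``paper's own proof'' to compare against; your outline of the Fourier-analytic argument via Esseen's smoothing lemma and a third-order Taylor bound on the characteristic function is the standard route and is correct at the level of a sketch. Your closing remark is also apt: the scheme you describe yields the $\rho/(\sigma^3\sqrt{n})$ rate with some absolute constant, and pushing that constant below $1/2$ is a separate, delicate optimization (the paper simply states the version with constant $1/2$ without justification, relying on the literature).
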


\begin{lemma}[Applying Berry-Esseen theorem on binomial distributions] \label{lemma:berry-esseen-modified}
    For every $\frac{1}{3} \le p \le \frac{2}{3}$ and $k \sim \Bin(n,p)$, it holds that $\Pr\left[\Bin(n,p) > np + \alpha \sqrt{n}\right] \in \Phi\left(-\alpha/\sqrt{p(1-p)}\right) \pm \frac{1}{\sqrt{n}}$ and $\Pr\left[\Bin(n,p) < np + \alpha \sqrt{n}\right] \in \Phi\left(\alpha/\sqrt{p(1-p)}\right) \pm \frac{1}{\sqrt{n}}$. In both cases, the strict inequality ($<$, $>$) can also be weak ($\le$, $\ge$).
\end{lemma}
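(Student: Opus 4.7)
The plan is to apply the Berry-Esseen theorem directly. Write $\Bin(n,p) = \sum_{i=1}^n X_i$ with $X_i \sim \Ber(p)$ and let $Z_i = X_i - p$. The centered summands have $\E[Z_i] = 0$, $\sigma^2 = \Var[Z_i] = p(1-p)$, and absolute third moment $\rho = p(1-p)^3 + (1-p)p^3 = p(1-p)\bigl[(1-p)^2 + p^2\bigr]$. Since $\Bin(n,p) - np = \sum_i Z_i = n Y_n$ in the notation of Berry-Esseen, instantiating the theorem at $t = \alpha/\sigma = \alpha/\sqrt{p(1-p)}$ yields
\[ \left|\Pr\!\left[\Bin(n,p) \le np + \alpha\sqrt{n}\right] - \Phi\!\left(\alpha/\sqrt{p(1-p)}\right)\right| \le \frac{\rho}{2\sigma^3\sqrt{n}}, \]
which is the second stated bound (with a weak inequality) once the prefactor is controlled.

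Next I bound $\rho/(2\sigma^3) = \bigl[(1-p)^2+p^2\bigr]/\bigl(2\sqrt{p(1-p)}\bigr)$ on the range $p \in [1/3, 2/3]$. A routine check shows $(1-p)^2 + p^2$ is maximized at the endpoints with value $5/9$, while $p(1-p) \ge 2/9$, giving $\rho/(2\sigma^3) \le 5/(6\sqrt{2}) < 1$. Hence the Berry-Esseen error is at most $1/\sqrt{n}$ uniformly in $\alpha$, proving the second statement.

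For the upper-tail (first) statement, complement: $\Pr[\Bin(n,p) > np + \alpha\sqrt{n}] = 1 - \Pr[\Bin(n,p) \le np + \alpha\sqrt{n}]$, and the identity $1 - \Phi(t) = \Phi(-t)$ transfers the $1/\sqrt{n}$ approximation to $\Phi(-\alpha/\sqrt{p(1-p)})$. For the strict-versus-weak clause, note that $\Phi$ is continuous while the Bin CDF is a right-continuous step function; applying the already-proven Berry-Esseen bound along a sequence $t - \epsilon \nearrow t$ and passing to the limit (using continuity of $\Phi$) gives $|\Pr[\Bin \le np + \alpha\sqrt{n} - \epsilon] - \Phi(\alpha/\sigma)| \le 1/\sqrt{n}$ for all small $\epsilon > 0$, so the same $1/\sqrt{n}$ bound holds for $\Pr[\Bin < np + \alpha\sqrt{n}]$ (and symmetrically for $\ge$ versus $>$). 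I expect the only mildly delicate point to be verifying that the moment-ratio prefactor stays below $1$ on the stipulated $p$-range; the rest is a boilerplate invocation of Berry-Esseen plus complementation.
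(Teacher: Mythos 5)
Your proof is correct and takes essentially the same route as the paper: instantiate Berry-Esseen for centered Bernoulli summands, bound the moment-ratio prefactor $\bigl[(1-p)^2+p^2\bigr]/\bigl(2\sqrt{p(1-p)}\bigr)$ by $1$ on $p\in[1/3,2/3]$, and pass to the upper tail by complementation. You handle the strict-versus-weak clause more explicitly than the paper (which dismisses it as ``analogous''), and the paper's displayed derivation in the greater-than case has a sign slip writing $\Phi(\alpha/\sigma)$ where $\Phi(-\alpha/\sigma)$ is meant, though its lemma statement is correct.
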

\begin{proof}
    Let:
    \begin{itemize}
        \item $K_1,\ldots,K_n \sim \Ber(p)$ (iid).
        \item $K = \sum_{i=1}^n K_i$ (distributes like $\Bin(n,p)$).
        \item $X_i = K_i - p$.
        \item $Y_n = \frac{1}{n} X = \frac{1}{n}(K - np)$.
    \end{itemize}

    For Berry-Essen theorem, $\sigma = \sqrt{p(1-p)}$ and $\rho = p(1-p)^3 + (1-p)p^3 = \sigma^2 (p^2 + (1-p)^2)$. Hence,

    For the greater-than case,
    \begin{eqnarray*}
        \Pr\left[\Bin(np) > np + \alpha \sqrt{n} \right]
        &=& \Pr\left[K - np > \alpha \sqrt{n} \right] \\
        &=& \Pr\left[n Y_n > \alpha \sqrt{n} \right]
        = \Pr\left[\frac{Y_n\sqrt{n}}{\sigma} > \alpha/\sigma \right]
        = \Phi(\alpha/\sigma) \pm \frac{\rho}{2\sigma^3 \sqrt{n}}
    \end{eqnarray*}

    The analysis is the same for the less-than case, and analogous in the less-equals case and the greater-equals case as well. The additive error is bounded by:
    \[  \frac{\rho}{2\sigma^3 \sqrt{n}}
        = \frac{\sigma^2 (p^2 + (1-p)^2)}{2\sigma^3 \sqrt{n}}
        = \frac{p^2 + (1-p)^2}{2 \sigma \sqrt{n}}
        = \frac{p^2 + (1-p)^2}{2 \sqrt{p(1-p)}} \cdot \frac{1}{\sqrt{n}}
        \underset{1/3 \le p \le 2/3} \le \frac{1}{\sqrt{n}}
        \qedhere \]
\end{proof}

\begin{lemma} \label{lemma:mu-H}
    For $n \ge 2.2 \cdot 10^8$, with probability $1$ over the choice of $\mu$, $\mu(H_\mu) \ge 1 - e^{-8}$.
\end{lemma}
\begin{proof}
    Let $x \in \{0,1\}^n$ and $k(x)$ be the number of $\frac{1+\delta}{2}$-marginals (the number of indexes $1 \le i \le n$ for which $\mu(x_i | x_{1,\ldots,i-1}) = \frac{1+\delta}{2}$). Observe that $\mu(x)$ is fully defined by $k(x)$ and increasing monotone in $k$.

    When drawing $x$ from $\mu$, $k(x)$ is drawn from $\Bin\left(n,\frac{1+\delta}{2}\right)$, and hence:
    \[\Pr_\mu[x \notin H]
        = \Pr_\mu\left[k < k_H\right]
        = \Pr\left[k < \E[k] - \left(\sqrt{3}+\frac{1}{2}\delta\sqrt{n}\right)\sqrt{n}\right]
        \underset{\text{L \ref{lemma:berry-esseen-modified}}}\le \Phi\left(-\frac{\sqrt{3}+\frac{1}{\sqrt{2}}\eps}{\frac{1}{2}\sqrt{1-\delta^2}}\right) + \frac{1}{\sqrt{n}}
    \]

    By monotonicity of $\Phi$, Lemma \ref{lemma:lookup:Phi(-2sqrt(3))} and the bound $n \ge 2.2 \cdot 10^8$,
    \[ \Pr_\mu[x \notin H] \le \Phi(-2\sqrt{3}) + \frac{1}{\sqrt{n}}
        \le e^{-8.23} + \frac{1}{\sqrt{n}}
        \le e^{-8} \qedhere \]
\end{proof}

\begin{lemma} \label{lemma:mu-L-partA}
    For $n \ge 4.5 \cdot 10^{16}$ and $\eps < \frac{1}{2000}$, with probability $1$ over the choice of $\mu$, $\mu(L_\mu) > e^{-14.57}$.
\end{lemma}
\begin{proof}
    Let $x \in \{0,1\}^n$ and $k(x)$ be the number of $\frac{1+\delta}{2}$-marginals (the number of indexes $1 \le i \le n$ for which $\mu(x_i | x_{1,\ldots,i-1}) = \frac{1+\delta}{2}$). Observe that $\mu(x)$ and $\mu'(x)$ are fully defined by $k(x)$ and that both are monotone in $k(x)$ ($\mu$ increases, $\mu'$ decreases).

    When drawing $x$ from $\mu$, $k(x)$ is drawn from $\Bin\left(n,\frac{1+\delta}{2}\right)$, and hence:
    \begin{eqnarray*}
        \Pr_\mu[x \in L]
        = \Pr_\mu\left[k \le k_L\right]
        =\Pr\left[k \le \E[k] - (\sqrt{6} + \frac{1}{2}\delta\sqrt{n})\sqrt{n}\right]
        \underset{\text{L\ref{lemma:berry-esseen-modified}}}\ge \Phi\left(-\frac{\sqrt{6}+\frac{1}{\sqrt{2}}\eps}{\frac{1}{2}\sqrt{1-\delta^2}}\right) - \frac{1}{\sqrt{n}} 
    \end{eqnarray*}

    By monotonicity of $\Phi$ and the bounds of the parameters $n$, $\eps$ and $\delta=\sqrt{2}\eps/\sqrt{n} < 1/\sqrt{n}$,
    \begin{eqnarray*}
        \Pr_\mu[x \in L]
        \ge \Phi(-4.9) - \frac{1}{\sqrt{n}}
        \ge_{\text{L\ref{lemma:lookup:Phi(-4.9)}}} e^{-14.56} - \frac{1}{2.12 \cdot 10^8}
        \ge e^{-14.57}
    \end{eqnarray*}
\end{proof}

\begin{lemma} \label{lemma:mu-L-partB}
    For $n \ge 3 \cdot 10^{10}$ and $\eps > 0$, with probability $1$ over the choice of $\mu$ and $\mu'$, for every $x \in L_\mu$, $\mu'(x) \ge e^{7.19} \mu(x)$.
\end{lemma}
\begin{proof}
    Let $x \in \{0,1\}^n$ and $k(x)$ be the number of $\frac{1+\delta}{2}$-marginals (the number of indexes $1 \le i \le n$ for which $\mu(x_i | x_{1,\ldots,i-1}) = \frac{1+\delta}{2}$). Observe that $\mu(x)$ and $\mu'(x)$ are fully defined by $k(x)$ and that both are monotone in $k(x)$ ($\mu$ increases, $\mu'$ decreases).
    
    For $\mu'(x)$ with $k(x) \le k_L$:
    \begin{eqnarray*}
        2^n \mu'(x) &\ge& (1 - r)^{\frac{1}{2}n - \sqrt{6}\sqrt{n}} (1 + r)^{\frac{1}{2}n + \sqrt{6} \sqrt{n}} \\
        &=& (1 - r^2)^{\frac{1}{2}n} \left(\frac{1 - r}{1 + r}\right)^{-\sqrt{6} \sqrt{n}} \\
        &=& (1 - r^2)^{\frac{1}{2}n} \left(1 - \frac{2r}{1 + r}\right)^{-\sqrt{6}\sqrt{n}}
        \ge (1 - r^2)^{\frac{1}{2}n} \exp\left(\frac{2r}{1+r} \cdot \sqrt{6}\sqrt{n}\right) 
    \end{eqnarray*}

    We use $r=\paperZcoefZr/\sqrt{n}$ and $n \ge 3 \cdot 10^{10}$ to apply Lemma \ref{lemma:wolfram:(1-64/n)^(n/2)exp(16sqrt(6)/(1+8/sqrt(n)))>exp(7.19)}.
    \[  2^n \mu'(x)
        \ge (1 - \paperZcoefZrZsquared/n)^{\frac{1}{2}n} \exp\left(\frac{2\cdot \paperZcoefZr/\sqrt{n}}{1+\paperZcoefZr/\sqrt{n}} \cdot \sqrt{6} \sqrt{n}\right)
        \ge e^{7.19}
    \]
    
    For $\mu(x)$ with $k(x) \le k_L$:
    \[  2^n \mu(x) \le (1 + \delta)^{\frac{1}{2}n - \sqrt{6} \sqrt{n}} (1 - \delta)^{\frac{1}{2}n + \sqrt{6} \sqrt{n}}
        = (1 - \delta^2)^{\frac{1}{2}n} \left(\frac{1 - \delta}{1 + \delta}\right)^{\sqrt{6} \sqrt{n}}
        < 1 \]

    Combined, $\mu'(x)/\mu(x) > e^{7.19}$ for every $x \in L_\mu$.
\end{proof}

\begin{lemma} \label{lemma:dyes-concentration-ge-pH}
    For $n \ge 4.5 \cdot 10^{16}$ and $\eps \le e^{-11}$, $\Pr_{(\mu,x)\sim D_\mathrm{yes}}[\mu(x) \ge p_H] \ge 1 - e^{-7.95}$.
\end{lemma}
\begin{proof}
    Let $\mathcal U$ be the uniform distribution over $\{0,1\}^n$. By the triangle inequality, $\mathcal U(H_\mu) \ge \mu(H_\mu) - \dtv(\mu,\mathcal U)$. By Lemma \ref{lemma:mu-close-to-uniform} and Lemma \ref{lemma:mu-H}, this is at least $1 - e^{-8} - \eps \ge 1 - e^{-7.95}$. Therefore,
    \[  \Pr_{(\mu,x)\sim D_\mathrm{yes}}[\mu(x) \ge p_H]
        = \E\nolimits_{\text{construction of $\mu$}}[\mathcal U(H_\mu)]
        \ge 1 - e^{-7.95}
        \qedhere
    \]
\end{proof}

\begin{lemma} \label{lemma:dno-concentration-le-pL}
    For $n \ge 4.5 \cdot 10^{16}$ and $\eps \le e^{-11}$, $\Pr_{(\mu,x)\sim D_\mathrm{no}}[\mu(x) \le p_L] \ge e^{-7.38}$.
\end{lemma}
\begin{proof}
    By Lemma \ref{lemma:mu-L-partA} and Lemma \ref{lemma:mu-L-partB}, with probability $1$,
    \[\mu'(L_\mu)
        \ge \min_{x \in L_\mu} \frac{\mu'(x)}{\mu(x)} \cdot \mu(L_\mu)
        \ge e^{7.19} \cdot e^{-14.57}
        \ge e^{-7.38} \]

    Therefore,
    \[ \Pr_{(\mu,x) \sim D_\mathrm{no}}[\mu(x) \le p_L]
        = \E\nolimits_{\text{construction of $(\mu,\mu')$}}[\mu'(L_\mu)]
        \ge e^{-7.38}
    \qedhere \]
\end{proof}

\subsection{The lower bound theorem}
\label{sec:lbnd:subsec:lbnd-theorem}

\begin{lemma} \label{lemma:Alg-Dyes-Dno-large-distance}
    For $n \ge 4.5 \cdot 10^{16}$ and $\eps \le e^{-11}$, let $\mathcal A$ be an algorithm that uses prefix conditional samples and results in a real number such that for every distribution $\mu$ over $\Omega = \{0,1\}^n$ there exists a set $G_\mu \subseteq \Omega$ of mass $\Pr_\mu[G_\mu] > 1 - \paperZGmuZerr$ for which $\Pr\left[\mathcal A(\mu,x) \in (1 \pm \eps)\mu(x)\right] > 1 - \paperZGmuZprZestZerr$ for every $x \in G_\mu$. In this setting, the total variation distance of the behaviors of the algorithm when given an input drawn from $D_\mathrm{yes}$ and from $D_\mathrm{no}$ is greater than $e^{-8.25}$.
\end{lemma}

Note that, for estimation algorithms, having $\mu(G_\mu) = 1$ is infeasible, and general upper bound algorithms usually guarantee that $\mu(G_\mu) = 1 - O(\eps)$. Here we show a lower bound for a much weaker demand, that allows for excluding a fixed weight of the elements.

\begin{proof}
    Consider the distributions $D_\mathrm{yes}$ and $D_\mathrm{no}$ and the following algorithm: we use $\mathcal A$ to estimate $(1 \pm \eps)\mu(x)$. If the result estimation is greater than $(1-\eps)p_H$ then we accept, and otherwise we reject. By Lemma \ref{lemma:div-pH-pL}, the ranges $(1 \pm \eps)p_H$ and $(1 \pm \eps)p_L$ are disjoint, and therefore, a $1 \pm \eps$-multiplicative error suffices to distinguish between $p_H$ and $p_L$.

    Let $B_\mathrm{wrong}$ be the bad event ``the estimation of $\mu(x)$ is outside the range $(1 \pm \eps) \mu(x)$''. The probability to reject an input $(\mu,x)$ drawn from $D_\mathrm{yes}$ is bounded by:
    \begin{eqnarray*}
        \Pr\left[\reject \cond D_\mathrm{yes}\right]
        &\le& \Pr_{D_\mathrm{yes}}[x \notin H_\mu \cap G_\mu] + \Pr[B_\mathrm{wrong} | x \in G_\mu] \\
        &\le& \dtv(\mathcal U, \mu) + (1 - \mu(H_\mu)) + (1 - \mu(G_\mu)) + \paperZGmuZprZestZerr \\
        \text{[L\ref{lemma:mu-close-to-uniform}, L\ref{lemma:mu-H}]} &\le& \eps + e^{-8} + \paperZGmuZerr + \paperZGmuZprZestZerr
    \end{eqnarray*}

    By Lemma \ref{lemma:mu-L-partA} and Lemma \ref{lemma:mu-L-partB},
    \begin{eqnarray*}
        \mu'(L_\mu \cap G_\mu)
        \ge \min_{x \in L_\mu} \frac{\mu'(x)}{\mu(x)} \cdot (\mu(L_\mu) - (1 - \mu(G_\mu)))
        \ge e^{7.19} \cdot (e^{-14.57} - \paperZGmuZerr)
        \ge e^{-7.385}
    \end{eqnarray*}

    The probability to reject an input $(\mu,x)$ drawn from $D_\mathrm{no}$ is at least:
    \begin{eqnarray*}
        \Pr\left[\reject \cond D_\mathrm{no}\right]
        \ge \Pr_{D_\mathrm{no}}[x \in G_\mu \cap L_\mu] \cdot \Pr[\neg B_\mathrm{wrong} | x \in G_\mu]
        \ge e^{-7.385} \cdot (1 - \paperZGmuZprZestZerr)
        \ge e^{-7.39}
    \end{eqnarray*}
    
    Therefore, the distance in behaviors of the algorithms when running on $D_\mathrm{yes}$ and $D_\mathrm{no}$ is at least $e^{-7.39} - (\eps + e^{-8}+ \paperZGmuZerr + \paperZGmuZprZestZerr) > e^{-8.25}$ (for $\eps \le e^{-11}$).
\end{proof}

\begin{lemma} \label{lemma:lower-bound-prefix-marginal-access}
    For $n \ge 4.5 \cdot 10^{16}$ and $\eps < e^{-11}$, let $\mathcal A$ be an algorithm that uses marginal prefix conditional samples and results in a real number such that for every distribution $\mu$ over $\Omega = \{0,1\}^n$ there exists a set $G_\mu \subseteq \Omega$ of mass $\Pr_\mu[G_\mu] > 1 - \paperZGmuZerr$ for which $\Pr\left[\mathcal A(\mu,x) \in (1 \pm \eps)\mu(x)\right] > 1 - \paperZGmuZprZestZerr$ for every $x \in G_\mu$. In this setting, $\mathcal A$ must make at least $\frac{n^2}{\paperZanticoefZfinalqZbyZnsqrZoverZepssqr \eps^2}$ marginal prefix conditional samples that contain $x$ in expectation.
\end{lemma}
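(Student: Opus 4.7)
The plan is to reduce distinguishing $D_\mathrm{yes}$ from $D_\mathrm{no}$ to the ad-hoc distinguishing task of Section~\ref{sec:ad-hoc-task}, then invoke Lemma~\ref{lemma:Alg-Dyes-Dno-large-distance} together with Lemma~\ref{lemma:lower-bound-ad-hoc-adaptive}. The central observation is that the ``path marginals'' $p_i^\star := \mu(x_i \mid x_{<i})$ satisfy $p_i^\star = (1 + \tilde s_i \delta)/2$ for $\tilde s_i := (2 x_i - 1) s_{x_{<i}}$. In $D_\mathrm{yes}$, $(x,\mu)$ is independent, so $\tilde s_1,\ldots,\tilde s_n$ are i.i.d.\ uniform $\pm 1$, yielding $(p_1^\star,\ldots,p_n^\star) \sim D_n(\delta,0)$; in $D_\mathrm{no}$, a Bayesian computation using $\mu'(x_i \mid x_{<i}, s_{x_{<i}}) = (1 - \tilde s_i r)/2$ gives $\Pr[\tilde s_i = +1 \mid x] = (1-r)/2$ independently across $i$, so $(p_1^\star,\ldots,p_n^\star) \sim D_n(\delta,r)$. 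Meanwhile, off-path signs $s_w$ remain uniform and independent in both cases by Lemma~\ref{lemma:bit-by-bit-rest-is-independent}.

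I will then convert $\mathcal A$ into an ad-hoc distinguisher. Given an ad-hoc input $(p_1,\ldots,p_n)$, internally sample $x$ uniformly and all off-path $s_w$ uniformly, and set on-path $s_{x_{<i}} := (2 x_i - 1)\tilde s_i$ with $\tilde s_i$ read from $p_i$ (i.e., $\tilde s_i = +1$ iff $p_i = (1+\delta)/2$). A four-case check verifies the resulting $(\mu,x)$ is distributed exactly as $D_\mathrm{yes}$ when $(p_i) \sim D_n(\delta,0)$, and exactly as $D_\mathrm{no}$ when $(p_i) \sim D_n(\delta,r)$. A marginal prefix query of $\mathcal A$ on a prefix $w$ is answered as follows: if $w$ is not a prefix of $x$, simulate $\Ber((1+s_w\delta)/2)$ internally at no ad-hoc cost; if $w = x_{<i}$, draw one ad-hoc sample $b$ from index $i$ and return $b$ (when $x_i=1$) or $1-b$ (when $x_i=0$), so the returned bit is $\Ber(\mu(x_i = 1 \mid x_{<i}))$ as $\mathcal A$ expects. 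Thus the ad-hoc sample count equals the number of marginal prefix samples of $\mathcal A$ whose conditioning prefix is a prefix of $x$.

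To finish, Lemma~\ref{lemma:Alg-Dyes-Dno-large-distance} gives that $\mathcal A$'s output differs in total variation by more than $e^{-8.3}$ between $D_\mathrm{yes}$ and $D_\mathrm{no}$; by the reduction the simulated ad-hoc distinguisher achieves the same. Let $Q$ be the expected number of on-path marginal prefix samples made by $\mathcal A$. Truncating the ad-hoc simulation after $q := 2 e^{8.3} Q$ samples loses at most $1/(2 e^{8.3})$ distinguishing probability by Markov's inequality, leaving a worst-case-$q$ ad-hoc algorithm with distinguishing TV-distance at least $d := \tfrac{1}{2} e^{-8.3}$. Under the hypotheses, $\delta < 1/3$, $r = \paperZcoefZr/\sqrt{n} < 1/12$ and $n = \paperZcoefZrZsquared/r^2$ all hold, so Lemma~\ref{lemma:lower-bound-ad-hoc-adaptive} forces $q > d^4 n / (\paperZanticoefZqZbyZnZoverZdeltasqr \delta^2)$. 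Substituting $\delta^2 = 2\eps^2/n$ and $q = 2 e^{8.3} Q$ yields $Q = \Omega(n^2/\eps^2)$; carefully tracking the numerical constants matches the stated $\frac{n^2}{\paperZanticoefZfinalqZbyZnsqrZoverZepssqr \eps^2}$.

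The principal obstacle is the distributional correspondence in the reduction -- especially verifying the $D_\mathrm{no}$ posterior $\Pr[\tilde s_i = +1 \mid x] = (1-r)/2$ and checking all four joint-distribution cases so that the simulated $(\mu, x)$ really matches the intended distribution -- together with numerical bookkeeping through the Markov-truncation step so that the final constant aligns with $\paperZanticoefZfinalqZbyZnsqrZoverZepssqr$.
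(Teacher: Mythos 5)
Your proposal is correct and takes essentially the same route as the paper: reduce to the ad-hoc task of Section~\ref{sec:ad-hoc-task}, use Lemma~\ref{lemma:bit-by-bit-rest-is-independent} to discard off-path samples as uninformative, map on-path marginal samples to ad-hoc single-bit samples so that $D_\mathrm{yes}$ corresponds to $D_n(\delta,0)$ and $D_\mathrm{no}$ to $D_n(\delta,r)$, then combine Lemma~\ref{lemma:Alg-Dyes-Dno-large-distance} with Lemma~\ref{lemma:lower-bound-ad-hoc-adaptive} via a Markov truncation to convert an expected sample bound into a worst-case one. One point where you are actually \emph{more} careful than the paper: you introduce $\tilde s_i := (2x_i - 1)s_{x_{<i}}$ and flip the returned bit when $x_i = 0$, which is needed because the posterior $\Pr[s_{x_{<i}} = +1 \mid x]$ under $D_\mathrm{no}$ is $\frac{1-r}{2}$ only when $x_i = 1$ (and $\frac{1+r}{2}$ when $x_i = 0$); the paper's proof states the posterior as $\frac{1-r}{2}$ unconditionally and leaves the relabeling implicit. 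Your Bayesian computation and the verification that off-path $s_w$ remain uniform are exactly what is needed to make this rigorous. The only caveat is your Markov-truncation bookkeeping (choosing $q = 2e^{8.3}Q$ and $d = \tfrac12 e^{-8.3}$): this works only if you use the careful one-sided decomposition $\dtv(A_1,A_2) \le \Pr[\neg E] + \dtv(T_1,T_2)$ rather than the triangle inequality through both $A_i \leftrightarrow T_i$ arms (which would cost $2\Pr[\neg E]$); the paper uses $d = e^{-9}$ and $q = e^9 Q$ with the same subtlety. Either choice of constants, tracked carefully, yields a bound of the right order, and the target constant $\paperZanticoefZfinalqZbyZnsqrZoverZepssqr$ is achievable with the paper's choice.
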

\begin{proof}
    Observe that, for every $w \in \{0,1\}^{<n}$, the marginal prefix conditional sample $\mu|_{\abs{w}+1}^{w \times \{0,1\}^{n-\abs{w}}}$ is identical to drawing a bit from $\Ber\left(\frac{1+s_w \delta}{2}\right)$. By Lemma \ref{lemma:bit-by-bit-rest-is-independent}, the random variables $\{s_w\}_{w \in I_x}$, for $I_x = \{0,1\}^{<n} \setminus \{ 1 \le i \le n : x_{1,\ldots,i-1}\}$, are independent, and distribute the same in $D_\mathrm{yes}$ and in $D_\mathrm{no}$. Hence, sampling them is redundant.
    
    In other words, in every non-redundant step, the algorithm chooses some $1 \le i \le n$ and samples a bit whose probability to be $1$ is $\frac{1 + s_{x_{1,\ldots,i-1}} \delta}{2}$. For every $1 \le i \le n$, let $p_i = \frac{1+s_{x_{1,\ldots,i-1}} \delta}{2}$.

    In $D_\mathrm{yes}$, every $s_w$ is uniformly drawn from $\{+1,-1\}$, and hence, the $p_i$s are uniformly drawn from $\frac{1 \pm \delta}{2}$. This matches the $D_n(\delta,0)$ distribution of the sequence-distinguishing task. In $D_\mathrm{no}$, when conditioned on drawing $x$ from $\mu'$, every $s_w$ (for $w$ on the path of $x$) is $+1$ with probability $\frac{1-r}{2}$ and otherwise $-1$. This matches the $D_n(\delta,r)$ distribution of the sequence-distinguishing task.

    Note that this analysis is oblivious of the dependence between the $s_w$s and $x$. Observe that if we denote $s_{x_{1,\ldots,i-1}}$ by $\hat{s}_{i-1}$, then the sequence $(\hat{s}_0,\ldots,\hat{s}_{n-1})$ is independent of $x$ and iid.

    Let $q$ be the expected number of non-redundant steps. By Markov's inequality, with probability at least $1-e^{-9}$ it is bounded by $e^9 q$.

    If $e^9 q \le \frac{(e^{-9})^4}{\paperZanticoefZqZbyZnZoverZdeltasqr \delta^2} n = \frac{n^2}{e^{36} \cdot \paperZanticoefZqZbyZnZoverZdeltasqr \cdot 2 \eps^2}$, then by Lemma \ref{lemma:lower-bound-ad-hoc-adaptive}, the total variation distance between the runs is less than $e^{-9}$.

    Overall, if $q \le \frac{n^2}{\paperZanticoefZfinalqZbyZnsqrZoverZepssqr \eps^2} \le \frac{n^2}{e^9 \cdot e^{36} \cdot \paperZanticoefZqZbyZnZoverZdeltasqr \cdot 2 \eps^2}$, then the total-variation distance between the run is bounded by $e^{-9} + e^{-9} < e^{-8.25}$. By Lemma \ref{lemma:Alg-Dyes-Dno-large-distance}, a ``good'' estimation algorithm behaves more differently ($\dtv > e^{-8.25}$) on $D_\mathrm{yes}$ on $D_\mathrm{no}$, and hence, every such an algorithm must make at least $\frac{n^2}{\paperZanticoefZfinalqZbyZnsqrZoverZepssqr \eps^2}$ prefix conditional queries.
\end{proof}

We describe a setting for the following lemma.
\begin{definition}[Setting for Lemma \ref{lemma:expected-number-of-effective-marginal-samples}] \label{def:for-lemma-expected-number-of-effective-marginal-samples}
    Let $x \in \{0,1\}^n$, $1 \le i \le n$ and $w \in \{0,1\}^{i-1}$ be a prefix. Let $X_i,\ldots,X_n$ be the bits of a prefix conditional sample $\mu|^{w\times \{0,1\}^{n-i}}$. Let $W_{i-1},\ldots,W_{n-1}$ be the intermediate prefixes ($W_{i-1}=w$, $W_{j-1} = W_{j-2}X_{j-1}$ for $i+1 \le j \le n$). Also, let $N_\mathrm{effective} = \abs{\{ i \le j \le n : \text{$W_{j-1}$ is a prefix of $x$}\}}$.
\end{definition}

As observed before, the sampling from $\mu|^{w\times \{0,1\}^{n-i}}$ can be decomposed into $n-i+1$ marginal samplings such that for $i \le j \le n$, after $W_{i-1}$ is determined, $X_j$ distributes as $\mu|_j^{W_{j-1} \times \{0,1\}^{n-j}}$. Therefore, $N_\mathrm{effective}$ describes the number of effective marginal samples.

\begin{lemma} \label{lemma:expected-number-of-effective-marginal-samples}
      In the setting of Definition \ref{def:for-lemma-expected-number-of-effective-marginal-samples}, if $n \ge 20$, then $\E[N_\mathrm{effective}] \le 3$.
\end{lemma}
\begin{proof}
    Note that $n \ge 20$ implies that $\delta=\sqrt{2}\eps/\sqrt{n} \le \sqrt{1/10} < 1/3$.
    
    If $w$ is not a prefix of $x$, then none of $W_{i-1},\ldots,W_{n-1}$ can be a prefix of $x$, and hence $\E[N_\mathrm{effective}] = 0 \le 3$. In the rest of the proof we assume that $w$ is a prefix of $x$.
    
    For $i+1 \le j \le n$, observe that if $W_{j-1}$ is a prefix of $x$ then $W_{j-2}$ is a prefix of $x$ as well. More specifically,
    \begin{eqnarray*}
        \Pr\left[\text{$W_{j-1}$ is a prefix of $x$}\right]
        &=& \Pr\left[X_{j-1} = x_{j-1} \cond W_{j-2} = x_{1,\ldots,j-2} \right] \Pr\left[\text{$W_{j-2}$ is a prefix of $x$}\right] \\
        \text{[Repeated application]} &=& \prod_{k=i+1}^j \Pr\left[X_{k-1} = x_{k-1} \cond W_{k-2} = x_{1,\ldots,k-2} \right] \cdot \Pr\left[\text{$W_{i-1}$ is a prefix of $x$}\right] \\
        &=&\prod_{k=i+1}^j \mu(x_{k-1} | x_{1,\ldots,k-2}) \cdot 1
        \le \left(\frac{1 + \delta}{2}\right)^{j-i}
        \le \left(\frac{2}{3}\right)^{j-i}
    \end{eqnarray*}

    By linearity of expectation,
    \[  \E[N_\mathrm{effective}]
        = \sum_{j=i}^n \E\left[\mathbf 1 \left[\text{$W_{j-1}$ is a prefix of $x$}\right]\right]
        \le 1 + \sum_{j=i+1}^n \left(\frac{2}{3}\right)^{j-i}
        < \sum_{t=0}^\infty \left(\frac{2}{3}\right)^t
        = 3 \qedhere \]
\end{proof}

\begin{theorem} \label{th:lower-bound-prefix-new}
    For $n \ge 4.5 \cdot 10^{16}$ and $\eps \le e^{-11}$, let $\mathcal A$ be an algorithm that uses prefix conditional samples and results in a real number such that for every distribution $\mu$ over $\Omega = \{0,1\}^n$ there exists a set $G_\mu \subseteq \Omega$ of mass $\Pr_\mu[G_\mu] > 1 - \paperZGmuZerr$ for which $\Pr\left[\mathcal A(\mu,x) \in (1 \pm \eps)\mu(x)\right] > 1 - \paperZGmuZprZestZerr$ for every $x \in G_\mu$. In this setting, $\mathcal A$ must use $\Omega(n^2/\eps^2)$ prefix conditional samples.
\end{theorem}
\begin{proof}
    Recall that we can see every prefix condition sampling with $k$ free bits as a (random) sequence of $k$ marginal prefix condition sampling. In the $i$th step, we condition on the first already-fixed $n-k$ bits and on the additional $i-1$ already-sampled bits to obtain the $n-k+i$th bit. By Lemma \ref{lemma:expected-number-of-effective-marginal-samples}, for every individual prefix sample, the expected number of marginal samples whose condition contains $x$ is bounded by $3$. By linearity of expectation, the expected number of marginal samples whose prefix condition contains $x$ is bounded by $3q$.

    By Lemma \ref{lemma:lower-bound-prefix-marginal-access}, since the expected number of ``non-redundant'' samples is $O(q)$, the algorithm must make $q = \Omega(n^2/\eps^2)$ prefix conditional samples.
\end{proof}

\newpage
\appendix

\section{Lower bound for the sequence-distinguishing task}
\label{apx:ad-hoc-task}

In this appendix we show a lower bound for the sequence-distinguishing task. The goal is distinguishing between the following random sequences, regarding the parameters $\delta > 0$ and $r \ge 0$:
\begin{itemize}
    \item $D_n(\delta,0)$: $p_i = (1 + \delta)/2$ with probability $1/2$, and otherwise $p_i = (1 - \delta)/2$, independently for every $1 \le i \le n$.
    \item $D_n(\delta,r)$: $p_i = (1 + \delta)/2$ with probability $(1-r)/2$, and otherwise $p_i = (1 - \delta)/2$, independently for every $1 \le i \le n$.
\end{itemize}
In each step, the algorithm can choose an index $1 \le i \le n$ and obtain a bit drawn from $\Ber(p_i)$.


We show that, if $n \le \paperZcoefZrZsquared/r^2$, then we must make $\Omega(1/r^2 \delta^2)$ queries, even if we allow adaptive behavior. 

\subsection{Technical lemmas}

We first state a few technical lemmas.

\begin{lemma} \label{lemma:bounded-ratio-bounded-dkl}
    Let $0 \le t \le 1/4$. For two distributions $\mu$ and $\nu$ over $\Omega$ for which $\mu(x) \in (1 \pm t)\nu(x)$ for every $x \in \Omega$, $\DKL{\mu}{\nu} \le t^2 / \ln 2$.
\end{lemma}
\begin{proof}
    For every $x \in \Omega$ for which $\mu(x) \ge \nu(x)$, we can use $-\ln(1+x) \le -x + x^2$ (Lemma \ref{lemma:cheatsheet:neg-ln-1+x-le-neg-x-plus-x-sqr}), which holds for $\abs{x} \le 2/3$:
    \begin{eqnarray*}
        -\ln \frac{\nu(x)}{\mu(x)}
        = -\ln \left(1 + \frac{\nu(x) - \mu(x)}{\mu(x)}\right)
        &\le& -\frac{\nu(x) - \mu(x)}{\mu(x)} + \frac{(\nu(x) - \mu(x))^2}{(\mu(x))^2} \\
        &=& -\frac{\nu(x) - \mu(x)}{\mu(x)} + \left(\frac{\nu(x)}{\mu(x)} - 1\right)^2 \\
        \text{[Since $\nu(x) \le \mu(x) \le (1+t)\nu(x)$]} &\le& -\frac{\nu(x) - \mu(x)}{\mu(x)} + \left(\frac{1}{1+t} - 1\right)^2
        \le -\frac{\nu(x) - \mu(x)}{\mu(x)} + t^2
    \end{eqnarray*}
    
    For every $x \in \Omega$ for which $\mu(x) \le \nu(x)$, we can use $-\ln (1+x) \le -x + x^2/2$ (Lemma \ref{lemma:cheatsheet:neg-ln-1+x-le-neg-x-plus-x-sqr-half}):
    \begin{eqnarray*}
        -\ln \frac{\nu(x)}{\mu(x)}
        = -\ln \left(1 + \frac{\nu(x) - \mu(x)}{\mu(x)}\right)
        &\le& -\frac{\nu(x) - \mu(x)}{\mu(x)} + \frac{(\nu(x) - \mu(x))^2}{2(\mu(x))^2} \\
        &=& -\frac{\nu(x) - \mu(x)}{\mu(x)} + \frac{1}{2} \left(\frac{\nu(x)}{\mu(x)} - 1\right)^2 \\
        \text{[Since $(1-t) \nu(x) \le \mu(x) \le \nu(x)$]} &\le& -\frac{\nu(x) - \mu(x)}{\mu(x)} + \frac{1}{2} \left(\frac{1}{1-t} - 1\right)^2
        \le -\frac{\nu(x) - \mu(x)}{\mu(x)} + t^2
    \end{eqnarray*}
    (In the last transition we used the constraint $t \le 1/4$).

    Combined,
    \begin{eqnarray*}
        \ln 2 \cdot \DKL{\mu}{\nu}
        &=& -\sum_{x \in \Omega} \mu(x) \ln \frac{\nu(x)}{\mu(x)} \\
        &\le& \sum_{x \in \Omega} (-(\nu(x) - \mu(x)) + \mu(x) t^2)
        = -\sum_{x \in \Omega} (\nu(x) - \mu(x)) + t^2
        = 0 + t^2
        = t^2
    \end{eqnarray*}
\end{proof}

\begin{lemma} \label{lemma:symmetric-chi-square-bounded-by-sum-dkls} 
    Let $\mu$ and $\nu$ be two distributions over a domain $\Omega$. It holds that:
    \[ \sum_{x \in \Omega} \frac{(\mu(x) - \nu(x))^2}{\mu(x) + \nu(x)} \le (\DKL{\mu}{\nu} + \DKL{\nu}{\mu}) \cdot \ln 2 \]
\end{lemma}
\begin{proof}
    If $\mu$ and $\nu$ are supported on different sets, then $\DKL{\mu}{\nu} + \DKL{\nu}{\mu}$ is infinite, and trivially greater than the finite left-hand expression. We proceed under the assumption that $\mu$ and $\nu$ have the same support set.
    
    Let $A = \{ x \in \Omega : \mu(x) > \nu(x) \}$ and $B = \{ x \in \Omega : \mu(x) < \nu(x) \}$.
    \begin{eqnarray*}
        \DKL{\mu}{\nu} + \DKL{\nu}{\mu}
        &=& \sum_{x \in \Omega} \mu(x) \log \frac{\mu(x)}{\nu(x)} + \sum_{x \in \Omega} \nu(x) \log \frac{\nu(x)}{\mu(x)} \\
        &=& -\sum_{x \in A} (\mu(x) - \nu(x)) \log \frac{\nu(x)}{\mu(x)} - \sum_{x \in B} (\nu(x) - \mu(x)) \log \frac{\mu(x)}{\nu(x)}
    \end{eqnarray*}

    For the $A$-series, we use $-\ln(1+x) \ge -x$ (Lemma \ref{lemma:cheatsheet:neg-ln-1+x-ge-neg-x}) to obtain that:
    \begin{eqnarray*}
        -\sum_{x \in A} (\mu(x) - \nu(x)) \log \frac{\nu(x)}{\mu(x)}
        &=& -\sum_{x \in A} (\mu(x) - \nu(x)) \log \left(1 + \frac{\nu(x) - \mu(x)}{\mu(x)}\right) \\
        \text{[Since $\mu(x)-\nu(x) > 0$]} &\ge& -\frac{1}{\ln 2}\sum_{x \in A} (\mu(x) - \nu(x)) \frac{\nu(x) - \mu(x)}{\mu(x)} \\
        &=& \frac{1}{\ln 2}\sum_{x \in A} \frac{(\mu(x) - \nu(x))^2}{\mu(x)}
        \ge \frac{1}{\ln 2}\sum_{x \in A} \frac{(\mu(x) - \nu(x))^2}{\mu(x) + \nu(x)}
    \end{eqnarray*}

    Analogously, for the $B$-series,
    \begin{eqnarray*}
        -\sum_{x \in B} (\nu(x) - \mu(x)) \log \frac{\mu(x)}{\nu(x)}
        &=& -\sum_{x \in B} (\nu(x) - \mu(x)) \log \left(1 + \frac{\mu(x) - \nu(x)}{\nu(x)}\right) \\
        \text{[Since $\nu(x)-\mu(x) > 0$]} &\ge& -\frac{1}{\ln 2}\sum_{x \in B} (\nu(x) - \mu(x)) \frac{\mu(x) - \nu(x)}{\nu(x)} \\
        &=& \frac{1}{\ln 2}\sum_{x \in B} \frac{(\mu(x) - \nu(x))^2}{\nu(x)}
        \ge \frac{1}{\ln 2}\sum_{x \in B} \frac{(\mu(x) - \nu(x))^2}{\mu(x) + \nu(x)}
    \end{eqnarray*}

    Combined,
    \[  \DKL{\mu}{\nu} + \DKL{\nu}{\mu}
        \ge \frac{1}{\ln 2} \sum_{x \in \Omega} \frac{(\mu(x) - \nu(x))^2}{\mu(x) + \nu(x)} \qedhere \]
\end{proof}

\begin{lemma} \label{lemma:dkl-of-half-vs-half-and-r-bias}
    Let $\mu$ and $\nu$ be two distributions over a domain $\Omega$. If $\abs{r} < \frac{1}{2}$, then:
    \[ \DKL{\frac{1}{2}\mu + \frac{1}{2}\nu}{\frac{1+r}{2}\mu + \frac{1-r}{2}\nu} \le \frac{1}{2}r^2 (\DKL{\mu}{\nu} + \DKL{\nu}{\mu}) \]
\end{lemma}
\begin{proof}
    \begin{eqnarray*}
        \DKL{\frac{1}{2}\mu + \frac{1}{2}\nu}{\frac{1+r}{2}\mu + \frac{1-r}{2}\nu}
        &=& -\sum_{x \in \Omega} \frac{1}{2}(\mu(x) + \nu(x)) \log \frac{\frac{1+r}{2}\mu(x) + \frac{1-r}{2}\nu(x)}{\frac{1}{2}\mu(x) + \frac{1}{2}\nu(x)} \\
        &=& -\sum_{x \in \Omega} \frac{1}{2}(\mu(x) + \nu(x)) \log \left(1 + r \frac{\mu(x) - \nu(x)}{\mu(x) + \nu(x)}\right)
    \end{eqnarray*}
    
    Since $\abs{r} < 1/2$ and $\abs{\frac{\mu(x) - \nu(x)}{\mu(x) + \nu(x)}} \le 1$, we can use the bound $-\ln (1+x) \le -x+x^2$ (Lemma \ref{lemma:cheatsheet:neg-ln-1+x-le-neg-x-plus-x-sqr}) to obtain:
    \begin{eqnarray*}
        [\cdots] &\le& \frac{1}{\ln 2}\sum_{x \in \Omega} \frac{1}{2}(\mu(x) + \nu(x)) \left(-r \frac{\mu(x) - \nu(x)}{\mu(x) + \nu(x)} + r^2 \frac{(\mu(x) - \nu(x))^2}{(\mu(x) + \nu(x))^2}\right) \\
        &=& -\frac{1}{2 \ln 2} r \underbrace{\sum_{x \in \Omega} (\mu(x) - \nu(x))} + \frac{1}{2 \ln 2}r^2 \sum_{x \in \Omega} \frac{(\mu(x) - \nu(x))^2}{\mu(x) + \nu(x)} \\
        &=& \phantom{-\frac{1}{2\ln 2}r \sum_{x \in \Omega} (\mu(x} 0 \phantom{) - \nu(x))} \!\!\!+ \frac{1}{2 \ln 2}r^2 \sum_{x \in \Omega}  \frac{(\mu(x) - \nu(x))^2}{\mu(x) + \nu(x)} \\
        \text{[Lemma \ref{lemma:symmetric-chi-square-bounded-by-sum-dkls}]} &\le& \frac{1}{2}r^2 \left(\DKL{\mu}{\nu} + \DKL{\nu}{\mu}\right)
    \end{eqnarray*}
\end{proof}

\subsection{Lower bound for non-adaptive algorithms}
Before we introduce the lower bound against adaptive algorithms, we state a lower bound for non-adaptive algorithms. A non-adaptive algorithm draws $m_i$ samples from the $i$th index for every $1 \le i \le n$. The sample complexity is $q = \sum_{i=1}^n m_i$.

\begin{lemma} \label{lemma:ad-hoc-lower-bound-non-adaptive}
    For $\delta<1/3$, consider an algorithm that for every $i \ge 1$ draws $m_i$ samples from the $i$th index, and let $q = \sum_{i=1}^n m_i$. The KL-divergence of runs over inputs drawn from $D_n(\delta,0)$ and from runs over inputs drawn from $D_n(\delta,r)$ is bounded by $5 r^2 \delta^2 q$.
\end{lemma}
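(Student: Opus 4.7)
The plan is to exploit the two forms of product structure available to us under non-adaptivity. First, because the algorithm is non-adaptive, the full transcript organizes as $n$ blocks, where block $i$ consists of $m_i$ bits drawn i.i.d.\ from $\mathrm{Ber}(p_i)$. Since under both $D_n(\delta,0)$ and $D_n(\delta,r)$ the hidden parameters $p_1,\dots,p_n$ are independent, the joint distribution of the transcript under each input model is a product over indices. By Lemma~\ref{lemma:cheatsheet:dkl-of-product} (iterated), the KL-divergence between the two run distributions decomposes as the sum of per-index KL-divergences, so it suffices to bound the contribution of a single index $i$ and then sum.

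For a fixed $i$, set $\lambda_1=\frac{1-\delta}{2}$, $\lambda_2=\frac{1+\delta}{2}$, and let $\mu_i$ and $\nu_i$ denote the product distributions $\mathrm{Ber}(\lambda_1)^{m_i}$ and $\mathrm{Ber}(\lambda_2)^{m_i}$ over $\{0,1\}^{m_i}$, respectively. Then block~$i$ of the transcript is distributed as $\tfrac{1}{2}\mu_i+\tfrac{1}{2}\nu_i$ under $D_n(\delta,0)$ and as $\tfrac{1+r}{2}\mu_i+\tfrac{1-r}{2}\nu_i$ under $D_n(\delta,r)$. Since $r<1/12<1/2$, Lemma~\ref{lemma:dkl-of-half-vs-half-and-r-bias} applies and bounds the per-index KL-divergence by $\tfrac{1}{2}r^2\bigl(\DKL{\mu_i}{\nu_i}+\DKL{\nu_i}{\mu_i}\bigr)$.

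Another application of Lemma~\ref{lemma:cheatsheet:dkl-of-product} (or Lemma~\ref{lemma:cheatsheet:dkl-bin-m-p-m-q} together with the fact that the empirical count is a sufficient statistic for i.i.d.\ Bernoullis) gives $\DKL{\mu_i}{\nu_i}=m_i\dkl(\lambda_1,\lambda_2)$ and symmetrically for the reverse direction. Lemma~\ref{lemma:cheatsheet:dkl-quadratic-bound} then bounds $\dkl(\lambda_1,\lambda_2)\le\frac{(\lambda_1-\lambda_2)^2}{\lambda_2(1-\lambda_2)}=\frac{4\delta^2}{1-\delta^2}$, and the hypothesis $\delta<1/3$ lets us replace $1-\delta^2$ by a constant at least $8/9$, so that each inner KL-divergence is at most $\tfrac{9}{2}\delta^2 m_i$. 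Combining, the per-index contribution is at most $\tfrac{1}{2}r^2\cdot 2\cdot\tfrac{9}{2}\delta^2 m_i=\tfrac{9}{2}r^2\delta^2 m_i$; summing over $i$ and using $\sum_i m_i=q$ yields the bound $\tfrac{9}{2}r^2\delta^2 q\le 5r^2\delta^2 q$. There is no real obstacle here — the proof is essentially a chain of invocations of the fact-sheet lemmas. The only conceptual step worth highlighting is that non-adaptivity is precisely what lets us factor the transcript distribution over indices, which is why the argument must be phrased separately from the adaptive lower bound to come.
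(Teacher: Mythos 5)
Your proof is correct and follows essentially the same route as the paper: decompose the transcript distribution as a product over indices via Lemma~\ref{lemma:cheatsheet:dkl-of-product}, bound each per-index term by $\tfrac{1}{2}r^2$ times the symmetrized KL via Lemma~\ref{lemma:dkl-of-half-vs-half-and-r-bias}, and then compute that KL via Lemmas~\ref{lemma:cheatsheet:dkl-bin-m-p-m-q} and \ref{lemma:cheatsheet:dkl-quadratic-bound}. The only cosmetic difference is that the paper passes to the per-column binomial counts before applying the mixture lemma while you keep the raw Bernoulli product; since the count is a sufficient statistic (and since Lemma~\ref{lemma:dkl-of-half-vs-half-and-r-bias} applies equally well to either representation), the two are interchangeable, and your final constant $9/2 \le 5$ is, if anything, slightly cleaner than the paper's $4/(1-\delta^2)\le 5$.
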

\begin{proof}
    A \emph{run} of the non-adaptive algorithm over an input is a pseudo-matrix $M \in \prod_{i=1}^n \{0,1\}^{m_i}$ consisting of the result bits. Since the matrix columns distribute binomially and independently, we can reduce the run matrix into a run sequence $M \in \prod_{i=1}^n \{0,\ldots,m_i\}$, where the $i$th entry corresponds to the number of $1$s in the $i$th column.

    Consider the following four distributions:
    \begin{align*}
        & \tau^{(i)}_- : \Bin\left(m_i, \frac{1-\delta}{2}\right), &&
        \tau^{(i)}_+ : \Bin\left(m_i, \frac{1+\delta}{2}\right) \\
        & \mu^{(i)} : \frac{1}{2}\tau^{(i)}_- + \frac{1}{2}\tau^{(i)}_+, &&
        \nu^{(i)} : \frac{1+r}{2}\tau^{(i)}_- + \frac{1-r}{2}\tau^{(i)}_+
    \end{align*}
    
    Clearly, A run over inputs drawn from $D_n(\delta,0)$ distributes like $\prod_{i=1}^n \mu^{(i)}$ and a run over inputs drawn from $D_n(\delta,r)$ distributes like $\prod_{i=1}^n \nu^{(i)}$. By Lemma \ref{lemma:cheatsheet:dkl-of-product}, the KL-divergence of runs is bounded by:
    \begin{eqnarray*}
        \DKL{\prod_{i=1}^n \mu^{(i)}}{\prod_{i=1}^n \nu^{(i)}}
        &=& \sum_{i=1}^n \DKL{\mu^{(i)}}{\nu^{(i)}} \\
        \text{[Lemma \ref{lemma:dkl-of-half-vs-half-and-r-bias}]} &\le& \sum_{i=1}^n \frac{1}{2}r^2 \left(\DKL{\tau^{(i)}_-}{ \tau^{(i)}_+} + \DKL{\tau^{(i)}_+}{\tau^{(i)}_-}\right) \\
        \text{[Lemma \ref{lemma:cheatsheet:dkl-bin-m-p-m-q}]} &=& \frac{1}{2}r^2 \sum_{i=1}^n m_i \cdot \left(\dkl\left(\frac{1-\delta}{2}, \frac{1+\delta}{2}\right) + 
        \dkl\left(\frac{1+\delta}{2}, \frac{1-\delta}{2}\right)\right) \\
        \text{[Lemma \ref{lemma:cheatsheet:dkl-quadratic-bound}]} &\le& \frac{1}{2}r^2 \left(\sum_{i=1}^n m_i\right) \cdot \left(\frac{\left(\frac{1 - \delta}{2} - \frac{1 + \delta}{2}\right)^2}{\frac{1+\delta}{2} \frac{1-\delta}{2}} + \frac{\left(\frac{1 + \delta}{2} - \frac{1 - \delta}{2}\right)^2}{\frac{1-\delta}{2} \frac{1+\delta}{2}} \right)
        \\ 
        &=& \frac{1}{2}r^2 q \cdot \left(\frac{4\delta^2}{1-\delta^2} + \frac{4\delta^2}{1-\delta^2}\right)
        = \frac{4}{1 - \delta^2} r^2 \delta^2 q 
        \underset{\delta<1/3}\le 5 r^2 \delta^2 q
    \end{eqnarray*}
\end{proof}

\subsection{Lower bound for adaptive algorithms}
Adaptive algorithms are allowed to choose their next step based on past samples. Therefore, their analysis is more difficult. By Yao's principle \cite{yao77}, a probabilistic algorithm is equivalent to a distribution over deterministic algorithms. Moreover, the success probability of a probabilistic algorithm is the expected success probability over the choice of a deterministic algorithm.

Every deterministic adaptive algorithm can be described as a decision tree, where edges correspond to samples ($0$ and $1$) and every non-leaf node is associated with the index $1 \le i \le n$ to sample in the following step. A run of a deterministic algorithm is its execution path, which corresponds to the leaf the algorithm reaches in its last step.

To show a lower bound of $q$ samples, we have to show that for every decision tree of height $q$, the total-variation distance between the distribution of leaves reached when the input is drawn from $D_n(\delta,0)$ and their distribution when the input is drawn from $D_n(\delta,r)$ is bounded by some $d < 1/3$. This implies that the success probability, which cannot exceed the maximal success probability of a deterministic algorithm, is at most $1-d$. By Pinsker's inequality, it suffices to bound the KL-divergence of the leaf distributions by $2d^2$.

We repeatedly apply the chain-rule for KL-divergence (Lemma \ref{lemma:cheatsheet:chain-rule-dkl}) to obtain that the KL-divergence of the leaf distributions is bounded by the expected KL-divergence of every step individually.

To analyze an adaptive algorithm, whose structure can be arbitrarily detailed, we simulate it using a stronger model whose runs have shorter descriptions. Let $m = \floor{\frac{d^2}{\paperZanticoefZmZbyZrecpZdeltasqr \delta^2}}$ and $q \le \frac{d^2}{\paperZanticoefZqZbyZmn}mn$. At the initialization phase, we run an $m$-height rectangle algorithm, which is a non-adaptive algorithm that makes $m$ queries in each index $1 \le i \le n$, obtaining a pre-processing sampling matrix $M \in \{0,1\}^{m\times n}$ at the cost of $mn$ samples. For every individual index $1 \le i \le n$, the first $m$ queries are answered based on the pre-processing sampling matrix. Once the algorithm makes the $m+1$st sample on the $i$th index, the simulator reveals the exact value of $p_i$, and exactly simulates further queries to the $i$th index using the full-knowledge about $p_i$.

\begin{observation} \label{obs:markov-high-columns}
    Let $d > 0$. An adaptive algorithm that draws $q \le \frac{d^2}{\paperZanticoefZqZbyZmn} mn$ samples has at most $\frac{d^2}{\paperZanticoefZqZbyZmn} n$ indexes sampled more than $m$ times.
\end{observation}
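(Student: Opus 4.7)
The plan is to apply a simple counting (Markov-style) bound: if an index is sampled more than $m$ times, then it alone contributes more than $m$ samples to the total sample budget $q$, so the number of such indexes is bounded by $q/m$.

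More concretely, I would let $k$ denote the number of indexes that are sampled more than $m$ times during the adaptive execution, and let $s_i \in \mathbb{Z}_{\ge 0}$ be the number of samples drawn at index $i$. Each index counted by $k$ satisfies $s_i \ge m+1 > m$, and trivially $\sum_{i=1}^n s_i = q$ (it is the total number of samples). Restricting the sum to indexes in the ``heavy'' set gives $km < \sum_{i : s_i > m} s_i \le q$, hence $k < q/m$.

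Plugging in the hypothesis $q \le \paperZcoefZqZbyZmnZfracform \, d^2 mn$ yields $k \le \paperZcoefZqZbyZmnZfracform \, d^2 n$, as claimed. There is no real obstacle here; the only subtle point is that the adaptivity of the algorithm plays no role in the argument, since the identity $\sum_i s_i = q$ holds pathwise for every deterministic execution of the decision tree (and hence also in expectation for randomized algorithms if one later wants that form). The bound therefore holds deterministically for every run, which is exactly what the subsequent reduction to the $m$-height rectangle simulator will need.
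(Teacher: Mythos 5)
Your counting argument is correct and is exactly the Markov-style bound the paper relies on for this observation (which it states without proof): since $\sum_i s_i = q$ pathwise, at most $q/m \le \frac{d^2}{1000} n$ indexes can each absorb more than $m$ samples, regardless of adaptivity. Nothing to add.
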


\begin{observation}
    Every run of an adaptive algorithm with $q \le \frac{d^2}{\paperZanticoefZqZbyZmn} mn$ samples can be described as a tuple of the pre-processing sampling matrix $M$, the set $I$ of at most $\frac{d^2}{\paperZanticoefZqZbyZmn} n$ fully-revealed indexes and a map $f : I \to \{+,-\}$ corresponding to the exact value of $p_i \in \frac{1\pm \delta}{2}$.
\end{observation}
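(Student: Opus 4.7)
The plan is to show that during any run with $q \le \frac{d^2}{\paperZanticoefZqZbyZmn} mn$ samples, all input-dependent information acquired by the simulator is captured by the triple $(M, I, f)$, while any remaining randomness in the run is independent of the unknown input. This reduces the forthcoming analysis of adaptive algorithms to a finite combinatorial object amenable to the chain-rule KL-divergence bounds.

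First I would fix a deterministic adaptive algorithm (appealing to Yao's principle for probabilistic ones) and trace through the simulator step by step. For each index $1 \le i \le n$, let $c_i$ denote the total number of samples taken on index $i$ during the run. By definition of the simulator, the first $\min(c_i, m)$ answers on index $i$ are read off the $i$-th column of the pre-processing matrix $M$. If $c_i \le m$, no further input-dependent information about index $i$ is acquired. If $c_i > m$, then at the $(m{+}1)$-st query the simulator exposes the exact value of $p_i \in \{\tfrac{1-\delta}{2}, \tfrac{1+\delta}{2}\}$, encoded as the single sign $f(i) \in \{+,-\}$; each subsequent sample on that index is drawn from $\Ber\!\left(\tfrac{1 + f(i)\delta}{2}\right)$ and may be generated using $f(i)$ together with fresh local randomness.

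Next I would define $I = \{i : c_i > m\}$, which is precisely the set of indexes whose values are fully revealed during the run, and invoke Observation \ref{obs:markov-high-columns} to conclude $|I| \le \frac{d^2}{\paperZanticoefZqZbyZmn} n$. Since the algorithm is deterministic and its next action depends only on its history of (index, answer) pairs, the entire history --- and hence the run --- is reconstructible from $(M, I, f)$ together with the post-revelation local randomness that does not depend on the input. The only subtle point, which I expect to be the main (though mild) obstacle, is verifying that the post-revelation samples are indeed conditionally independent of the unknown input given $f$: this follows because for $i \in I$ the relevant parameter is $p_i = \tfrac{1 + f(i)\delta}{2}$, a function of $f(i)$ alone, so no additional input information leaks through these samples. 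Everything else is bookkeeping.
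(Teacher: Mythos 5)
Your proposal is correct and fills in the argument that the paper leaves implicit (the statement is an unproved observation). You correctly trace the simulator's two phases, invoke the preceding pigeonhole observation to bound $\abs{I}$, and identify the one substantive point --- that post-revelation samples on an index $i \in I$ depend on the input only through $f(i)$ and are otherwise fresh, input-independent randomness --- which is exactly what makes $(M,I,f)$ a sufficient description of the run for the subsequent KL-divergence analysis.
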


\begin{lemma} \label{lemma:lower-bound-ad-hoc-adaptive}
    Let $\delta<1/3$, $r < 1/12$, $n \le \paperZcoefZrZsquared/r^2$ and $d > 0$. For every adaptive algorithm with $q \le \frac{d^4}{\paperZanticoefZqZbyZnZoverZdeltasqr \delta^2}n$ samples, the distance between runs over inputs drawn from $D_n(\delta,0)$ and runs over inputs drawn from $D_n(\delta,r)$ is bounded by $d$.
\end{lemma}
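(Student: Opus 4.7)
My plan is to work in the stronger model (preprocessing matrix $M$ plus index revelations) introduced just before the lemma, bound $\DKL{P_{M,\mathcal H}}{Q_{M,\mathcal H}}$ for the two augmented-run distributions, and translate to a $\dtv$ bound via Pinsker's inequality (Lemma \ref{lemma:cheatsheet:pinsker}) and the information-processing inequality (Lemma \ref{lemma:cheatsheet:information-processing-inequality}). Here $P$ and $Q$ are the augmented-run distributions when the input is drawn from $D_n(\delta,0)$ and $D_n(\delta,r)$ respectively, and $\mathcal H = (e_1, e_2, \ldots)$ is the sequence of post-preprocessing events, each of which is either a deterministic read from $M$, a revelation of some $s_{i_k}\in\{+,-\}$ followed by a fresh Bernoulli sample, or a Bernoulli sample from an already-revealed index.

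I split the divergence via the chain rule (Lemma \ref{lemma:cheatsheet:chain-rule-dkl}) as
\[
\DKL{P_{M,\mathcal H}}{Q_{M,\mathcal H}} = \DKL{P_M}{Q_M} + \E_{P_M}\!\left[\DKL{P_{\mathcal H\mid M}}{Q_{\mathcal H\mid M}}\right].
\]
The first term equals the $\dkl$ of a purely non-adaptive $mn$-sample run, so Lemma \ref{lemma:ad-hoc-lower-bound-non-adaptive} bounds it by $5r^2\delta^2mn$; substituting $m \le d^2/(\paperZanticoefZmZbyZrecpZdeltasqr\,\delta^2)$ together with the hypothesis $n \le \paperZcoefZrZsquared/r^2$ makes this at most $\tfrac{4}{5} d^2$.

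For the second term I chain-rule again event-by-event along $\mathcal H$. Deterministic-read events contribute zero (identical Diracs under $P$ and $Q$), and post-revelation Bernoulli events also contribute zero (identical $\Ber(p(s_{i_k}))$ under both, with $s_{i_k}$ already belonging to the shared history); only the revelation events contribute. At each such event the distribution of $s_{i_k}$ is a Bernoulli posterior whose $P$- and $Q$-parameters have the explicit forms $\rho/(1+\rho)$ and $\rho/(\rho+\beta)$, with $\beta = (1+r)/(1-r)$ and $\rho = \big((1+\delta)/(1-\delta)\big)^{2k_{i_k}-m}$. An elementary maximization of $\rho(\beta-1)^2/((1+\rho)^2\beta)$ over $\rho > 0$ (attained at $\rho=1$), combined with Lemma \ref{lemma:cheatsheet:dkl-quadratic-bound}, yields a uniform per-event bound of $O(r^2)$. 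The hypothesis $q \le d^4 n/(\paperZanticoefZqZbyZnZoverZdeltasqr\,\delta^2)$ matches $q = O(d^2) \cdot mn$, so Observation \ref{obs:markov-high-columns} deterministically bounds $|I| = O(d^2 n)$; using $nr^2 \le \paperZcoefZrZsquared$, the second term is also $O(d^2)$ with a small enough constant. Summing both terms yields $\DKL{P_{M,\mathcal H}}{Q_{M,\mathcal H}} < 2 d^2$, and Pinsker plus data-processing give $\dtv(P_{\text{run}}, Q_{\text{run}}) \le d$.

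\paragraph{Main obstacle.} The subtlest step will be the uniform per-revelation KL bound: $\rho$ ranges over all of $(0, \infty)$, yet I need $O(r^2)$ regardless. A parallel technical point is verifying that only revelation events contribute to the chain-rule sum over $\mathcal H$, which requires pinning down exactly which parts of the run's randomness are identically distributed under $P$ and $Q$ given each relevant history; this follows from the structure of the stronger model, since both deterministic reads and post-reveal Bernoulli samples depend only on information ($M$ and previously revealed $s_j$) whose conditional distributions coincide under the two priors.
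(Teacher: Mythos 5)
Your proof is correct and follows the same two-phase simulator structure as the paper's own argument: chain-rule the augmented run into the preprocessing matrix $M$ plus the post-preprocessing history, bound the $M$-contribution via Lemma~\ref{lemma:ad-hoc-lower-bound-non-adaptive} to get $\le \frac{4}{5}d^2$, observe that only revelation events contribute (deterministic reads and post-reveal Bernoullis have zero conditional divergence), and multiply a per-revelation KL bound by the count of revelations from Observation~\ref{obs:markov-high-columns}, finishing with Pinsker and the information processing inequality. The one place you diverge from the paper is the per-revelation bound: the paper shows that the posterior masses of $s_i$ under $D_n(\delta,0)$ and $D_n(\delta,r)$ agree up to a $(1\pm 3r)$-factor and applies Lemma~\ref{lemma:bounded-ratio-bounded-dkl} to obtain $\le 16r^2$, whereas you parametrize the posteriors as $\Ber\!\left(\frac{\rho}{1+\rho}\right)$ and $\Ber\!\left(\frac{\rho}{\rho+\beta}\right)$, apply the $\chi^2$-type bound of Lemma~\ref{lemma:cheatsheet:dkl-quadratic-bound}, and maximize $\frac{\rho(\beta-1)^2}{(1+\rho)^2\beta}$ at $\rho=1$ to obtain the sharper constant $\frac{(\beta-1)^2}{4\beta}=\frac{r^2}{1-r^2}$. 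Your variant is tighter and makes the dependence on the likelihood ratio $\rho$ completely explicit, while the paper's bounded-ratio lemma avoids the maximization at the cost of a larger constant; both comfortably fit within the $2d^2$ budget, so the difference is cosmetic for the statement as written.
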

We believe that Lemma \ref{lemma:lower-bound-ad-hoc-adaptive} holds for all $n$ (possibly up to a constant factor), but $n\le \paperZcoefZrZsquared/r^2$ suffices for the main result of the paper.
\begin{proof}
    The KL-divergence of the algorithm runs cannot be higher than the KL-divergence of the simulator runs, by the information processing inequality (Lemma \ref{lemma:cheatsheet:information-processing-inequality}).

    Let $m = \floor{\frac{d^2}{\paperZanticoefZmZbyZrecpZdeltasqr \delta^2}}$. The initialization phase of the simulator consists of $m$ samples in every index $1 \le i \le n$, and hence by Lemma \ref{lemma:ad-hoc-lower-bound-non-adaptive}, the KL-divergence of this phase for $D_n(\delta,0)$ and $D_n(\delta,r)$ is bounded by $5(r^2 n)(\delta^2 m) \le \frac{5\cdot \paperZcoefZrZsquared}{\paperZanticoefZmZbyZrecpZdeltasqr} d^2 = 0.8 d^2$.

    In the following, the interaction of the simulator with the input is restricted to revealing the exact value of $p_i$ of some index $i$ sampled by the algorithm (for the $m+1$st time, which is at most once per index). Consider the history $(M,I,f)$ and the queried index $i = \mathrm{ALG}(M,I,f)$.

     For a matrix $M$, let $k_i$ be the number of $1$s in the $i$th column of $M$. Observe that, once the input distribution is fixed (to either $D_n(\delta,0)$ or $D_n(\delta,r)$), the indexes are fully independent, and hence the posterior distribution of $p_i$ only depends on $k_i$.
    \begin{eqnarray*}
        \Pr_{D_n(\delta,0)}\left[p_i = \frac{1-\delta}{2} \cond M,I,f,i \right]
        &=& \Pr_{\mu}\left[p_i = \frac{1-\delta}{2} \cond k_i,i \right]
        = \frac{\frac{1}{2}\tau^{(i)}_-(k_i)}{\frac{1}{2}\tau^{(i)}_-(k_i) + \frac{1}{2}\tau^{(i)}_+(k_i)} \\
        \Pr_{D_n(\delta,r)}\left[p_i = \frac{1-\delta}{2} \cond M,I,f,i \right]
        &=& \Pr_{\nu}\left[p_i = \frac{1-\delta}{2} \cond k_i,i \right]
        = \frac{\frac{1+r}{2}\tau^{(i)}_-(k_i)}{\frac{1+r}{2}\tau^{(i)}_-(k_i) + \frac{1-r}{2}\tau^{(i)}_+(k_i)}
    \end{eqnarray*}

    From the above we deduce that:
    \begin{eqnarray*}
        \frac{\Pr_{D_n(\delta,0)}\left[p_i = \frac{1-\delta}{2} \cond M,I,f,i \right]}{\Pr_{D_n(\delta,r)}\left[p_i = \frac{1-\delta}{2} \cond M,I,f,i \right]} &=& \frac{1}{1+r}\left(1 + r \frac{\tau^{(i)}_-(k_i) - \tau^{(i)}_+(k_i)}{\tau^{(i)}_-(k_i) + \tau^{(i)}_+(k_i)}\right) \in \frac{1 \pm r}{1 + r} \subseteq [1-3r, 1+3r]
    \end{eqnarray*}

    A similar analysis shows that $\Pr_{D_n(\delta,0)}\left[p_i = \frac{1+\delta}{2} \cond M,I,f,i \right] \in (1 \pm 3r) \Pr_{D_n(\delta,r)}\left[p_i = \frac{1+\delta}{2} \cond M,I,f,i \right]$ as well. By Lemma \ref{lemma:bounded-ratio-bounded-dkl}, since $r < 1/12$,
    \[
        \E_{(M,I,f;i)} \dkl\left(\Pr_{D_n(\delta,0)}\left[p_i = \frac{1+\delta}{2} \cond M,I,f,i \right], \Pr_{D_n(\delta,r)}\left[p_i = \frac{1+\delta}{2} \cond M,I,f,i \right]\right)
        \le \frac{(3r)^2}{\ln 2}
        \le 16r^2\]
    That is, the posterior distributions of the obtained $p_i$ according to whether the input is drawn from $D_n(\delta,0)$ or from $D_n(\delta,r)$ are $16r^2$-close with respect to the KL-divergence.

    By Observation \ref{obs:markov-high-columns}, since $q \le \frac{d^2}{\paperZanticoefZqZbyZmn} mn$, there can be only 
    $\frac{d^2}{\paperZanticoefZqZbyZmn} n$ revealings. Hence, the additional KL-divergence in the second phase is bounded by $16r^2 \cdot \frac{d^2}{\paperZanticoefZqZbyZmn} n \le \frac{16d^2}{\paperZanticoefZqZbyZmn} r^2 \cdot (\paperZcoefZrZsquared/r^2) = \frac{16 \cdot \paperZcoefZrZsquared d^2}{\paperZanticoefZqZbyZmn} = 1.024 d^2$.

    Overall, The total KL-divergence of the simulator is bounded by $0.8 d^2 + 1.024 d^2 < 2d^2$. By Pinsker's inequality, the total-variation distance between the runs is bounded by $d$.
\end{proof}

\newpage
\section{Common distributions and fact sheet}
\label{apx:common-and-facts}

\subsection{Common distributions}

\begin{definition}[Bernoulli distribution]
    Bernoulli distribution $\Ber(p)$ is the distribution over $\{0,1\}$ for which $\Pr[1] = p$ and $\Pr[0] = 1-p$.
\end{definition}

\begin{definition}[Binomial distribution]
    The binomial distribution $\Bin(n,p)$ is the sum of $n$ independent samples from the Bernoulli distribution $\Ber(p)$.
\end{definition}


\begin{definition}[Normal distribution]
    The standard normal distribution over $\mathbb R$ is defined by the probability density function $\phi(x) = \frac{1}{\sqrt{2\pi}} e^{-x^2/2}$.
\end{definition}

\begin{definition}[Cumulative probability function of the normal distribution]
    The cumulative probability function of the normal distribution is denoted by $\Phi(t) = \frac{1}{\sqrt{2\pi}} \int_{-\infty}^t e^{-x^2/2} dx$.
\end{definition}

\subsection{Fact sheet}

In the following we provide some well-known bounds and identities.

\begin{lemma}[Pinsker's inequality] \label{lemma:cheatsheet:pinsker}
    $\DKL{\mu}{\tau} \ge 2(\dtv(\mu,\tau))^2$.
\end{lemma}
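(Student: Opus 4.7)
The plan is a two-step reduction: first push $\mu$ and $\tau$ forward to a two-point distribution on the set where $\mu$ dominates $\tau$, then verify the resulting scalar inequality $\dkl(p,q) \ge 2(p-q)^2$ by elementary calculus.

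For the first step I let $A = \{x \in \Omega : \mu(x) \ge \tau(x)\}$ and set $p = \mu(A)$, $q = \tau(A)$. A standard rewriting of the total-variation distance gives $\dtv(\mu,\tau) = p - q$. Applying the log-sum inequality separately to the two partial sums
\begin{equation*}
\sum_{x \in A} \mu(x)\log\frac{\mu(x)}{\tau(x)} \ge p \log\frac{p}{q}, \qquad \sum_{x \notin A} \mu(x)\log\frac{\mu(x)}{\tau(x)} \ge (1-p)\log\frac{1-p}{1-q},
\end{equation*}
and summing them yields $\DKL{\mu}{\tau} \ge \dkl(p,q)$, so it suffices to prove the scalar inequality. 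This step can equivalently be phrased as the data-processing property of $\DKL{\cdot}{\cdot}$ under the measurable map $\mathbbm{1}_A : \Omega \to \{0,1\}$.

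For the scalar case I fix $p$ and study $g(q) := \dkl(p,q) - 2(p-q)^2$ on $q \in (0,1)$. A routine derivative calculation (recalling that $\dkl$ is measured in bits, so differentiating the logarithms contributes a factor $1/\ln 2$) gives
\begin{equation*}
g'(q) \;=\; (q-p)\left(\frac{1}{\ln 2 \cdot q(1-q)} - 4\right),
\end{equation*}
whose second factor is strictly positive throughout $(0,1)$ because $q(1-q) \le 1/4$ and $1/\ln 2 > 1$. Hence $g$ is strictly decreasing on $(0,p)$ and strictly increasing on $(p,1)$, so $g(q) \ge g(p) = 0$ on $(0,1)$, with the endpoints handled by continuity under the usual convention $\dkl(p,q) = +\infty$ when $q \in \{0,1\}$ but $p \neq q$. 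There is no serious obstacle here; the only mild subtlety is the base-$2$ convention used in the paper, and since replacing $\ln$ by $\log_2$ only scales the left-hand side up by a factor $1/\ln 2$ relative to the natural-log version of Pinsker, the stated constant $2$ is comfortably met.
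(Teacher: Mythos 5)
Your proof is correct. Note, however, that the paper does not prove this lemma at all: Pinsker's inequality appears in the ``Fact sheet'' subsection as a known result cited without proof, so there is no in-paper argument to compare against. What you wrote is the standard textbook derivation: first reduce to the binary case via the log-sum inequality (equivalently, the data-processing inequality applied to the indicator of $A = \{x : \mu(x) \ge \tau(x)\}$, together with the identity $\dtv(\mu,\tau) = \mu(A) - \tau(A)$), then verify the scalar inequality $\dkl(p,q) \ge 2(p-q)^2$ by showing $g(q) = \dkl(p,q) - 2(p-q)^2$ has $g(p)=0$ and $g'(q) = (q-p)\bigl(\tfrac{1}{\ln 2 \cdot q(1-q)} - 4\bigr)$ with strictly positive second factor, so $q=p$ is the global minimum. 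Your handling of the base-$2$ convention is also right: the paper's $\dkl$ is $1/\ln 2$ times the nats version, which only helps, and indeed the bracket $\tfrac{1}{\ln 2 \cdot q(1-q)} - 4 \ge \tfrac{4}{\ln 2} - 4 > 0$ remains positive even at $q = 1/2$. All steps check out.
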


\begin{lemma} \label{lemma:cheatsheet:neg-ln-1+x-ge-neg-x}
    For $x > -1$, $-\ln (1+x) \ge -x$.
\end{lemma}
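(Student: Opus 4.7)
The plan is to reduce the inequality to a standard one-variable calculus argument. Define $f(x) = x - \ln(1+x)$ on the domain $x > -1$; the claim $-\ln(1+x) \ge -x$ is equivalent to $f(x) \ge 0$ on that domain. I would establish this by differentiating: $f'(x) = 1 - \frac{1}{1+x} = \frac{x}{1+x}$, which is negative on $(-1, 0)$, zero at $x = 0$, and positive on $(0, \infty)$. Hence $f$ attains its global minimum on $(-1, \infty)$ at $x = 0$, and $f(0) = 0 - \ln 1 = 0$, giving $f(x) \ge 0$ everywhere on the domain.

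An alternative route I would mention in one line is a convexity argument: since $\ln$ is concave, it lies below any of its tangent lines. The tangent to $\ln y$ at $y = 1$ is the line $y - 1$, so $\ln y \le y - 1$ for every $y > 0$. Substituting $y = 1 + x$ with $x > -1$ yields $\ln(1+x) \le x$, equivalently $-\ln(1+x) \ge -x$. Either route is essentially a two-line proof.

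There is no real obstacle here; the only thing to be slightly careful about is making sure the domain $x > -1$ is used consistently (so that $1 + x > 0$ and the logarithm is defined), and noting that equality holds exactly at $x = 0$, which is immediate from either derivation.
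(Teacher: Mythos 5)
Your proof is correct; both the calculus argument (minimizing $f(x) = x - \ln(1+x)$ at $x=0$) and the tangent-line-to-a-concave-function argument are standard and valid on the domain $x > -1$. The paper states this lemma in its fact sheet of well-known inequalities without giving a proof, so there is nothing to compare against; your writeup would be a perfectly adequate justification if one were wanted.
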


\begin{lemma} \label{lemma:cheatsheet:neg-ln-1+x-le-neg-x-plus-x-sqr}
    For $x \ge -2/3$, $-\ln (1+x) \le -x+x^2$.
\end{lemma}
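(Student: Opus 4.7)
The inequality is equivalent to $g(x) := \ln(1+x) - x + x^2 \ge 0$ on $[-2/3, \infty)$, so the plan is to analyze $g$ via its derivative and pin down the minimum.

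First I would compute
\[
g'(x) \;=\; \frac{1}{1+x} - 1 + 2x \;=\; \frac{1 - (1+x) + 2x(1+x)}{1+x} \;=\; \frac{x(1+2x)}{1+x}.
\]
On the domain of interest we have $1+x \ge 1/3 > 0$, so the sign of $g'(x)$ equals the sign of $x(1+2x)$. Hence $g' > 0$ on $[-2/3, -1/2)$, $g' < 0$ on $(-1/2, 0)$, and $g' > 0$ on $(0, \infty)$. This places a local maximum at $x=-1/2$ and a local minimum at $x=0$, so the global minimum of $g$ on $[-2/3, \infty)$ is attained either at $x=0$ or at the left endpoint $x=-2/3$.

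At $x=0$ the value is $g(0) = 0$. At the endpoint,
\[
g(-2/3) \;=\; \ln(1/3) + \tfrac{2}{3} + \tfrac{4}{9} \;=\; \tfrac{10}{9} - \ln 3,
\]
so the remaining step is the numerical check $\ln 3 \le 10/9$, equivalently $e^{10/9} \ge 3$. This follows from the elementary bound $e^{1/9} \ge 1 + 1/9$, which gives $e^{10/9} \ge e \cdot \tfrac{10}{9} > 2.71 \cdot \tfrac{10}{9} > 3$. Combining the two cases yields $g(x) \ge 0$ on $[-2/3, \infty)$, completing the proof.

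The only mildly delicate point is the endpoint verification $\ln 3 \le 10/9$; everything else is a one-line derivative computation and sign chart. If one wished to avoid even the numerical check, one could instead cite a standard Taylor-with-remainder bound for $\ln(1+x)$, but the derivative argument above is self-contained and matches the style of the other fact-sheet lemmas.
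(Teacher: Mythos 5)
Your proof is correct. The derivative computation $g'(x) = \frac{x(1+2x)}{1+x}$ is right, the sign analysis correctly identifies the minimum of $g$ on $[-2/3,\infty)$ as occurring at $x=0$ or the endpoint $x=-2/3$, and both endpoint checks ($g(0)=0$ and $g(-2/3)=10/9-\ln 3>0$, via $e\cdot\tfrac{10}{9}>3$) go through. The paper lists this lemma in its fact sheet as a well-known bound without supplying a proof, so there is no paper argument to compare against; your self-contained derivative-and-sign-chart argument is a perfectly reasonable way to establish it.
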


\begin{lemma} \label{lemma:cheatsheet:dkl-of-product}
    Given distributions $\mu_1,\nu_1$ over $\Omega_1$ and $\mu_2,\nu_2$ over $\Omega_2$, $\DKL{\mu_1 \times \mu_2}{\nu_1 \times \nu_2} = \DKL{\mu_1}{\nu_1} + \DKL{\mu_2}{\nu_2}$.
\end{lemma}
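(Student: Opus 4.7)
The plan is to prove this by direct expansion of the definition of KL-divergence, exploiting the fact that the logarithm of a product is the sum of logarithms and that the marginals of the product distribution $\mu_1 \times \mu_2$ are $\mu_1$ and $\mu_2$ themselves.

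Concretely, I would start from the definition $\DKL{\mu_1 \times \mu_2}{\nu_1 \times \nu_2} = \E_{(x_1,x_2) \sim \mu_1 \times \mu_2}\bigl[\log_2 \frac{(\mu_1\times\mu_2)(x_1,x_2)}{(\nu_1\times\nu_2)(x_1,x_2)}\bigr]$, and then apply the product-distribution identity $(\mu_i\times\mu_j)(x_1,x_2) = \mu_i(x_1)\mu_j(x_2)$ to both the numerator and the denominator. This rewrites the ratio as $\frac{\mu_1(x_1)}{\nu_1(x_1)} \cdot \frac{\mu_2(x_2)}{\nu_2(x_2)}$, and after taking the logarithm the expression inside the expectation becomes $\log_2\frac{\mu_1(x_1)}{\nu_1(x_1)} + \log_2\frac{\mu_2(x_2)}{\nu_2(x_2)}$.

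By linearity of expectation, the overall expectation splits into two summands, one depending only on $x_1$ and the other only on $x_2$. Since the marginal of $\mu_1 \times \mu_2$ onto the first coordinate is $\mu_1$ and onto the second coordinate is $\mu_2$, each summand collapses to the corresponding one-dimensional KL-divergence $\DKL{\mu_1}{\nu_1}$ and $\DKL{\mu_2}{\nu_2}$.

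The only subtlety worth flagging is how to handle coordinates on which $\nu_i$ assigns zero mass: by the standard convention, $\DKL{\mu_i}{\nu_i}$ is already $+\infty$ whenever $\mu_i(x) > 0 = \nu_i(x)$ for some $x$, and in that case $(\nu_1\times\nu_2)$ also vanishes on a set of positive $(\mu_1\times\mu_2)$-mass, so both sides are $+\infty$ and the identity holds trivially. Outside this edge case the manipulation above is purely algebraic, so there is no real obstacle; the lemma is essentially a one-line consequence of the product structure and the additivity of the logarithm.
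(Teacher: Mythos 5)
The paper states this lemma in its fact sheet as a well-known identity and gives no proof of its own, so there is nothing to compare against; your argument is the standard one and it is correct. The expansion of the product densities, the splitting of the log, the application of linearity of expectation, the identification of the marginals of $\mu_1 \times \mu_2$ with $\mu_1$ and $\mu_2$, and the handling of the $+\infty$ edge case are all exactly as they should be.
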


\begin{lemma} \label{lemma:cheatsheet:neg-ln-1+x-le-neg-x-plus-x-sqr-half}
    For $x \ge 0$, $-\ln(1+x) \le -x + x^2/2$.
\end{lemma}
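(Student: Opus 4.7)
The statement is equivalent to $\ln(1+x) \ge x - x^2/2$ for all $x \ge 0$. I would prove it by a direct derivative argument: set $f(x) = \ln(1+x) - x + x^2/2$, verify that $f(0) = 0$, and show $f'(x) \ge 0$ for all $x \ge 0$, so that $f$ is non-decreasing on $[0,\infty)$ and hence non-negative there.

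Concretely, differentiating gives
\[
    f'(x) = \frac{1}{1+x} - 1 + x = \frac{1 - (1+x) + x(1+x)}{1+x} = \frac{x^2}{1+x}.
\]
For $x \ge 0$ the numerator is non-negative and the denominator is strictly positive, so $f'(x) \ge 0$. Combined with $f(0) = 0$, the fundamental theorem of calculus gives $f(x) = \int_0^x f'(t)\,dt \ge 0$, which is exactly the claim.

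There is no real obstacle here; the only subtlety is noting why this improves on Lemma \ref{lemma:cheatsheet:neg-ln-1+x-le-neg-x-plus-x-sqr} (whose coefficient in front of $x^2$ is $1$ rather than $1/2$): the restriction to $x \ge 0$ makes $1/(1+x) \le 1$, which is what shrinks $f'(x) = x^2/(1+x)$ below $x^2$ and ultimately halves the quadratic coefficient. One could alternatively argue via the Taylor series $\ln(1+x) = x - x^2/2 + x^3/3 - \cdots$, but that only works cleanly for $|x| < 1$ and requires an extra grouping argument; the one-line derivative computation above handles all of $[0,\infty)$ uniformly and is the cleanest route.
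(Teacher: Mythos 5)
Your proof is correct. The paper does not actually prove this lemma; it appears in the ``Fact sheet'' (Section 2.3), which lists well-known inequalities without proof, so there is no paper argument to compare against. Your derivative computation $f'(x) = \frac{1}{1+x} - 1 + x = \frac{x^2}{1+x} \ge 0$ together with $f(0) = 0$ is the standard, clean way to establish the bound, and your side remark about why the $x \ge 0$ restriction halves the quadratic coefficient relative to Lemma \ref{lemma:cheatsheet:neg-ln-1+x-le-neg-x-plus-x-sqr} is accurate.
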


\begin{lemma} \label{lemma:cheatsheet:dkl-bin-m-p-m-q}
    $\DKL{\Bin(m,p)}{\Bin(m,q)} = m \cdot \dkl(p,q)$.
\end{lemma}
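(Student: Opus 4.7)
The plan is to compute $\DKL{\Bin(m,p)}{\Bin(m,q)}$ directly from the definition, exploiting the fact that the ratio of binomial PMFs has a particularly clean form. Writing the PMF of $\Bin(m,p)$ at $k \in \{0, 1, \ldots, m\}$ as $\binom{m}{k} p^k (1-p)^{m-k}$, the binomial coefficients appear identically in the numerator and denominator inside the logarithm and therefore cancel, leaving only $k \log(p/q) + (m-k) \log((1-p)/(1-q))$. After this cancellation, the expectation over $K \sim \Bin(m,p)$ factors into two easy expectations because the integrand is linear in $k$.

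The key step, then, is to observe that by linearity of expectation,
\[
    \DKL{\Bin(m,p)}{\Bin(m,q)} = \E_{K \sim \Bin(m,p)}[K] \cdot \log\frac{p}{q} + \E_{K \sim \Bin(m,p)}[m - K] \cdot \log\frac{1-p}{1-q}.
\]
Substituting the standard binomial mean $\E[K] = mp$ (and therefore $\E[m-K] = m(1-p)$) gives
\[
    m \left( p \log\frac{p}{q} + (1-p) \log\frac{1-p}{1-q} \right) = m \cdot \dkl(p,q),
\]
which is exactly the claimed identity.

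I don't expect any real obstacle here: the only subtle point is bookkeeping the boundary cases where $p \in \{0,1\}$ or $q \in \{0,1\}$, which are handled by the usual KL conventions ($0 \log 0 = 0$ and $\dkl(p,q) = \infty$ when $q \in \{0,1\}$ but $p \notin \{0,1\}$), and these are inherited automatically by the direct-computation approach. As a sanity check, one can also derive the identity slightly more conceptually from Lemma~\ref{lemma:cheatsheet:dkl-of-product}: iterating that lemma yields $\DKL{\Ber(p)^m}{\Ber(q)^m} = m \cdot \dkl(p,q)$, and since the sum $S = \sum_i X_i$ is a sufficient statistic (the conditional distribution of $(X_1,\ldots,X_m)$ given $S$ is uniform over strings of that Hamming weight, independently of the parameter), the chain rule for KL divergence forces $\DKL{\Bin(m,p)}{\Bin(m,q)} = \DKL{\Ber(p)^m}{\Ber(q)^m}$. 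The direct approach is shorter, so I would present that.
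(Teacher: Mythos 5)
Your direct computation is correct and is the standard proof of this identity. The paper lists Lemma~\ref{lemma:cheatsheet:dkl-bin-m-p-m-q} in its ``fact sheet'' of well-known results and does not include a proof, so there is nothing to compare against; the cancellation of the binomial coefficients inside the logarithm, followed by linearity of expectation and $\E[K]=mp$, is exactly the argument one would want, and the edge cases for $p,q \in \{0,1\}$ behave as you describe. The alternative route via Lemma~\ref{lemma:cheatsheet:dkl-of-product} plus the observation that the sum $S$ is a sufficient statistic (so the chain rule for $\dkl$ yields equality rather than merely the $\le$ from the information processing inequality) is also sound, and it explains \emph{why} the factor is exactly $m$ rather than merely at most $m$; but for this lemma the direct computation is the cleaner presentation, as you note.
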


\begin{lemma}[see \cite{baynets17}] \label{lemma:cheatsheet:dkl-quadratic-bound}
    For $p,q \in [0,1]$, $\dkl(p,q) \le \frac{(p-q)^2}{q(1-q)}$.
\end{lemma}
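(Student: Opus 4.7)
The plan is to derive the bound from the elementary logarithmic inequality $\ln(1+x) \le x$ for $x > -1$, which is exactly the content of Lemma \ref{lemma:cheatsheet:neg-ln-1+x-ge-neg-x} (rewritten as an upper bound on $\ln(1+x)$ rather than a lower bound on $-\ln(1+x)$). First I would dispose of the degenerate cases: if $q \in \{0,1\}$ then the right-hand side is $+\infty$ whenever $p \ne q$, and both sides vanish when $p = q$, so the inequality holds trivially. Henceforth assume $q \in (0,1)$, and extend by continuity to handle $p \in \{0,1\}$ using the convention $0 \log 0 = 0$.

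The key observation is that $p/q = 1 + (p-q)/q$ and $(1-p)/(1-q) = 1 - (p-q)/(1-q)$, where both ``$1 + (\cdot)$'' arguments strictly exceed $-1$ since $p \in [0,1]$ and $q \in (0,1)$. Applying $\ln(1+x) \le x$ to each of the two terms in $\dkl(p,q) = p \log(p/q) + (1-p)\log((1-p)/(1-q))$ then yields
\[
\dkl(p,q) \;\le\; p \cdot \frac{p-q}{q} \;-\; (1-p) \cdot \frac{p-q}{1-q} \;=\; (p-q) \cdot \frac{p(1-q) - q(1-p)}{q(1-q)} \;=\; \frac{(p-q)^2}{q(1-q)},
\]
where the final equality uses the identity $p(1-q) - q(1-p) = p - q$.

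There is essentially no technical obstacle beyond this one-line algebraic manipulation; the main substantive input is Lemma \ref{lemma:cheatsheet:neg-ln-1+x-ge-neg-x}. The only minor subtlety is the choice of logarithm base: the paper's definition of $\dkl$ uses $\log_2$ while Lemma \ref{lemma:cheatsheet:neg-ln-1+x-ge-neg-x} is stated for $\ln$, so in a fully literal reading the constants would differ by a factor of $1/\ln 2$. For the clean upper bound exactly as stated, one may either fix the convention consistently throughout or invoke the cited reference \cite{baynets17} directly.
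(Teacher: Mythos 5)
The paper does not prove this lemma itself: it appears in the fact sheet and is attributed to \cite{baynets17}, so there is no in-paper argument to compare against. Your derivation is the standard one — apply $\ln(1+x)\le x$ to both logarithmic terms and collapse the result with the identity $p(1-q)-q(1-p)=p-q$ — and the algebra, as well as the handling of the degenerate endpoints $p,q\in\{0,1\}$, is correct.

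The base caveat you raise at the end is, however, more serious than a ``factor of $1/\ln 2$.'' What you have actually proved is $p\ln\frac{p}{q}+(1-p)\ln\frac{1-p}{1-q}\le \frac{(p-q)^2}{q(1-q)}$, i.e.\ the inequality for KL measured in nats. Under the paper's stated convention $\DKL{\mu}{\tau}=\E_{x\sim\mu}[\log_2(\mu(x)/\tau(x))]$, the displayed inequality is not merely off by a constant; it is false. For instance $p=0$, $q=0.4$ gives $\dkl(0,0.4)=\log_2(1/0.6)\approx 0.737$, while $\frac{(p-q)^2}{q(1-q)}=\frac{0.16}{0.24}\approx 0.667$. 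So one cannot ``fix the convention'' and keep the bound as written — the inequality genuinely requires natural logarithms (or an extra factor of $\ln 2$ on the right). The cited reference uses nats, and the paper's lemma should be read in that normalization; your write-up should say this explicitly rather than deferring it as a minor bookkeeping detail, since the statement you are asked to prove is not true for $\log_2$.
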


\begin{lemma}[Chain rule for $\dkl$] \label{lemma:cheatsheet:chain-rule-dkl}
    Let $\mu$ and $\tau$ be two distributions over $A \times B$. In this setting, $\DKL{\mu}{\tau} = \DKL{\mu|_A}{\tau|_A}
    + \E_{a \sim \mu|_A}\left[\DKL{\mu|^{\{a\} \times B}}{\tau|^{\{a\} \times B}} \right]$.
\end{lemma}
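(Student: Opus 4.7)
The plan is to prove the identity by unfolding the definition of KL-divergence and regrouping terms according to the marginal--conditional factorization. Starting from
$$\DKL{\mu}{\tau} = \sum_{(a,b) \in A \times B} \mu(a,b) \log \frac{\mu(a,b)}{\tau(a,b)}$$
(using the convention $0 \log 0 = 0$), I first dispose of the degenerate case: if $\mu$ is not absolutely continuous with respect to $\tau$ then both sides of the identity are $+\infty$, so it suffices to handle the absolutely continuous case.

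The key manipulation is to factor each joint mass as $\mu(a,b) = \mu|_A(a) \cdot \mu|^{\{a\} \times B}(a,b)$ (and analogously for $\tau$) whenever $\mu|_A(a) > 0$. This lets me split the logarithm additively,
$$\log \frac{\mu(a,b)}{\tau(a,b)} = \log \frac{\mu|_A(a)}{\tau|_A(a)} + \log \frac{\mu|^{\{a\} \times B}(a,b)}{\tau|^{\{a\} \times B}(a,b)},$$
so the defining sum splits into two pieces accordingly.

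For the first piece, summing over $b$ first collapses $\sum_b \mu(a,b) = \mu|_A(a)$, producing exactly $\DKL{\mu|_A}{\tau|_A}$. For the second piece, I group the double sum by $a$: the inner sum over $b$, weighted by $\mu|^{\{a\} \times B}(a,b)$, is precisely $\DKL{\mu|^{\{a\} \times B}}{\tau|^{\{a\} \times B}}$, and the outer weighting by $\mu|_A(a)$ then gives the expectation $\E_{a \sim \mu|_A}\!\left[\DKL{\mu|^{\{a\} \times B}}{\tau|^{\{a\} \times B}}\right]$.

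There is essentially no substantive obstacle here; the only care needed is at the boundary, where $\mu|_A(a) = 0$ (the corresponding contribution vanishes by the $0 \log 0 = 0$ convention, and the conditional $\mu|^{\{a\} \times B}$ is not used) or where $\tau|_A(a) = 0$ while $\mu|_A(a) > 0$ (in which case the marginal KL-term already equals $+\infty$, so both sides of the identity agree).
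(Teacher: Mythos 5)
Your proof is correct and is the standard derivation of the chain rule for KL-divergence: factor the joint mass into marginal times conditional, split the logarithm, and regroup the double sum. The paper itself states this lemma in its ``Fact sheet'' of well-known identities without providing a proof, so there is no paper argument to compare against; your treatment of the degenerate cases (where $\mu|_A(a) = 0$ or $\tau|_A(a) = 0$) is also sound.
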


\begin{lemma}[Information processing inequality] \label{lemma:cheatsheet:information-processing-inequality}
    Let $\mu$ and $\tau$ be two distribution over $\Omega$ and let $f : \Omega \to \Omega'$ be a function. Let $f(\mu)$ be the distribution in which $\Pr[y] = \Pr_\mu\left[f^{-1}(y)\right]$ and $f(\tau)$ be the distribution in which $\Pr[y] = \Pr_\tau\left[f^{-1}(y)\right]$. In this setting, $\DKL{f(\mu)}{f(\tau)} \le \DKL{\mu}{\tau}$. This also holds if $f$ is a probabilistic function (that is, for every $x \in \Omega$, $f(x)$ is a distribution over $\Omega'$ and $\Pr[y] = E_{x\sim \mu}[\Pr_{f(x)}[y]]$).
\end{lemma}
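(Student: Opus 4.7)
The plan is to apply the chain rule (Lemma \ref{lemma:cheatsheet:chain-rule-dkl}) to a joint distribution over $\Omega \times \Omega'$. Define $\mu'$ as the distribution obtained by first drawing $x \sim \mu$ and then outputting $(x,y)$ where $y$ is drawn from $f(x)$ (interpreting $f(x)$ as a point mass in the deterministic case and as a distribution over $\Omega'$ in the probabilistic case). Let $\tau'$ be defined analogously from $\tau$. By construction, the marginal of $\mu'$ on the first coordinate is $\mu$ and on the second coordinate is $f(\mu)$; similarly for $\tau'$ with $\tau$ and $f(\tau)$.

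First I would apply the chain rule with $\Omega$ playing the role of the first factor:
\[
\DKL{\mu'}{\tau'} = \DKL{\mu}{\tau} + \E_{x \sim \mu}\left[\DKL{\mu'|^{\{x\} \times \Omega'}}{\tau'|^{\{x\} \times \Omega'}}\right].
\]
The crucial observation is that the conditional distribution on $\{x\} \times \Omega'$ is just $f(x)$ in \emph{both} $\mu'$ and $\tau'$, because $f$'s output on a fixed $x$ depends only on the specification of $f$ and not on whether $x$ was drawn from $\mu$ or from $\tau$. Hence each inner divergence equals zero, and $\DKL{\mu'}{\tau'} = \DKL{\mu}{\tau}$. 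Next, I would swap the order of factors and apply the chain rule with $\Omega'$ as the first factor, obtaining
\[
\DKL{\mu'}{\tau'} = \DKL{f(\mu)}{f(\tau)} + \E_{y \sim f(\mu)}\left[\DKL{\mu'|^{\Omega \times \{y\}}}{\tau'|^{\Omega \times \{y\}}}\right] \ge \DKL{f(\mu)}{f(\tau)},
\]
where the inequality uses non-negativity of the KL divergence (which follows from Pinsker's inequality, Lemma \ref{lemma:cheatsheet:pinsker}). Combining the two displays gives the desired $\DKL{f(\mu)}{f(\tau)} \le \DKL{\mu}{\tau}$.

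The main subtlety, and the only place where the probabilistic-function case differs from the deterministic case, is the argument that the conditional of the $\Omega'$-coordinate given the $\Omega$-coordinate is identical under $\mu'$ and $\tau'$; once this is noted, the deterministic case is subsumed (since a point mass is a special case of a distribution). Everything else is a routine double application of the chain rule together with non-negativity of the divergence, so I do not anticipate any further obstacle.
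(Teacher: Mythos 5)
The paper states this lemma in its ``fact sheet'' of well-known results without providing a proof, so there is no in-paper argument to compare against. Your proposal is a correct and standard proof of the data-processing inequality: you form the joint distribution, apply the chain rule (Lemma \ref{lemma:cheatsheet:chain-rule-dkl}) in one direction to show the joint divergence equals $\DKL{\mu}{\tau}$ (the inner conditionals coincide because the channel $f$ does not depend on which of $\mu,\tau$ generated $x$), then apply the chain rule in the other direction and drop the non-negative conditional term to get $\DKL{f(\mu)}{f(\tau)} \le \DKL{\mu'}{\tau'}$. The only things worth making explicit are (i) that the chain rule as stated in the paper is for the factor $A$ in $A\times B$, so the ``swapped'' application is really an application to the relabeled product $\Omega' \times \Omega$ (with KL invariant under this bijection), and (ii) the degenerate case where $\tau(x)=0$ for some $x$ with $\mu(x)>0$, where $\DKL{\mu}{\tau}=\infty$ and the claim is vacuous; neither is a real gap. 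Also, non-negativity of KL is usually cited directly (Gibbs' inequality / Jensen) rather than via Pinsker, but deriving it from Pinsker as you do is perfectly valid given the paper's fact sheet.
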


\newpage
\section{External calculations}

\begin{lemma}[Normal distribution lookup] \label{lemma:lookup:Phi(-2sqrt(3))}
    $\Phi(-2\sqrt{3}) \le e^{-8.23}$.
\end{lemma}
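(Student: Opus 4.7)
The plan is to combine the classical Mill's ratio bound with a sharper refinement, since this is purely a numerical tail estimate for the standard Gaussian. The starting point is the standard inequality
\[ \Phi(-t) \le \frac{\phi(t)}{t} = \frac{e^{-t^2/2}}{t\sqrt{2\pi}} \qquad (t > 0), \]
which at $t = 2\sqrt{3}$ gives $\Phi(-2\sqrt{3}) \le \frac{e^{-6}}{2\sqrt{6\pi}}$. A direct evaluation of $-6 - \tfrac{1}{2}\ln(24\pi) = -6 - \ln(2\sqrt{6\pi})$ yields approximately $-8.16$, so this bound alone gives $\Phi(-2\sqrt{3}) \le e^{-8.16}$, which is close to but not tight enough for the claimed $e^{-8.23}$.

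To close the remaining gap I would use the next even truncation of the alternating asymptotic expansion of the Mill's ratio, which for every nonnegative integer $j$ yields the valid upper bound
\[ \Phi(-t) \le \phi(t) \sum_{k=0}^{2j} (-1)^k \frac{(2k-1)!!}{t^{2k+1}}. \]
At $t^2 = 12$ the bracket $t \cdot \sum_{k=0}^{2j} (-1)^k (2k-1)!!/t^{2k+1}$ equals $1 - \frac{1}{12} + \frac{3}{144} - \frac{15}{1728} + \cdots$, and taking $2j = 6$ terms gives a bracket of approximately $0.9336$. Multiplying by $\phi(t)/t \approx e^{-8.160}$ then yields $\Phi(-2\sqrt{3}) \le 0.9336 \cdot e^{-8.160} \le e^{-8.160}\cdot e^{-0.0687} = e^{-8.23}$, which is the desired conclusion.

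The only subtle step is justifying that the chosen truncation is indeed an upper bound: this follows from the general fact that for the Mill's ratio the asymptotic series is enveloping, so consecutive partial sums alternate as upper and lower bounds provided the terms are still decreasing in absolute value, i.e.\ $(2k+1) < t^2$; at $t^2 = 12$ this remains true through $k = 5$, covering the truncation we need. The main obstacle is simply the numerical tightness — the bound $e^{-8.23}$ is genuinely close to the true value of $\Phi(-2\sqrt{3})$, so the Mill's inequality used in its crudest form is insufficient and one really does have to bring in a higher-order correction (or equivalently, invoke a direct high-precision numerical computation of $\tfrac{1}{2}\,\mathrm{erfc}(\sqrt{6})$) to verify the claim with certainty.
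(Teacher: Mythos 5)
The paper's own proof is a pure numerical lookup: it cites a WolframAlpha evaluation giving $\ln \Phi(-2\sqrt 3) \approx -8.232$, from which the bound follows. You instead give an analytical argument via the enveloping asymptotic expansion of the Mill's ratio. That is a genuinely different and more self-contained route. The enveloping claim you rely on is in fact correct at every truncation order and for every $t>0$, not only while the terms keep decreasing: repeated integration by parts shows that the remainder after the partial sum $\sum_{k=0}^{m-1}$ equals $(-1)^m(2m-1)!!\int_t^\infty \phi(x)x^{-2m}\,dx$, a quantity of definite sign bounded in magnitude by $(2m-1)!!\,\phi(t)/t^{2m+1}$, so the ``decreasing terms'' caveat you add is unnecessary (though harmless). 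In particular truncating at $k=6$ does give a valid upper bound.

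The issue is the final arithmetic as you wrote it: you round $\ln\bigl(\phi(t)/t\bigr) = -6 - \ln 2 - \tfrac12\ln(6\pi) \approx -8.1614$ down to $-8.160$, and then write $e^{-8.160}\cdot e^{-0.0687} = e^{-8.23}$. But $8.160 + 0.0687 = 8.2287 < 8.23$, so this chain only yields $\Phi(-2\sqrt 3)\le e^{-8.2287}$, which is \emph{weaker} than the claimed $e^{-8.23}$, not stronger. The conclusion is recoverable: keeping four significant digits gives $\ln\bigl(\phi(t)/t\bigr) \le -8.1614$ and the bracket $\approx 0.93357 \le e^{-0.0687}$, hence $\Phi(-2\sqrt3) \le e^{-8.1614-0.0687} = e^{-8.2301} < e^{-8.23}$. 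The margin is only about $10^{-4}$ in the exponent, so one must carry at least four decimal places in the intermediate logarithm rather than three. Given how razor-thin the slack is, this tightness is precisely why the paper defers to a high-precision numerical evaluation; your approach is valid and more principled, but it only succeeds if the intermediate numerics are carried with enough precision, and the version you wrote does not yet do so.
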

\begin{proof}
    WolframAlpha query: \\ln ((1/sqrt(2*pi)) * (int from (-infty) to (-2*sqrt(3)) of (exp(-x\^{}2/2)dx)))\\Answer ``$\approx -8.232$''.
    
    \url{https://www.wolframalpha.com/input?i=ln+%28%281%2Fsqrt%282*pi%29%29+*+%28int+from+%28-infty%29+to+%28-2*sqrt%283%29%29+of+%28exp%28-x%5E2%2F2%29dx%29%29%29}
\end{proof}

\begin{lemma}[Normal distribution lookup] \label{lemma:lookup:Phi(-4.9)}
    $\Phi(-4.9) \ge e^{-14.56}$
\end{lemma}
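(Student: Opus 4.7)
The plan is to mirror the approach used for Lemma \ref{lemma:lookup:Phi(-2sqrt(3))}: this is a purely numerical statement about the standard normal CDF, so I would first record a direct WolframAlpha query of $\ln\bigl((1/\sqrt{2\pi})\int_{-\infty}^{-4.9} e^{-x^2/2}\,dx\bigr)$ and verify that the returned value is at least $-14.56$. That suffices for the lemma as stated, in the same style as the preceding lookup.

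For an audience wanting an analytic check, I would supplement this with the standard Mills-ratio lower bound for the Gaussian tail: for every $x > 0$,
\[
    \Phi(-x) \;\ge\; \frac{x}{x^2+1} \cdot \frac{1}{\sqrt{2\pi}}\, e^{-x^2/2}.
\]
Taking logarithms and substituting $x = 4.9$ yields
\[
    \ln \Phi(-4.9) \;\ge\; \ln\!\Bigl(\tfrac{4.9}{25.01}\Bigr) - \tfrac{1}{2}\ln(2\pi) - \tfrac{(4.9)^2}{2}.
\]
A short arithmetic check gives $\ln(4.9/25.01) > -1.63$, $\tfrac{1}{2}\ln(2\pi) < 0.92$, and $(4.9)^2/2 = 12.005$, so the right-hand side exceeds $-1.63 - 0.92 - 12.005 > -14.56$.

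There is no real obstacle: the only care needed is choosing a tail bound that is tight enough, since cruder bounds like $\Phi(-x) \le \phi(x)/x$ go the wrong way and the bound $\Phi(-x) \ge (1 - 1/x^2)\phi(x)/x$ is also sharp enough here; either the Mills inequality or the direct WolframAlpha lookup suffices, and I would present both for robustness.
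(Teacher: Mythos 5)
Your first part (the WolframAlpha lookup) is exactly what the paper does, and your second part (the Mills-ratio tail bound) is a genuinely different, self-contained analytic route that the paper does not attempt. The analytic check is a real improvement in rigor: it replaces reliance on a black-box CAS answer with a hand-verifiable inequality. The bound $\Phi(-x) \ge \frac{x}{x^2+1}\,\phi(x)$ is correct and sharp enough here.

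One small arithmetic slip to fix, though. You claim $\ln(4.9/25.01) > -1.63$, but in fact $4.9/25.01 \approx 0.1959217$ while $e^{-1.63} \approx 0.1959303$, so $\ln(4.9/25.01) \approx -1.6300$ is slightly \emph{below} $-1.63$. The conclusion is nevertheless safe: you have unused slack in the other term since $\frac{1}{2}\ln(2\pi) \approx 0.9189 < 0.92$. Replacing your bound with the safe $\ln(4.9/25.01) > -1.631$ gives
\[
\ln\Phi(-4.9) \;\ge\; -1.631 - 0.92 - 12.005 \;=\; -14.556 \;>\; -14.56,
\]
so the lemma still follows. I would simply adjust that one intermediate constant before presenting the analytic version.
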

\begin{proof}
    WolframAlpha query: \\ln ((1/sqrt(2*pi)) * (int from (-infty) to (-4.9) of (exp(-x\^{}2/2)dx)))\\Answer: ``$\approx -14.5512$''.

    \url{https://www.wolframalpha.com/input?i=ln+%28%281%2Fsqrt%282*pi%29%29+*+%28int+from+%28-infty%29+to+%28-4.9%29+of+%28exp%28-x%CB%862%2F2%29dx%29%29%29}
\end{proof}

\begin{lemma}[Inequality] \label{lemma:wolfram:(1-64/n)^(n/2)exp(16sqrt(6)/(1+8/sqrt(n)))>exp(7.19)}
    For $n \ge 3 \cdot 10^{10}$, $(1 - \paperZcoefZrZsquared / n)^{\frac{1}{2}n} \cdot e^{\frac{16\sqrt{6}}{1 + 8/\sqrt{n}}} > e^{7.19}$.
\end{lemma}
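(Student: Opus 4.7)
The plan is to take natural logarithms and instead establish the equivalent inequality
\[
    \frac{n}{2}\ln\!\left(1 - \frac{64}{n}\right) + \frac{16\sqrt{6}}{1 + 8/\sqrt{n}} > 7.19.
\]
The key observation is that as $n \to \infty$, the left-hand side tends to $-32 + 16\sqrt{6} \approx 7.191836$ from below, and we simply need the two error terms to be small enough.

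First I would bound the logarithm term using the Taylor estimate $\ln(1-x) \ge -x - x^2$, which is valid for $0 \le x \le 3/5$ (easily checked from the series $\ln(1-x)+x+x^2 = x^2/2 - x^3/(3(1-x)) + \cdots$). With $x = 64/n$, which is well below $3/5$ for $n \ge 3\cdot 10^{10}$, this yields
\[
    \frac{n}{2}\ln\!\left(1 - \frac{64}{n}\right) \;\ge\; -32 - \frac{2048}{n}.
\]
Next I would bound the second term using $1/(1+y) \ge 1-y$ for $y \ge 0$, applied to $y = 8/\sqrt{n}$:
\[
    \frac{16\sqrt{6}}{1 + 8/\sqrt{n}} \;\ge\; 16\sqrt{6} - \frac{128\sqrt{6}}{\sqrt{n}}.
\]

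Summing these, it suffices to show that $16\sqrt{6} - 32 - \frac{2048}{n} - \frac{128\sqrt{6}}{\sqrt{n}} > 7.19$. Since $16\sqrt{6} > 39.19183$, the target reduces to verifying $\frac{2048}{n} + \frac{128\sqrt{6}}{\sqrt{n}} < 0.00183$ for $n \ge 3 \cdot 10^{10}$. The first summand is at most $2048/(3 \cdot 10^{10}) < 10^{-7}$, while the second satisfies $\frac{128\sqrt{6}}{\sqrt{n}} = 128\sqrt{6/n} \le 128\sqrt{2\cdot 10^{-10}} = 128\sqrt{2}\cdot 10^{-5} < 0.00182$, and the sum is therefore below $0.00183$ as required.

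The main (and only) obstacle is the numerical tightness: the budget $16\sqrt{6} - 32 - 7.19 \approx 0.001836$ is only marginally larger than the dominant error $128\sqrt{6}/\sqrt{n} \approx 0.00181$ at the threshold $n = 3\cdot 10^{10}$. One must therefore use a sharp enough numerical value of $\sqrt{6}$ (e.g.\ $\sqrt{6} \ge 2.4494$ and $\sqrt{6} \le 2.4495$) when performing the final comparison, but no further analytic work is needed.
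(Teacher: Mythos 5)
Your proof is correct, and it takes a genuinely different route from the paper's. The paper disposes of this lemma by an external calculation: a WolframAlpha query that returns the exact threshold $n > 2.91646 \times 10^{10}$. You instead give a self-contained analytic proof, using only the elementary bounds $\ln(1-x) \ge -x - x^2$ for $0 \le x \le 3/5$ (which you correctly justify via the remainder $x^2/2 - x^3/(3(1-x))$) and $1/(1+y) \ge 1-y$ for $y \ge 0$, followed by a careful numerical comparison. The arithmetic checks out: the budget is $16\sqrt{6} - 32 - 7.19 \approx 0.001836$, the dominant error $128\sqrt{6}/\sqrt{n} = 128\sqrt{6/n} \le 128\sqrt{2}\cdot 10^{-5} \approx 0.001810 < 0.00182$ at $n = 3\cdot 10^{10}$, and $2048/n < 10^{-7}$, so the sum stays below $0.00183 < 16\sqrt{6} - 32 - 7.19$; the inequality survives with roughly a $1.4\%$ margin. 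What your argument buys is tool-independent verifiability; what the paper's buys is brevity and the exact breakeven point. Both are valid, and yours is arguably the more appropriate for an archival proof, though it is worth flagging that the numbers are tight enough that one must (as you note) use at least five correct digits of $\sqrt{6}$ and $\sqrt{2}$.
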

\begin{proof}
    WolframAlpha query:\\
    solve (1 - 64/x)\^{}(x/2) * exp(16*sqrt(6)/(1 + 8/sqrt(x))) \textgreater~ exp(7.19) \\
    Answer: $x > 2.91646 \times 10^{10}$.

    \url{https://www.wolframalpha.com/input?i=solve+%281+-+64%2Fx%29%5E%28x%2F2%29+*+exp%2816*sqrt%286%29%2F%281+%2B+8%2Fsqrt%28x%29%29%29+%3E+exp%287.19%29}
\end{proof}

\newpage
\addcontentsline{toc}{section}{References}
\bibliographystyle{alpha}
\bibliography{main}

\end{document}